\newcommand{\n}{\normalsize}
\newcommand{\s}{\scriptsize}
\newcommand{\ubar}[1]{\underaccent{\bar}{#1}}
\newcommand{\dr}{\rotatebox[origin=c]{180}{$\Lsh$}}
\DeclareDocumentCommand{\publicBelief}{O{\mu}}{#1}
\DeclareDocumentCommand{\privateBelief}{O{p}}{#1}
\DeclareDocumentCommand{\signalupdate}{O{q}}{#1}
\DeclareDocumentCommand{\signal}{O{s}}{#1}
\DeclareDocumentCommand{\signalsSet}{O{S}}{#1}
\DeclareDocumentCommand{\state}{O{\omega}}{#1}
\DeclareDocumentCommand{\statesSet}{O{\Omega}}{#1}
\DeclareDocumentCommand{\price}{O{\tau}}{#1}
\DeclareDocumentCommand{\action}{O{a}}{#1}
\DeclareDocumentCommand{\limitParam}{O{\alpha}}{#1}
\DeclareDocumentCommand{\deterrencePrice}{O{\price} O{d}}{#1^#2}
\DeclareDocumentCommand{\LLR}{O{x}}{\log(\frac{#1}{1-#1})}
\DeclareDocumentCommand{\lBound}{O{\limitParam} O{\publicBelief}}{\ubar{#1}_{#2}}
\DeclareDocumentCommand{\uBound}{O{\limitParam} O{\publicBelief}}{\bar{#1}_{#2}}
\DeclareDocumentCommand{\eqPrice}{O{\price}}{#1^{*}}
\newtheorem{lemma}{Lemma}
\newtheorem{proposition}{Proposition}
\newtheorem{theorem}{Theorem}
\newtheorem{conjecture}[theorem]{Conjecture}
\newtheorem{corollary}{Corollary}
\theoremstyle{definition}
\newlist{secenum}{enumerate}{10}
\setlist[secenum]{label=\thesection.\arabic*,leftmargin=*}
\providecommand{\customgenericname}{}
\newcommand{\newcustomtheorem}[2]{%
	\newenvironment{#1}[1]
	{%
		\renewcommand\customgenericname{#2}%
		\renewcommand\theinnercustomgeneric{##1}%
		\innercustomgeneric
	}
	{\endinnercustomgeneric}
}
\newcommand{\blocktheorem}[1]{%
	\csletcs{old#1}{#1}
	\csletcs{endold#1}{end#1}
	\RenewDocumentEnvironment{#1}{o}
	{\par\addvspace{1.5ex}
		\noindent\begin{minipage}{\textwidth}
			\IfNoValueTF{##1}
			{\csuse{old#1}}
			{\csuse{old#1}[##1]}}
		{\csuse{endold#1}
		\end{minipage}
		\par\addvspace{1.5ex}}
}
\definecolor{ao}{rgb}{0.0, 0.5, 0.0}
\title{Mutual Insurance and Delegation in Sequential Fundraising}
\author[1]{Amir Ban\thanks{amirban@me.com and mkoren@cmsa.fas.harvard.edu }}
\author[2]{Moran Koren\thanks{Research supported by the Fulbright Postdoctoral Fellowship 2019/2020}}
\affil[1]{Weizmann Institute of Science}
\affil[2]{Harvard University}
\renewcommand\footnotemark{}
\date{\today}
\begin{document}

\maketitle
\begin{abstract}
Seed fundraising for ventures often takes place by sequentially approaching potential contributors, who make observable decisions. The fundraising succeeds when a target number of investments is reached. Though resembling classic information cascades models, its behavior is radically different, exhibiting surprising complexities. Assuming a common distribution for contributors' levels of information, we show that participants rely on {\em mutual insurance}, i.e., invest despite unfavorable information, trusting future player strategies to protect them from loss. {\em Delegation} occurs when contributors invest unconditionally, empowering the decision to future players. Often, all early contributors delegate, in effect empowering the last few contributors to decide the outcome. Similar dynamics hold in sequential voting, as in voting in committees.
\end{abstract}

\newpage

\section{Introduction}

Mom and Dad are asked to approve of Son's latest bright idea, whose wisdom is uncertain to both. Son will go ahead only if both parents approve. If the idea turns out well, each parent gains an amount $W$, while if it is bad, each loses an equal amount. If they do not approve, no one gains or loses.

The parents are fully rational, and each has an independent, private signal of the idea's wisdom, which is either ``good'' or ``bad'', and is called the {\em state of the world}. The {\em quality} of each signal is, for each parent, the probability that the signal matches the state of the world. The signal quality ranges from being uninformative about the state of the world, to deterministically revealing it. Each parent's quality is independently drawn from a known quality distribution, and is privately known by the respective parent.\footnote{The problem is unchanged if the privately-known qualities are merely beliefs. This is because the players are rational and do not entertain conscious biases.}

Suppose Dad is asked to make a decision first. His decision is observed by Mom, who decides last. What is the parents' equilibrium strategy, given a common prior for the state of the world?

This question appears like an elementary exercise in social choice, or in information cascades, or in crowdfunding, but it hides complexity and elements that are novel and absent from the aforementioned fields. Chief among them is {\em mutual insurance}, wherein an early agent (Dad, in this case) relaxes his criteria for participation relying on future agents' (Mom, in this case) strategies to protect him against loss. Thus, in the classical information cascades model of \cite{Banerjee1992} and of \cite{bikhchandani1992theory}, Dad would approve strictly based on having a private posterior probability for a good state of the world of at least $50\%$, as Mom's future decision is irrelevant for his. In our problem, Dad would participate even if his posterior is below  $50\%$. This is because even if, {\em prima facie}, he expects loss, he expects Mom's future decision to weed out more scenarios where he loses than where he gains. Therefore, his total expectation from participation is positive.

Another element hidden in our small problem we call {\em delegation}. This occurs when a player, usually the first (Dad), assumes so much mutual insurance that his optimal strategy is to participate for any signal or quality, making his decision  uninformative. This is not the same as the up-cascade feature of information cascades, where the prior is so favorable that all players participate regardless of type (although up- and down-cascades are also features of our problem). Rather, the prior is not favorable enough for both parents to approve automatically, but is suitable for Dad to delegate the decision to Mom, who, in this case, has to make a solo decision, as Dad's action carries no information.

Our Mom and Dad problem is the simplest case of a more general sequential decision problem, with $n$ players, the approval (now also called {\em investment}) of at least $B$ of them needed for a project to take place, with those who approved subject to gain or loss, depending on the project's outcome. Each player gets an independent good/bad signal on the project's prospects, with varying quality, independently drawn from a known common distribution, and privately known (or believed) by the player. Players act sequentially, with their actions, but not their signal or quality, publicly observed.\footnote{In a fundraising, information flows through the fundraiser (Son in our example). The importance contributors ascribe to information, and the balance of power between them and the fundraiser, motivates him to make it available to them.}

Our Mom and Dad problem represents $B = 2, n = 2$, while the classical model of  \cite{Banerjee1992} and \cite{bikhchandani1992theory} has $B \leq 1$. The case $B = n$ is a unanimity game, where a project takes place only if all players approve. The general case, $B \leq n$, captures sequential fundraising, where $n$ potential contributors are sequentially approached for a contribution, and the project takes place if at least $B$ of them contribute.

In sequential voting, participants vote sequentially ``aye'' or ``nay'' to a resolution, which passes if it receives a minimal number of ``aye'' votes. It occurs in committees, in legislatures, in reality TV shows, and elsewhere. Voters are motivated to pass ``good'' resolutions and reject ``bad'' resolutions, and have zero utility if the resolution does not pass. The payoff structure is different from sequential fundraising in that {\em all} participants gain or lose when the resolution is passed, unlike in fundraising, where only investors have a payoff. This distinction disappears in unanimity games since there are either no dissenters or no utility. That is, there is no difference between sequential fundraising and sequential voting when unanimity is required. Our analysis focuses on sequential fundraising, with the somewhat more complex analysis of sequential voting handled comparatively (Section \ref{voting}).

The novelty of the mutual insurance and delegation features validly raises the question of whether we have chosen the simplest possible mode to illustrate them. We believe so. In particular, in Section \ref{non-uniform} we explain why we do not adopt the simplifying assumption of uniform qualities successfully made by \cite{Banerjee1992} and \cite{bikhchandani1992theory}. Briefly, with our forward-looking agents, an implausibly-precise knowledge of future agents' information leads to behavior that, chaotically, varies with slight variations in prior probabilities, and so has no economic significance.

%

We adopt the usual terminology of information cascades, and Bayesian games: A player's {\em type} encompasses all her private information. The state of the world, marked $\omega$, is binary, with $1$ indicating ``good'' and $0$ ``bad''. The {\em likelihood} of the state of the world $\omega$ is the ratio $\frac{\Pr[\omega = 1]}{\Pr[\omega = 0]}$. {\em Herding} occurs when a player's optimal action does not depend on her type. A {\em cascade} occurs when the same herding behavior persists for all future players. An {\em up-cascade} is when the herding is to the $a=1$ action, while a {\em down-cascade} is when it is to the $a=0$ action. The type model has the {\em monotone likelihood ratio property (MLRP)} (see, e.g., \cite{Herrera2013}, \cite{Smith2012}), when inferences from a player's type on the state of the world are monotonically increasing with the type.

A player's signal is binary, $0$ signalling a bad state of the world, and $1$, a good state of the world. The quality of the signal is the {\em a priori} probability that the signal matches $\omega$, the state of the world. It ranges from $\frac{1}{2}$, when the signal is uninformative about $\omega$, to $1$, when it unambiguously reveals it. The signal quality is privately known and individual to a player, and is independently drawn from a known quality distribution whose support is in $[\frac{1}{2},1]$.

As we will show, we can roll together, for each player, the private quality of a signal in $[\frac{1}{2},1]$ and the private binary signal of $\omega$, into a new type in $[0, 1]$ that obeys MLRP. We will also demonstrate that the players' optimal behavior is crucially affected by the {\em range} of the quality distribution. We mark by $Q$ the maximal value in the quality distribution support, and by $R$ the minimal value. $Q = 1$ when there is non-zero probability for a type whose signal deterministically reveals the state of the world. This translates to an MLRP model with unbounded support, which, as usual in unbounded models (see \cite{Smith2012}), precludes herding (because a sufficiently-informed player's signal can overcome any prior, no matter how good or bad).

We assume $Q < 1$, which leads to an MLRP model with bounded support, making herding and cascades possible. This is because the existence of players who have perfect knowledge of the state of the world appears unrealistic to us. Most questions, and especially predictions (of, e.g., the weather, sporting events, or the success of ventures), have too many unknowns to be answered exactly. This is borne out by empirical studies (e.g., \cite{goel2010prediction}, \cite{ban2018all}) that indicate that predictions are limited in their accuracy, even if formed by aggregation of several opinions.

One result emerging from our analysis is that the minimal $R$ is also of significance. $R = \frac{1}{2}$ attaches non-zero probability to a type who is perfectly ignorant of the state of the world. While not impossible, it is often implausible that a player with skin in the game is perfectly ignorant. Setting $R > \frac{1}{2}$ sets bounds on that ignorance. This bound also means that every player can trust future players to have a minimally-informative signal. With uniformly-distributed qualities, we show that players delegate only when $R$ exceeds a bound, and never delegate for $R = \frac{1}{2}$, i.e., when future players may be perfectly ignorant. If the lower bound $R$ and the higher bound $Q$ are not too close, i.e., if signal qualities are far from being uniform, only early players delegate.


In this paper we develop a model for the general problem, and formulate equations that are satisfied in equilibrium, then demonstrate interesting economic and game-theoretical aspects of the solution. We seek a Markov Perfect Equilibrium (MPE) of our problem, as defined by \cite{maskin2001markov}, where strategies are memoryless and depend only on a payoff-relevant state.  We furthermore adopt a refinement that makes the equilibrium unique, and show that the optimal strategy is pure, unique, and a threshold one, in which a player invests only if her type is at least equal to a threshold, with indifference to action possible only at the threshold. We demonstrate the cascade limits: Like in the information-cascades setting, an up-cascade occurs at public likelihoods above $\frac{Q}{1 - Q}$. Unlike the information-cascades setting, a down-cascade occurs at public likelihoods below $\big(\frac{1 - Q}{Q}\big)^B.$ Therefore the game is {\em not} in a cascade for arbitrarily unfavorable likelihoods, whenever $B$ is large enough. There is  a limit on information aggregation in every sequential fundraising, which we show is public likelihood $\big(\frac{Q}{1-Q}\big)^2$.

In the general case, $B \leq n$, except at cascades, solving for equilibrium analytically is mostly beyond our reach. However, unanimity games $B = n$, which include the Mom and Dad ($B=n=2$) problem, are generally solvable. Interesting aspects of the equilibrium include

\begin{itemize}
\item Depending on the public likelihood, players often, but not always, play the same threshold strategy.  In a sense, they divide the ``burden'' of vindicating the project equally among themselves. We show how this is related to a simultaneous version of the game, and show that players in the sequential game can play unequal thresholds only when the simultaneous game has more than one equilibrium. We provide a necessary condition for two players to play different equilibrium strategies.
\item For some quality distributions, players never delegate. We give a criterion for this.
\item For some quality distributions that allow players to delegate, only the earliest players delegate. In a game with such a quality distribution, often all early players delegate. This serial delegation, a``reverse'' cascade of sorts, ends when only a few more investments are needed from the last few players. In effect, the decision is delegated to these last few players.
\end{itemize}

In addition, we show that, in unanimity games, players always assume mutual insurance, i.e., shade their threshold lower, relying on the strategies of future players. We conjecture that this is true for non-unanimity games ($B < n$) too.



\subsection{Related Work}\label{sec:rel_lit}

The information cascades literature originated from the canonical work of \cite{Banerjee1992} and \cite{bikhchandani1992theory}. This line of literature is devoted to the study of information aggregation. In this group of models, partly informed agents act sequentially and observe the actions taken prior to their arrival. Each agent must decide on an alternative of unknown quality, using the information embedded in the observed history, combined with her own private signal. \cite{Banerjee1992} and \cite{bikhchandani1992theory} found that when agent signals are binary, eventually agents will ignore their own information and herd on the previous action taken (i.e. herding will occur). At this point information aggregation halts as agent actions are no longer informative. Furthermore, there is a positive probability that the herd was formed over the ex-post inferior alternative.  In a follow-up work, \cite{Smith2012} studied a similar model, but with a general signal structure, and showed that learning can occur if and only if signals are unbounded in their quality. Subsequent work studied the robustness of these results on various incentive schemes \cite{Mueller-frank2012,Arieli2018,Moscarini2001a}, various observational structures \cite{Acemoglu2011,Herrera2013}, and even cases where agents are affected by congestion \cite{Eyster2013}. One common feature of these, and the vast majority of existing work, is that the utility of an agent depends solely on actions taken in the past and the information available in the present. We extend this to where agents' utility is affected by the action of future arrivals.

Two rare examples of the introduction of forward-looking agents into an information cascade framework are in \cite{Arieli2018a,Smith2017exp}.  In \cite{Arieli2018a}, the agents choose between strategically priced alternatives. However, even in \cite{Arieli2018a},  the farsighted agents are the firms, while the consumers who potentially herd are myopic. In \cite{Smith2017exp}, a social planner assigns actions to agents in order to maximize the discounted sum of agent utilities.  To the best of our knowledge, we are first to successfully characterize an equilibrium in the setting of information cascades in which the agents themselves are forward-looking.

The theory of incentives in sequential fundraising processes has been gaining traction with the rise of crowdfunding platforms and venture capital funds. The literature's lion share focuses on the interaction with the entrepreneur (e.g., \cite{Hellmann2015,Bergemann2004}), studying questions of moral hazard, number of investments, and the pace at which investments occur (e.g., \cite{Bergemann2008,Halac2020}. 
  {\cite{Ali2012} presented a model in which agents arrive sequentially and have utilities that depend on both the underlying state of nature and the realized vector of actions. They defined a set of `` sincere"  strategies in which an agent's action is determined solely by the history of actions and her signal and not by her anticipation of future actions and proved that, under some assumptions, a sincere strategy equilibrium exists. Additionally, they show that herding may still emerge.  
The work closest to ours is \cite{Halac2020}. The model presented in \cite{Halac2020} comprises agents who have heterogeneous investment size, act sequentially and decide whether to invest in a firm. The firm can alter its revenue stream, which determines the investment attractiveness. They show that the optimal firm strategy increases the inequality between the returns of large investors and smaller ones.   We supplement \cite{Bergemann2008,Bergemann2004} by requiring investments to initiate the process, thus investors cannot alter their decision based on interim output. We extend \cite{Hellmann2015} to a case of more than two rounds of investment.  Compared to \cite{Ali2012}, We study a broader set of Markovian 
strategies where agents consider their influence on future agents as well as their pivotality. We prove that herding may still emerge. Furthermore, we show that information aggregation may suffer from a new type of equilibrium in which early agents {\em de facto} delegate their decision to later arrivals.} Finally, we extend \cite{Halac2020}, as in our model, agents are aware of the exact amount of participation required, and thus can make inferences from the decision of the pivotal investor, who, in turn, knows she is pivotal. We show that the pivotal nature of the latter has a major impact on her decision and on the efficiency of information aggregation, and allows for both mutual insurance and delegation which were unobserved in previous work.

Another related line of work is the literature that studies crowdfunding \cite{Alaei2016,Strausz2017,arieli2018one, Mollick2014}. In \cite{Strausz2017}, the author studied the robustness of crowdfunding campaigns to moral hazard and charaterized the conditions under which crowdfunding campaigns are resilient to entrepreneurial abuse.   Alaei et. al. \cite{Alaei2016} presented a model where agents' actions depend on past actions as well as their estimation of what  future agents will do.  Their setting, motivated by crowdfunding campaigns, has agents with private valuations, and the public information is solely the number of adoptions. We present a model where agents have a noisy signal on quality. 
In \cite{Alaei2016}'s model, the risk facing consumers is to invest in a failed campaign due to contribution costs. In ours the risk is due to a poor product. Due to this, our model can be applied to other cases, such as sequential voting and group buy. \cite{arieli2018one} presented a static model of crowdfunding, in which agent signals are binary, and show that agent participation is affected by the decision of their colleagues, when the population size is sufficiently large. They use the term ``social insurance'', roughly in the meaning meant by ``mutual insurance'' in this paper.  

In addition, we contribute to the study of sequential voting. Our uncovered delegation phenomena is simlar to the Bandwagon effect described in \cite{Callander2007}, with the only difference that agent utility in our model is only effected if she, and a sufficient proportion of her peers, choose to opt-in, while in \cite{Callander2007}, agents are forced to choose between two alternatives and gain extra utility from choosing the winner. Finally, when discussing unanimity, our results consolidate with \cite{Dekel2000}. However, we supply a rigorous treatment for the general sequential voting problem.  

The paper is organized as follows.  Our model is presented in Section \ref{model}. Section \ref{analysis} analyzes and solves the problem up to an equilibrium characterization. In Section \ref{results} we demonstrate properties of the solution in cases of interest and in the general case. Section \ref{voting} we analyze sequential voting, in comparison to our analysis of sequential fundraising. We offer conclusions and general remarks in Section \ref{discussion}.
\section{Model}
\label{model}

$n_0$ players, labeled $1, 2, \ldots, n_0$, need to take a joint decision on a project. Each has two possible actions, $1$ ``invest'' and $0$ ``decline''.
Player $1$ makes a public decision first, followed by player $2, 3 \dots$. Player $i$'s action is marked $a_i$. The project takes place only if at least $B_0$ players invest, i.e., if $a_1 + \ldots + a_{n_0} \geq B_0$. 

The state of the world, $\omega$, is either $1$ (project is ``good''), or $0$ (project is``bad'').

If the project does not take place, all player utilities are zero. If the project does take place, each player who invested gets utility $2\omega - 1$, while players who declined have zero utility.\footnote{A rescaling of likelihoods will handle the case where upside and downside are not equal.}

Each player receives a private, noisy signal $s_i \in \{0, 1\}$ of $\omega$. Signals are mutually independent, conditionally on $\omega$. The probability that the signal is correct ($s_i = \omega$) is individual to the player, called the signal {\em quality} and marked $q_i$. With probability $1 - q_i$, the signal is wrong ($s_i = 1 - \omega$). Qualities are independent, identically-distributed samples from a distribution $\bm{q} \in \Delta([R, Q])$ where $\frac{1}{2} \leq R < Q < 1$. The distribution has piecewise-continuous density $f_{\bm{q}}(x)$ with no atoms, and $f_{\bm{q}}(R + \epsilon) > 0$, $f_{\bm{q}}(Q - \epsilon) > 0$ for every sufficiently small $\epsilon > 0$. Each player privately knows her quality.

The {\em prior likelihood} of the state of the world, marked $L_0$, is publicly known. It is defined $$L_0 := \frac{\Pr[\omega = 1]}{\Pr[\omega = 0]}$$

A Markov perfect equilibrium (MPE) of the game is sought. This means a perfect Bayesian equilibrium (PBE) in {\em Markovian strategies}, as defined by \cite{maskin2001markov}. 
The strategy of each player $$S: \mathbb{R}_{>0} \times \mathbb{N} \times \mathbb{N} \times [1-Q, 1 - R] \cup [R, Q] \mapsto \Delta(\{0, 1\})$$ is accordingly a probability distribution over the action space $\{0, 1\}$, which is a function of the payoff-relevant state $(L, B, n)$, and of the player's privately-known type $t$, defined below, the state comprising
\begin{itemize}
\item $L$: The current public likelihood of $\omega$.
\item $B$: The number of investments still needed for completion.
\item $n$: The number of players still to make a decision. Note that this encodes the player's identity $i: = n_0 - n + 1$.
\end{itemize}

In accordance with the Markov property, each state $(L, B, n)$ may be considered the starting point of the fundraising, in which the initial state $(L_0, B_0, n_0)$ is irrelevant.

A player's type consists of her private information, i.e., the pair $(q_i, s_i)$. For each player $i$, we roll this pair into a single real number $t_i$, defined
\begin{align}
\label{type}
t_i = \left\{  \begin{array}{ll}
q_i & s_i = 1\\
1 - q_i & s_i = 0
\end{array} \right.
\end{align}

\noindent so that $\Pr[\omega = 1 | t_i] = t_i$. Note that $t_i$'s support is $[1 - Q, 1 - R] \cup [R, Q]$.

In addition, we make the following tie-breaking refinement:\footnote{Despite some similarities, our refinement is {\em not} Selten's Trembling-Hand Perfection. Similar tie-breaking mechanisms have been used in the context of dynamic analysis. See, for example, the {\em Obfuscation Principle} presented in \cite{Ely2017}. There, a social planner declares agent strategies before the game commences.} When a player has several strategies that are tied for having optimal utility expectation, the player {\em must} play a mixed strategy in which at least two of the tied strategies are played with a probability not less than $\epsilon$, a fixed parameter, that is arbitrarily small but is greater than zero.

\section{Analysis}
\label{analysis}

\subsection{Preliminaries}
\label{prelim}

The {\em history} of player $i$, $h_{i-1} \in \{0, 1\}^{i-1}$ is the list of actions of previous players $h_{i-1} := (a_1, \ldots, a_{i-1})$ ($h_0$ is an empty list).

The {\em type-independent} expectation of a player's random variable is the expectation calculated by an observer who has no access to the player's privately-known type, and so may be different from the player's own calculated expectation for the same variable.

Mark by $L_i$ the public likelihood of the state of the world after player $i$'s action, inferred from $L_0$ and $h_i$, i.e. 
\begin{align}
\label{L_i}
L_i := \frac{\Pr[\omega = 1 | h_i]}{\Pr[\omega = 0 | h_i]}
\end{align}

The private likelihood of a player, given her privately-known type (as the type and the history are mutually independent), is
\begin{align}
\label{private-likelihood}
\frac{\Pr[\omega = 1 | h_{i-1}, t_i]}{\Pr[\omega = 0 | h_{i-1}, t_i]} = \frac{\Pr[\omega = 1 | h_{i-1}]}{\Pr[\omega = 0 | h_{i-1}]} \frac{\Pr[\omega = 1 | t_i]}{\Pr[\omega = 0 | t_i]} = L_{i-1} \frac{t_i}{1 - t_1}
\end{align}

We will show (Corollary \ref{c}) that, except for {\em irregular states}, defined below, there is a unique MPE, and equilibrium strategies are pure threshold strategies, i.e., players invest if their type is larger than the threshold, decline if below it, and are possibly indifferent if their type is equal to the threshold. The strategy is therefore represented by a function\footnote{The number of players remaining is part of the state, so no two players can be in the same state. Thus, we can use a common strategy function $\sigma$.} $$\sigma: \mathbb{R}_{>0} \times \mathbb{N} \times \mathbb{N} \mapsto (0, 1)$$ where $S(L, B, n, t) = 1$ if $\sigma(L, B, n) < t$ and $S(L, B, n, t) = 0$ if $\sigma(L, B, n) > t$.

\subsection{The Case $B \leq 1$: Information Cascades}
When $B \leq 1$ we are in a situation similar to that described in the classical model of \cite{Banerjee1992,bikhchandani1992theory}: The project has already been decided on, or will be if the current player invests. Therefore the behavior of future players has no bearing on the current player's strategy: There is no mutual insurance, and the player decides strictly based on her private likelihood \eqref{private-likelihood}, investing when it is $> 1$ and declining when it is $< 1$.

\begin{proposition}[$B \leq 1$]
\label{B=1}
Given $B \leq 1$
\begin{align}
\label{last-strategy} 
\sigma(L, B, n) &= \frac{1}{1 + L}
\end{align}
\end{proposition}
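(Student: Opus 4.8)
The plan is to show that when $B \leq 1$ the current player's problem decouples from the behavior of all future players, so that her optimal action is determined purely by her private likelihood \eqref{private-likelihood}. The key observation is that if $B = 0$, the project is already funded regardless of what anyone does, and if $B = 1$, then investing by itself guarantees completion while declining leaves completion entirely in the hands of others. In either case, the \emph{marginal} effect of the current player's action on her own utility is: if she invests, she receives $2\omega - 1$ for certain (when $B \leq 0$ she also receives this, but her action is then irrelevant to her payoff only in expectation — I would handle $B=0$ as a degenerate case where any action is a best response and the tie-breaking refinement applies, but the threshold formula still describes the cutoff); if she declines, she receives $0$ when $B = 1$ (since the project only completes through others, and conditional on that event she gets nothing as a non-investor). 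So her expected payoff from investing, conditional on her type $t$ and history $h_{i-1}$, is $\E[2\omega - 1 \mid h_{i-1}, t_i] = \Pr[\omega = 1 \mid h_{i-1}, t_i] - \Pr[\omega = 0 \mid h_{i-1}, t_i]$, which is positive exactly when the private likelihood exceeds $1$ and negative when it is below $1$.

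Next I would translate the condition ``private likelihood $> 1$'' into the stated threshold on $t$. Using \eqref{private-likelihood}, the private likelihood equals $L_{i-1} \cdot \frac{t_i}{1 - t_i}$, and in the payoff-relevant state the public likelihood is denoted $L$, so the player invests iff $L \cdot \frac{t}{1-t} > 1$, i.e., iff $t > \frac{1}{1 + L}$, and declines iff $t < \frac{1}{1+L}$, with indifference exactly at $t = \frac{1}{1+L}$. This is precisely a threshold strategy with threshold $\sigma(L, B, n) = \frac{1}{1+L}$, independent of $B$ (within $B \leq 1$) and of $n$, which matches \eqref{last-strategy}. I would note that this cutoff may fall outside the support $[1-Q, 1-R] \cup [R, Q]$ of $t$ — in which case it simply means the player invests for all types or declines for all types (a cascade) — but the formula remains a valid description of the threshold.

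The one subtlety, which I'd flag as the main thing requiring care rather than a genuine obstacle, is the Markov/equilibrium justification: I need that in any MPE, the current player correctly infers $L = L_{i-1}$ as the public likelihood from the common prior and observed history, and that no future player's strategy enters her best-response problem. The first is just the definition \eqref{L_i} of $L_i$ together with the fact that type and history are independent (already used in \eqref{private-likelihood}). The second is the substantive point: since her utility as an investor is $2\omega - 1$ regardless of whether the project ``completes'' (once $B \leq 1$, her investment alone secures completion when $B = 1$, and completion is already secured when $B \leq 0$), and her utility as a decliner is $0$ whenever the project fails to complete and also $0$ when it does complete (decliners never gain or lose), future actions are payoff-irrelevant to her. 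Hence her best response is a dominant-in-continuation choice and the threshold is pinned down state-by-state, giving a unique pure threshold strategy except at the measure-zero indifference type $t = \frac{1}{1+L}$, consistent with the uniqueness claim preceding the proposition. I would also remark that this recovers the classical \cite{Banerjee1992,bikhchandani1992theory} cascade thresholds as a special case, as the text anticipates.
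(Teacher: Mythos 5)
Your proposal is correct and follows essentially the same route as the paper: since $B \leq 1$ means investing guarantees completion (and declining always yields zero utility), the expected payoff of investing is $\Pr[\omega=1\mid L,t]-\Pr[\omega=0\mid L,t]$, which is non-negative iff $L\frac{t}{1-t}\geq 1$, i.e., $t \geq \frac{1}{1+L}$. The extra care you take over the $B=0$ versus $B=1$ cases and the payoff-irrelevance of future players is sound but is material the paper places in the surrounding discussion rather than in the proof itself.
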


\begin{proof}
If the number of investments needed is reached or exceeded, the probability for completion is $1$. A player with type $t$ has expectation for investing 
\begin{align*}
U(L, B, n, t) &= \Pr[\omega = 1 | L, t]  - \Pr[\omega = 0 | L, t] = \Big(1 - \frac{1}{1 + L \frac{t}{1-t}}\Big) - \frac{1}{1 + L \frac{t}{1-t}}
\end{align*}
\noindent which is non-negative iff $L \frac{t}{1-t} \geq 1$, i.e., when $t \geq \frac{1}{1 + L}$.
\end{proof}

\subsection{Probability of Completion}
\label{recur}

We will show that a player's optimal strategy critically depends on the {\em probability of completion}, which we now define.

In an $n$-player game, let $\bm{\phi} = (\phi_1, \ldots, \phi_n)$ be a profile of strategies in MPE for each player.

Define the {\em probability of completion}, $\pi_\omega^{\bm{\phi}}(L, B, n)$ as the probability that the project takes place (i.e., the fundraising completes) when the state of the world is $\omega \in \{0, 1\}$ and the players play their strategies in $\bm{\phi}$, given public likelihood $L$, $B$ outstanding investments and $n$ remaining players. 
Clearly, $\pi_\omega^{\bm{\phi}}(L, B, n) = 0$ when $B > n$ and $\pi_\omega^{\bm{\phi}}(L, B, n) = 1$ when $B \leq 0$,  for any $L, \omega$ and $\bm{\phi}$.

These are examples of states where the probability of completion does not depend on the strategy profile. In general, if there is a unique MPE, there is a unique probability of completion, which we mark simply $\pi_\omega(L,B,n)$. We shall later derive a recurrence equation for it in Corollary \ref{recurrence}.

States $(L, B, n)$ where there are multiple MPE's, or where the strategy $\sigma(L, B, n)$ is not continuous in $L$, are called {\em irregular}, and are excluded from our discussion. We later (Section \ref{regular}) characterize when a state is irregular, and show why it is reasonable to not consider them.

%
%
%

Let $\phi_1$ be some Markovian strategy that the first player plays in state $(L, B, n)$. It depends on the player's private type. Other players, not knowing the first player's type, make inferences based on their knowledge of $\phi_1$. Let the type-independent probability (the probability calculated by an observer who has no private information of the player's type) that the player declines under $\phi_1$ and state of the world $\omega$ be marked $F_\omega(\phi_1)$:\footnote{In a slight abuse of notation, we shall later (Section \ref{inferences}) intentionally reuse $F(\cdot)$ as the c.d.f. of the type distribution, once we have proven that the only strategies that need be considered are threshold strategies.} 
By Bayes' rule, if the player invests, the posterior likelihood changes to
\begin{align}
\label{L+}
L^+ := L \frac{1 - F_1(\phi_1)}{1 - F_0(\phi_1)}
\end{align} \noindent and the state changes to $(L^+, B-1, n-1)$, while if the player declines, the posterior likelihood changes, by Bayes' rule, to 
\begin{align}
\label{L-}
L^- := L \frac{F_1(\phi_1)}{F_0(\phi_1)}
\end{align} \noindent and the state changes to $(L^-, B, n-1)$.


If neither of these two states are irregular, there is a well-defined probability of completion for every possible action of the player. Thus, the probability of completion under $\phi_1$, which we mark $\pi_\omega^{\phi_1}(L, B, n)$, is well-defined, equalling
\begin{align}
\label{sigma1}
\pi_\omega^{\phi_1}(L, B, n) = [1 - F_\omega(\phi_1)] \pi_\omega\big(L \frac{1 - F_1(\phi_1)}{1 - F_0(\phi_1)}, B - 1, n - 1\big) + F_\omega(\phi_1) \pi_\omega\big(L\frac{F_1(\phi_1)}{F_0(\phi_1)}, B, n - 1\big).
\end{align}

\subsection{Equilibrium Strategies are Threshold Strategies}

We now prove that the only strategies that are played in MPE are threshold strategies.


For $x \in (0, 1)$, we define $\mathcal{S}(x)$ as a strategy with threshold $x$,  investing for every type $> x$, declining for every type $< x$, and possibly indifferent for type $= x$.

\begin{theorem} [Threshold Strategy]\label{thm:threshold}
\label{threshold-strategy}
Excluding irregular states $(L, B, n)$, let $\bm{\phi} := (\phi_1, \ldots, \phi_n)$ be an MPE with public likelihood $L$, $B$ outstanding investments, and $n$ remaining players. Then $\phi_1$ is a threshold strategy $\mathcal{S}(x)$, where $x \in (0,1)$.
\end{theorem}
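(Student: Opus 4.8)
The plan is to show that the first player's expected utility from investing, as a function of her type $t$, is (weakly) monotone increasing in $t$, so that the set of types for which investing is optimal is an up-set, hence a threshold strategy. The crucial observation is that the first player's private likelihood is $L_0^{\text{priv}}(t) = L\frac{t}{1-t}$, which is strictly increasing in $t$, and that her entire decision problem depends on $t$ only through this private likelihood. Concretely, fix the strategies $\phi_2,\dots,\phi_n$ of the remaining players (which, by the MPE assumption and since we excluded irregular states, pin down well-defined probabilities of completion $\pi_\omega(\cdot,\cdot,\cdot)$ for the successor states). If player $1$ invests, the project completes — conditional on $\omega$ — with probability $\pi_\omega(L^+, B-1, n-1)$, where $L^+$ is given by \eqref{L+}; if she declines, completion is irrelevant to her payoff since a declining player gets zero whether or not the project goes through. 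Hence her expected utility from investing in state $(L,B,n)$ with type $t$ is
\begin{align*}
U(L,B,n,t) = \Pr[\omega=1\mid L,t]\,\pi_1(L^+,B-1,n-1) - \Pr[\omega=0\mid L,t]\,\pi_0(L^+,B-1,n-1),
\end{align*}
and declining yields $0$. Note $\pi_1(L^+,B-1,n-1)$ and $\pi_0(L^+,B-1,n-1)$ do \emph{not} depend on $t$: the posterior likelihood $L^+$ is computed by the other players from $\phi_1$ (which they know) and player $1$'s realized action, never from her type.

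Next I would rewrite $U$ using $\Pr[\omega=1\mid L,t] = \frac{L t/(1-t)}{1+Lt/(1-t)}$ and its complement, and factor out the common positive denominator. Setting $\ell := L\frac{t}{1-t}$ (strictly increasing in $t$), one gets $U \propto \ell\,\pi_1 - \pi_0$ up to a positive factor, where $\pi_1,\pi_0\ge 0$ are the ($t$-independent) completion probabilities in the ``invest'' successor state. If $\pi_1 > 0$, this is strictly increasing in $\ell$, hence in $t$, so $\{t : U \ge 0\}$ is an interval of the form $[x,\cdot]$ or $(x,\cdot]$ for some $x$, and any best response invests above $x$ and declines below $x$ — precisely $\mathcal{S}(x)$. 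If $\pi_1 = 0$ then also $\pi_0 = 0$ (completion is impossible in the successor state, e.g.\ $B-1 > n-1$), so $U \equiv 0$, every type is indifferent, and the strategy is vacuously a threshold strategy (take $x$ outside the type support, or any $x$). One subtlety: because declining gives $0$ regardless, what really matters is the sign of $U$; I should also handle the degenerate possibility that $U\equiv 0$ on the whole support even when $(\pi_1,\pi_0)\neq(0,0)$, which happens only on a measure-zero set of likelihoods and is again consistent with a threshold strategy.

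The main obstacle is the circularity/well-definedness issue: the statement presumes $\bm{\phi}$ is an MPE of the $n$-player game, and the argument invokes $\pi_\omega$ of the successor states, which are only well-defined if those states are regular. I would address this by noting that in an MPE the continuation play from $(L^+,B-1,n-1)$ and $(L^-,B,n-1)$ is itself an MPE of the corresponding subgame, and — invoking the (to-be-established) characterization of irregular states, or simply taking regularity of the relevant successor states as part of the hypothesis as the theorem statement implicitly does when it says ``excluding irregular states'' — the continuation values $\pi_1,\pi_0$ are constants independent of player $1$'s type. Once that is granted, the monotonicity computation is routine. A second, minor point to get right is the boundary/indifference behavior: the refinement forces a specific mixing at ties, but ties can occur only at the single type $t=x$ where $U(L,B,n,x)=0$ (by strict monotonicity when $\pi_1>0$), so the resulting strategy still fits the definition of $\mathcal{S}(x)$, which explicitly permits indifference at the threshold. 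I would close by remarking that $x$ need not lie in the type support $[1-Q,1-R]\cup[R,Q]$ — e.g.\ if $U>0$ for all admissible types (an up-cascade) then any $x \le 1-Q$ works — which is exactly the caveat in the theorem statement.
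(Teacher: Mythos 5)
Your proposal is correct and follows essentially the same route as the paper's proof: express the expected utility of investing with the type-independent continuation probabilities $\pi_\omega(L^+,B-1,n-1)$, observe that the sign of that utility is governed by the strictly increasing private likelihood $L\frac{t}{1-t}$, and conclude that any best response is a threshold strategy with indifference only at the crossing point. Your explicit handling of the degenerate case $\pi_1=0$ is a small added care relative to the paper, which divides by $\pi_1$ in its indifference condition without comment, but it does not change the argument.
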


\begin{proof}
Let a player have a type $t$ with public likelihood $L$, $B$ outstanding investments and $n$ remaining players. Her private posterior likelihood $L'$ is, by \eqref{private-likelihood}, $L \frac{t}{1-t}$, and her probability for $\omega = 0$ is $\frac{1}{1 + L'} = \frac{1}{1 + L \frac{t}{1-t}}$.

Let the type-independent probability for declining under $\phi_1$ be $F_\omega(\phi_1)$. The posterior likelihood from investing, $L^+$, is given in \eqref{L+}. $L^+$ is independent of $t$, since the type is private and unobserved. So, If the player invests, her expectation, marked $U^{\phi_1}(L, B, n, t)$, is 
\begin{align}
U^{\phi_1}(L, B, n, t) &= \Pr[\omega = 1 | L, t] \pi_1(L^+, B - 1, n - 1) - \Pr[\omega = 0 | L, t] \pi_0(L^+, B - 1, n - 1) \nonumber \\
\label{expectation-phi}
&= \Big(1 - \frac{1}{1 + L \frac{t}{1-t}}\Big) \pi_1(L^+, B - 1, n - 1) - \frac{1}{1 + L \frac{t}{1-t}} \pi_0(L^+, B - 1, n- 1).
\end{align}

$U^{\phi_1}(L, B, n, t)$ is non-negative iff
\begin{align}
\label{invest}
L \frac{t}{1-t} \geq \frac{\pi_0(L^+, B - 1, n - 1)}{\pi_1(L^+, B - 1, n- 1)}.
\end{align}

The right-hand side of \eqref{invest} does not depend on type $t$, while the left-hand side increases with increasing $t$, and varies continuously from $0$ to $\infty$ when $t$ varies from $0$ to $1$. It follows
\begin{enumerate}
\item By the mean-value theorem, there exists a $t = x$ where \eqref{invest} holds with equality.
\item If $\phi_1$ calls for declining at any $t > x$, the player will deviate, as $U^{\phi_1}(L, B, n, t) > 0$, and so the strategy is not playable in equilibrium.
\item If $\phi_1$ calls for investing at any $t < x$, the player will deviate, as $U^{\phi_1}(L, B, n, t) < 0$, and so the strategy is not playable in equilibrium.
\end{enumerate}
Therefore the only strategies playable in equilibrium are threshold strategies $\mathcal{S}(x)$ where for $t = x$ \eqref{invest} holds with equality.
\end{proof}

\subsection{Inferences from Actions in Threshold Strategies}
\label{inferences}
At equilibrium, by Theorem \ref{thm:threshold}, all players play threshold strategies, which are commonly known. A player's type is not observable, but the fact that she invested proves that her type is 
greater than her threshold, while if she declined, the inference is that her type is below the threshold. Since these events have, {\em a priori}, different conditional probabilities for each state of the world $\omega$, observing the action leads, by Bayes' rule, to an updated public likelihood of $\omega$.

We derive from $f_{\bm{q}}(x)$, the quality density function, the distribution of type $t$. For each state of the world $\omega$, $t$ is a random variable with support in $[1 - Q, 1 - R] \cup [R, Q]$, with density $f_\omega(y)$ and c.d.f. $F_\omega(y)$. By \eqref{type} and the definition of the quality $q$ 
\begin{align}
\label{pdf-1}
f_1(y) &= \left\{  \begin{array}{ll}
y f_{\bm{q}}(y) & y \in [\frac{1}{2}, Q] \\
y f_{\bm{q}}(1 - y) & y \in [1 - Q, \frac{1}{2}]
\end{array} \right. \\
\label{pdf-0}
f_0(y) &= \left\{  \begin{array}{ll}
(1 - y) f_{\bm{q}}(y) & y \in [\frac{1}{2}, Q] \\
(1 - y) f_{\bm{q}}(1 - y) & y \in [1 - Q, \frac{1}{2}]
\end{array} \right.
\end{align}

Thus
\begin{align}
\label{mlrp}
\frac{f_1(y)}{f_0(y)} = \frac{y}{1-y}
\end{align}
\noindent regardless of $f_{\bm{q}}(\cdot)$, and is monotonically increasing in $y$, so type distributions always have the MLRP property.

For $x \geq Q$, the undefined $\frac{1 - F_1(x)}{1 - F_0(x)} = \frac{0}{0}$ is taken to be the limit by L'H\^opital's rule
\begin{align}
\label{limQ}
\lim\limits_{x \to Q} \frac{1 - F_1(x)}{1 - F_0(x)} = \frac{Q}{1-Q}.
\end{align}

For $x \leq 1-Q$, the undefined $\frac{F_1(x)}{F_0(x)} = \frac{0}{0}$ is taken to be the limit by L'H\^opital's rule
\begin{align}
\label{lim1-Q}
\lim\limits_{x \to 1 - Q} \frac{F_1(x)}{F_0(x)} = \frac{1 - Q}{Q}.
\end{align}

The following lemma will be useful.

\begin{lemma}
\label{monotone}
$\frac{1 - F_1(x)}{1 - F_0(x)}$ and $\frac{F_1(x)}{F_0(x)}$ are monotonically increasing in $x$.
\end{lemma}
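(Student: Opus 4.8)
The plan is to reduce both monotonicity claims, on the interior of the type support, to the monotonicity of the likelihood ratio $f_1(y)/f_0(y) = y/(1-y)$ from \eqref{mlrp}, using a one-line quotient-rule argument, and then to check separately the boundary points and the density gap, where the two ratios are defined as L'H\^opital limits in \eqref{limQ}--\eqref{lim1-Q} rather than by the formula.

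First I would treat $G(x) := F_1(x)/F_0(x)$ on the set where $F_0(x) > 0$, i.e. for $x > 1-Q$ (here $f_0$ is positive just above the left endpoint $1-Q$ by \eqref{pdf-0} and $f_{\bm q}(Q) > 0$). There $G$ is differentiable and
\[
G'(x) = \frac{f_1(x) F_0(x) - f_0(x) F_1(x)}{F_0(x)^2},
\]
and writing $F_\omega(x) = \int_{y\le x} f_\omega(y)\,dy$ over the support, the numerator is $\int_{y\le x}\big[f_1(x) f_0(y) - f_0(x) f_1(y)\big]\,dy$. Since $y \le x$ gives $f_1(y)/f_0(y) = y/(1-y) \le x/(1-x) = f_1(x)/f_0(x)$ by \eqref{mlrp}, the integrand is $\ge 0$, so $G' \ge 0$. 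The same computation for $H(x) := (1-F_1(x))/(1-F_0(x))$, valid where $1-F_0(x) > 0$, i.e. for $x < Q$, gives a derivative whose numerator is $\int_{y\ge x}\big[f_0(x) f_1(y) - f_1(x) f_0(y)\big]\,dy \ge 0$, again by \eqref{mlrp} applied with $y \ge x$. (The underlying fact is that $G(x) = \E[\,Y/(1-Y)\mid Y\le x\,]$ and $H(x) = \E[\,Y/(1-Y)\mid Y\ge x\,]$ for $Y$ of density $f_0$, which are non-decreasing because $y/(1-y)$ is increasing.)

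Then I would close the gaps. On $(1-R,R)$ (nonempty only if $R>\tfrac12$) the densities vanish, so $F_\omega$ and $1-F_\omega$ are constant there and both ratios are locally constant; with continuity at $1-R$ and $R$ this keeps them monotone across the gap. For $x \le 1-Q$, $F_1(x)/F_0(x)$ is defined as $(1-Q)/Q$ by \eqref{lim1-Q}; since $y/(1-y) \ge (1-Q)/Q$ throughout the support, $G(x)\ge(1-Q)/Q$ for every $x>1-Q$ and $G(x)\to(1-Q)/Q$ as $x\to(1-Q)^+$, so the extended $G$ stays non-decreasing, and symmetrically for $(1-F_1)/(1-F_0)$ at $x\ge Q$ using \eqref{limQ}. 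The remaining stretches --- $G\equiv 1$ for $x\ge Q$ and $H\equiv 1$ for $x\le 1-Q$ --- are consistent because first-order stochastic dominance $F_1\le F_0$ (itself a consequence of \eqref{mlrp}) gives $G\le 1$ and $H\ge 1$ on the support.

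The only substantive step is the interior derivative estimate, which is immediate from \eqref{mlrp}; the part that needs care is exactly this boundary bookkeeping --- verifying that the L'H\^opital values in \eqref{limQ}--\eqref{lim1-Q} coincide with the one-sided limits of the ratios, so that each function is monotone globally and not merely on the support.
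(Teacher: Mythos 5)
Your proposal is correct and takes essentially the same route as the paper's proof: both reduce the monotonicity of each ratio to the sign of an integral whose integrand is proportional to $(y-x)$ times a positive density factor, i.e., to the MLRP identity $f_1(y)/f_0(y)=y/(1-y)$ --- the paper packages this as positivity of the log-derivative via the auxiliary inequality $\frac{1-F_0(x)}{1-x}<\frac{1-F_1(x)}{x}$, while you apply the quotient rule directly, which is only a cosmetic difference. Your extra bookkeeping on the gap $(1-R,R)$ and at the endpoints $1-Q$ and $Q$ (where the ratios are defined by the L'H\^opital limits and where the paper's claimed strict positivity of the derivative degenerates because the densities vanish) is a welcome additional level of care, but it does not change the substance of the argument.
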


\begin{proof}
See Appendix.
\end{proof}

Since $F_1(1) = F_0(1) = 1$ we deduce from Lemma \ref{monotone} first-order stochastic dominance
\begin{align}
\label{dominance}
F_0(x) \geq F_1(x)
\end{align}
\noindent with equality only when $x \notin (1-Q, Q)$.

When the public likelihood is $L$, the probability for $\omega = 1$  is $\frac{L}{1 + L}$. Thus, observing an investment ($a_i = 1$) by a player whose threshold strategy is $x$, when the prior public likelihood is $L$, we derive by Bayes' rule
\begin{align}
\label{for_a=1}
\Pr[\omega = 1 | a_i = 1] &= \frac{\frac{L}{1 + L} \Pr[a_i = 1 | \omega = 1]}{\frac{L}{1 + L} \Pr[a_i = 1 | \omega = 1] + \frac{1}{1 + L} \Pr[a_i = 1 | \omega = 0]} \nonumber \\
&= \frac{1}{1 + \frac{1}{L} \frac{ \Pr[a_i = 1 | \omega = 0]}{\Pr[a_i = 1 | \omega = 1]}} = \frac{1}{1 + \frac{1}{L} \frac{1 - F_0(x)}{1 - F_1(x)}}.
\end{align}

From which we conclude that the posterior public likelihood $L^+(L, x)$ inferred from investment is
\begin{align}
\label{infer-invest}
L^+(L, x) := \frac{\Pr[\omega = 1 | a_i = 1]}{\Pr[\omega = 0 | a_i = 1]} = L \frac{1 - F_1(x)}{1 - F_0(x)}.
\end{align}

Similarly, the posterior public likelihood $L^-(L, x)$ inferred from a decline is
\begin{align}
\label{infer-decline}
L^-(L, x) :=  \frac{\Pr[\omega = 1 | a_i = 0]}{\Pr[\omega = 0 | a_i = 0]} = L \frac{F_1(x)}{F_0(x)}.
\end{align}

Due to stochastic dominance \eqref{dominance}, we have, for every $L, x$ $$L^-(L, x) \leq L \leq L^+(L, x).$$

Using the above, Theorem \ref{threshold-strategy}, and \eqref{invest} in its proof, we derive a condition which is fulfilled by every threshold strategy played in MPE.
\begin{corollary}[Threshold Indifference Condition]
\label{indifference}
In equilibrium, a threshold $x$ strategy satisfies
\begin{align}
\label{threshold}
L \frac{x}{1-x} = \frac{\pi_0\big(L \frac{1 - F_1(x)}{1 - F_0(x)}, B - 1, n - 1\big)}{\pi_1\big(L \frac{1 - F_1(x)}{1 - F_0(x)}, B - 1, n- 1\big)}.
\end{align}
\end{corollary}

We now derive a recurrence equation for the probability of completion $\pi_\omega$.

\begin{corollary}[Probability of Completion]
\label{recurrence}
Except at irregular states $(L, B, n)$, the probability of completion is continuous in $L$, equalling
\s\begin{align}
\label{pi-recurrence}
\pi_\omega(L, B, n) = \left\{  \begin{array}{ll}
1 & B \leq 0 \\
0 & n \leq 0 \land B > 0 \\
 \big[ 1 - F_\omega(x)\big]\pi_\omega\big(L \frac{1 - F_1(x)}{1 - F_0(x)}, B - 1, n - 1\big) + F_\omega(x) \pi_\omega\big(L\frac{F_1(x)}{F_0(x)}, B, n - 1\big) & $otherwise$
\end{array} \right.
\end{align}\n
\noindent where $x = \sigma(L, B, n)$.
\end{corollary}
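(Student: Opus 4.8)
The plan is to read the statement as two assertions --- that $\pi_\omega(L,B,n)$ obeys the displayed recursion, and that as a function of $L$ it is continuous --- and to establish both by a single induction on the number $n$ of remaining players.

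\textbf{The recursion.} The first two lines are just the definition of the probability of completion restated: if $B\le 0$ the target is already met, so $\pi_\omega\equiv 1$; if $n\le 0$ while $B>0$ then $B>n$, so completion is impossible and $\pi_\omega\equiv 0$. For the remaining case fix a regular state $(L,B,n)$. By Theorem~\ref{thm:threshold} the first player's equilibrium strategy is a threshold strategy $\mathcal S(x)$; since the type distribution has no atoms, $\{t=x\}$ has probability zero, so conditional on $\omega$ the player declines with probability exactly $F_\omega(x)$, \emph{whatever} mixture resolves the possible indifference at $x$. Substituting $F_\omega(\phi_1)=F_\omega(x)$ into \eqref{sigma1} and rewriting the two inference ratios via \eqref{infer-invest}--\eqref{infer-decline} as $L^+(L,x)=L\frac{1-F_1(x)}{1-F_0(x)}$ and $L^-(L,x)=L\frac{F_1(x)}{F_0(x)}$ produces exactly the third line of \eqref{pi-recurrence}. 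This use of \eqref{sigma1} presupposes that the successor states $\big(L^+(L,x),B-1,n-1\big)$ and $\big(L^-(L,x),B,n-1\big)$ are themselves regular; that is part of the standing agreement to restrict attention to states whose equilibrium successors are regular, and it is precisely what lets the induction descend from $n$ to $n-1$.

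\textbf{Continuity, by induction on $n$.} In the base cases ($B\le 0$, or $B>n$) the function $\pi_\omega(\cdot,B,n)$ is constant, hence continuous. For the inductive step assume $\pi_\omega(\cdot,B-1,n-1)$ and $\pi_\omega(\cdot,B,n-1)$ are continuous. The right side of \eqref{pi-recurrence} is assembled from: the equilibrium threshold $x=x(L)$; the maps $F_\omega(\cdot)$, continuous because the type law has no atoms; the maps $(L,x)\mapsto L^{\pm}(L,x)$, jointly continuous once the $\tfrac{0}{0}$ points $x=Q,\,1-Q$ are assigned the L'H\^opital values \eqref{limQ}--\eqref{lim1-Q}; and the inductively continuous $\pi_\omega(\cdot,\cdot,n-1)$. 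Since products and compositions of continuous maps are continuous, continuity of $\pi_\omega(\cdot,B,n)$ follows once $L\mapsto x(L)$ is shown continuous --- so that is the crux.

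\textbf{Continuity of the threshold --- the main obstacle.} Here I would use the Threshold Indifference Condition \eqref{threshold}: $L\frac{x}{1-x}=\pi_0\big(L^+(L,x),B-1,n-1\big)\big/\pi_1\big(L^+(L,x),B-1,n-1\big)$. At a regular state the MPE, hence its first-player threshold, is unique; and conversely any root $x$ of \eqref{threshold}, together with the (unique) continuations, reproduces an MPE --- this is the sign analysis in the proof of Theorem~\ref{thm:threshold}, since then the best-response cutoff $L\frac{t}{1-t}\ge L\frac{x}{1-x}$ coincides with $t\ge x$ --- so \eqref{threshold} has a unique root $x(L)$. The left side of \eqref{threshold} is continuous and strictly increasing in $x$ with range $(0,\infty)$, and by the inductive hypothesis plus continuity of $L^+$ the right side is continuous in $(L,x)$; hence $x(\cdot)$ has closed graph (along $L_k\to L$ a limit $x(L_k)\to x^\star$ exists since thresholds lie in the compact interval $[0,1]$, and passing to the limit in \eqref{threshold} shows $x^\star$ is a root at $L$, so $x^\star=x(L)$), and a closed-graph map into a compact set is continuous. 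I expect the genuinely delicate points to be: (i) keeping the successor states regular along the approximating sequence and at the limit, so the inductive continuity of $\pi_0,\pi_1$ applies, and ruling out degenerate limits where $\pi_1$ vanishes; and (ii) the boundary regimes, where $x(L)$ sits at or past a cascade limit or the edge of the type support, so that $F_\omega$ is locally flat and $L^{\pm}$ is the constant L'H\^opital extension of the inference ratio --- there one checks that, because $\pi_\omega$ is bounded in $[0,1]$, the vanishing weight $1-F_\omega$ (resp. $F_\omega$) kills any potential jump, so the recursion still splices together continuously.
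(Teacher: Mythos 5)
Your derivation of the recursion itself coincides with the paper's: substitute the threshold strategy $\mathcal{S}(x)$ into \eqref{sigma1}, use the atomless type law to dispose of the tie at $t=x$, and identify the two posteriors with \eqref{infer-invest}--\eqref{infer-decline}; the remark that regularity of the successor states is what licenses the recursion matches Corollary~\ref{irregular}. That part is correct.

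The gap is in the continuity argument, specifically your claim that at a regular state the indifference condition \eqref{threshold} has a unique root $x(L)$. Regularity does not give you this. The paper explicitly allows \eqref{threshold} to have several solutions at a regular state (Section~\ref{optimal}: ``If there are multiple $x$'s satisfying \eqref{threshold}, she must choose''); the MPE threshold is then \emph{selected} as the root maximizing the ex-ante utility \eqref{pi_diff}, with ties broken by the discriminator \eqref{discriminator}, and a state is irregular only when that tie-breaking fails. Your inference ``unique MPE, plus every root reproduces an MPE, hence unique root'' conflates \emph{playable} with \emph{played}: a root of \eqref{threshold} with suboptimal ex-ante expectation is playable (no type deviates) but is not the MPE strategy under Proposition~\ref{maximal}, so uniqueness of the MPE is perfectly compatible with multiplicity of roots. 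Consequently the closed-graph argument breaks down: along $L_k\to L$ a limit point $x^\star$ of $x(L_k)$ is indeed a root at $L$, but it need not be the selected root, since an argmax selection over several branches can switch discontinuously (the branch switching visible in the first player's threshold in Figure~\ref{fig:6580} is exactly this phenomenon). To close the argument you would need either a separate reason why the selected threshold varies continuously, or an argument that at a switch point the competing equally-optimal roots induce the same completion probabilities; you establish neither (and, to be fair, the paper's own one-line proof is silent on this point as well).
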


\begin{proof}
This follows from \eqref{sigma1} for a threshold strategy $\phi_1 = \mathcal{S}(x)$, that is played in a unique MPE. $F_\omega(x)$ is continuous in $x$, and therefore, since the state is regular, continuous in $L$. So by induction on $n$, using \eqref{pi-recurrence}, $\pi_\omega$ is continuous in $L$, wherever it is well-defined.
\end{proof}

\subsection{The Equilibrium Strategy}
\label{optimal}

In previous sections we show that, in MPE, a player plays a threshold strategy where the threshold satisfies \eqref{threshold}. If \eqref{threshold} has a single solution $x$, this single solution is the player's strategy in MPE. But, if there are multiple $x$'s satisfying \eqref{threshold}, she must choose.

\begin{proposition}[Optimal Strategy]
\label{maximal}
In MPE, a player's strategy is a Markovian strategy whose type-independent utility expectation is not exceeded by any other Markovian strategy.
\end{proposition}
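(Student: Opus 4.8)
The plan is to reduce, via Theorem~\ref{thm:threshold}, the strategies player~$1$ could use in an MPE to a compact one-parameter family, to put on it the natural objective --- the ex-ante (over the player's own type) expected utility --- and then to argue that equilibrium together with the tie-breaking refinement selects a maximizer of that objective, which is in turn a maximizer over \emph{all} Markovian strategies.

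First, restrict attention. By Theorem~\ref{thm:threshold} any MPE strategy of player~$1$ at the regular state $(L,B,n)$ is a threshold strategy $\mathcal{S}(x)$, and by the Threshold Indifference Condition its threshold lies in the set $X^\ast$ of solutions of \eqref{threshold}. Since the right-hand side of \eqref{threshold} is continuous in $x$ (continuity of $\pi_\omega$ from Corollary~\ref{recurrence}, composed with the continuous update \eqref{infer-invest}) while the left-hand side is continuous and strictly increasing, $X^\ast$ is closed, hence compact. For $x\in X^\ast$, holding fixed the continuation equilibrium and the induced updates \eqref{infer-invest}--\eqref{infer-decline}, set $W(x):=\int_{t>x} U^{\mathcal{S}(x)}(L,B,n,t)\,\dd F(t)$ (declining types contribute zero); by \eqref{invest} every investing type $t>x$ has $U^{\mathcal{S}(x)}(L,B,n,t)\ge 0$, so $W$ is a well-defined nonnegative objective on $X^\ast$ and attains a maximum there.

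Second, this maximum is not beaten by any Markovian strategy. For a Markovian $\psi$ with invest-set $A$, the public update after an investment is $L\,(\int_A f_1)/(\int_A f_0)$; once that update --- hence the continuation values $\pi_0,\pi_1$ --- is fixed, the set of types for which investing is weakly optimal is an upper set $\{t\ge\hat x\}$, by monotonicity of the left side of \eqref{invest} in $t$ together with the MLRP \eqref{mlrp}. Hence the ex-ante expectation is not decreased by replacing $A$ with that upper set, i.e.\ by passing to a threshold strategy; and a threshold strategy whose threshold best-responds to its own induced update is exactly an element of $X^\ast$. So the supremum of the ex-ante expectation over all Markovian strategies equals $\max_{x\in X^\ast}W(x)$.

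Third --- the step I expect to be the crux --- invoke the tie-breaking refinement: player~$1$'s MPE strategy must be $\mathcal{S}(x_0)$ with $W(x_0)=\max_{x\in X^\ast}W(x)$, for otherwise $\mathcal{S}(x_0)$ is not among the strategies ``tied for having optimal utility expectation'' within the admissible family $X^\ast$, and the refinement only permits splitting probability among strategies of maximal expectation. (The case $|\argm_{x\in X^\ast}W|>1$ is precisely the irregular-state case, where the refinement instead prescribes an $\epsilon$-mixture and uniqueness genuinely fails.) Combining the three steps gives the proposition. The delicate point is the exact articulation of this last step --- why a strictly suboptimal element of $X^\ast$, although it underlies a valid continuation PBE, is excluded --- i.e.\ the precise interplay between the refinement and the definition of a regular state.
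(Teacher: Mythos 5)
Your architecture (restrict to the compact solution set $X^\ast$ of \eqref{threshold}, maximize the ex-ante objective there, then check that no other Markovian strategy does better) is sensible, but your step~3 --- which you correctly flag as the crux --- resolves it with the wrong mechanism, and that is where the argument breaks. The tie-breaking refinement in the model only governs what happens when several strategies are \emph{already tied for optimal utility expectation}; it presupposes that the equilibrium strategy is an ex-ante maximizer and says nothing about why a strictly suboptimal but playable element of $X^\ast$ (which does support a valid continuation assessment) is excluded. It is used later, in Proposition~\ref{equally-optimal}, only to break exact ties via the discriminator \eqref{discriminator}, so invoking it here to establish ex-ante optimality is circular. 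The paper's actual argument is a short conceptual one about the solution concept itself: in a sequential game the player has no concurrent players; in MPE her strategy is chosen, before she knows her type, as a best response to the strategies of \emph{future} players only; and switching among playable strategies induces no simultaneous change in those future strategies, because for them she is a past player whose influence is already encoded in the payoff-relevant state. Hence among the playable strategies she selects the one with maximal type-independent expectation. Your parenthetical identification of $|\argm_{x\in X^\ast}W|>1$ with irregularity is also wrong: several equally-optimal strategies still constitute a regular state provided the discriminator separates them; irregularity is the failure of that separation, or irregularity of a reachable substate (Corollary~\ref{irregular}).

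A secondary gap sits in your step~2: the improvement ``replace the invest-set $A$ by the upper set of interim-optimal types'' is performed \emph{holding the public update fixed}, but the update $L\,(\int_A f_1)/(\int_A f_0)$ changes when $A$ does, so the resulting threshold strategy is evaluated against a different continuation, and the chain of improvements does not obviously terminate at an element of $X^\ast$ with weakly higher ex-ante value. The paper does not attempt such a domination argument at all; it gets the comparison against arbitrary Markovian strategies directly from the best-response logic above.
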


\begin{proof}
In games in general, in MPE a player plays a strategy that is a function of the payoff-relevant state, and is a best response to the strategies of all concurrent and future players. The strategy does {\em not} depend on the strategies of {\em past} players, as their strategies, to the extent that they are payoff-relevant, are already encompassed in the state.

In the sequential fundraising game a player has no concurrent players. A {\em playable} strategy is a strategy from which the player will not deviate for any type. The player's optimal choice is the playable strategy that has maximal utility expectation {\em before she knows her type}, given the state and the strategies of {\em future} players. No simultaneous change of strategy by future players is implied by this strategy choice, since, for future players the current player is a {\em past} player whose strategy does not affect theirs.\footnote{This argument would not work in an SPE/PBE solution concept, where a player's strategy is a best response to all other strategies, past or present.}
\end{proof}

%
\noindent The type-independent utility expectation is given by the following proposition.

\begin{proposition}[Utility Expectation]
\label{expectation}
Excluding irregular states $(L, B, n)$, the type-independent utility expectation $U^{\mathcal{S}(x)}(L, B, n)$ for a strategy $\mathcal{S}(x)$ is
\begin{align}
\label{pi_diff}
U^{\mathcal{S}(x)}(L, B, n) = \frac{L}{1 + L}&[1 - F_1(x)] \pi_1\big(L \frac{1 - F_1(x)}{1 - F_0(x)}, B - 1, n - 1\big) \nonumber \\
&-  \frac{1}{1 + L}[1 - F_0(x)] \pi_0\big(L \frac{1 - F_1(x)}{1 - F_0(x)}, B - 1, n - 1\big).
\end{align}
\end{proposition}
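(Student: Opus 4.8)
The plan is to derive the type-independent expectation by averaging the type-conditional payoff already exhibited inside the proof of Theorem~\ref{thm:threshold} over the player's prior type distribution. I would fix a regular state $(L,B,n)$ and the candidate strategy $\mathcal{S}(x)$, write $L^+ = L\frac{1-F_1(x)}{1-F_0(x)}$ for the post-investment public likelihood \eqref{infer-invest}, and stress that $L^+$ is independent of the player's type $t$ (the type is private and unobserved); consequently the two continuation probabilities $\pi_0(L^+,B-1,n-1)$ and $\pi_1(L^+,B-1,n-1)$ from that proof are constants relative to the $t$-average and are well-defined on the (regular) successor state.

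Next I would observe that a player who commits to $\mathcal{S}(x)$ before learning her type receives, conditional on type $t$, a payoff of $0$ when $t<x$ (she declines) and $U^{\mathcal{S}(x)}(L,B,n,t) = \Pr[\omega=1\mid L,t]\,\pi_1(L^+,B-1,n-1)-\Pr[\omega=0\mid L,t]\,\pi_0(L^+,B-1,n-1)$ when $t\ge x$ (she invests). Integrating against the prior (unconditional) law $G$ of the type gives
\begin{align*}
U^{\mathcal{S}(x)}(L,B,n)=\int_x^1\Big(\Pr[\omega=1\mid L,t]\,\pi_1(L^+,B-1,n-1)-\Pr[\omega=0\mid L,t]\,\pi_0(L^+,B-1,n-1)\Big)\,\dd G(t).
\end{align*}
Then I would invoke Bayes' rule in the form $\Pr[\omega=1\mid L,t]\,\dd G(t)=\tfrac{L}{1+L}\,\dd F_1(t)$ and $\Pr[\omega=0\mid L,t]\,\dd G(t)=\tfrac{1}{1+L}\,\dd F_0(t)$, pull the $t$-independent continuation probabilities out of the integral, and use $\int_x^1\dd F_\omega(t)=1-F_\omega(x)$ (with the L'H\^opital convention \eqref{limQ} taking care of $x\ge Q$), which reproduces \eqref{pi_diff} exactly. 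As a cross-check I would also run the direct probabilistic decomposition: with prior $\Pr[\omega=1]=\tfrac{L}{1+L}$, in state $\omega$ the player invests with probability $1-F_\omega(x)$ and, conditional on $\omega$ and on having invested, the project completes with probability $\pi_\omega(L^+,B-1,n-1)$ and pays the investor $2\omega-1$; summing the $\omega=1$ and $\omega=0$ contributions again yields \eqref{pi_diff}.

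The step I expect to be the main obstacle is justifying that, conditional on the current player having invested and on $\omega$, the completion probability of the residual subgame is exactly $\pi_\omega(L^+,B-1,n-1)$. This relies on two ingredients already in force: (i) signals and qualities are i.i.d.\ across players given $\omega$, so the event ``current type $>x$'' conveys no information about future types beyond the public update $L\mapsto L^+$; and (ii) by the Markov property of the equilibrium, after an investment the remaining $n-1$ players play precisely the MPE strategies of state $(L^+,B-1,n-1)$, whose induced completion probability is $\pi_\omega(L^+,B-1,n-1)$ by definition. Once this is in place, the remainder is the Bayes/integration bookkeeping above; I would also note in passing that regularity of $(L,B,n)$, hence of the relevant successor states, is what makes every $\pi_\omega$ appearing here unambiguous.
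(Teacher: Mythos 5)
Your proposal is correct and matches the paper: the ``cross-check'' direct probabilistic decomposition (condition on $\omega$, multiply the prior $\frac{L}{1+L}$ or $\frac{1}{1+L}$ by the investment probability $1-F_\omega(x)$ and the completion probability $\pi_\omega(L^+,B-1,n-1)$, with payoff $\pm 1$) is precisely the paper's own proof. Your primary route, integrating the type-conditional utility against the prior type law via $\Pr[\omega=1\mid L,t]\,\dd G(t)=\tfrac{L}{1+L}\,\dd F_1(t)$, is just a more detailed expansion of the same computation, so no substantive difference.
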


\begin{proof}
The expectation for declining is $0$, so $U^{\mathcal{S}(x)}(L, B, n)$ is the expectation for the remaining action, investing. The probability that the player invests is $1 - F_\omega(x)$, and so, when $\omega=1$, which occurs with probability $\frac{L}{1+L}$, the player gains $1$ with probability $$[1 - F_1(x)] \pi_1\big(L \frac{1 - F_1(x)}{1 - F_0(x)}, B - 1, n - 1\big)$$ Similarly, when $\omega=0$,  which occurs with probability $\frac{1}{1+L}$, the player loses $1$ with probability $$[1 - F_0(x)] \pi_0\big(L \frac{1 - F_1(x)}{1 - F_0(x)}, B - 1, n - 1\big)$$
\end{proof}

\subsection{Equally-Optimal Strategies}
\label{equal}

When a player has several strategies with maximal type-independent utility expectation, we show that the player nevertheless prefers one of these strategies, when we assume, according to our model's refinement, that the player always mixes between these strategies.

\begin{proposition}[Tie-Breaking Equally-Optimal Strategies]
\label{equally-optimal}
Excluding at irregular states $(L, B, n)$, assume two or more strategies $\mathcal{S}(x_1), \ldots, \mathcal{S}(x_k)$ have equal type-independent utility expectation. A player who mixes between the strategies, assigning probability at least $\epsilon > 0$ to at least two of them, maximizes her utility expectation by assigning maximal probability $1 - \epsilon$ to strategy $\mathcal{S}(x_i)$ that maximizes the {\em discriminator}
\s\begin{align}
\label{discriminator}
D(x_i) &:=  \max_{j \in [n], j \neq i} \Big\{L\pi_1\big(L \frac{1 - F_1(x_i)}{1 - F_0(x_i)}, B - 1, n - 1\big) [1 - F_1(x_j)] -  \pi_0\big(L \frac{1 - F_1(x_i)}{1 - F_0(x_i)}, B - 1, n - 1\big) [1 - F_0(x_j)]\Big\}
\end{align}\n
\end{proposition}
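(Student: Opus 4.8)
The plan is to write the type‑independent utility of an arbitrary mix over the tied thresholds, recognise it as a degree‑one homogeneous function of the public update, and expand it to first order around each ``pure'' point; the discriminator will fall out of the first‑order terms. If the player plays $\mathcal{S}(x_l)$ with probability $p_l$ ($\sum_l p_l=1$), the only datum the continuation receives is her action, whose type‑independent probability in state $\omega$ is $\Phi_\omega(p):=\sum_l p_l[1-F_\omega(x_l)]$; hence an investment updates the public likelihood to $L^{+}(p)=L\,\Phi_1(p)/\Phi_0(p)$ (independently of her private type), and exactly as in Proposition~\ref{expectation}, $U(p)=\tfrac{L}{1+L}\Phi_1(p)\,\pi_1(L^{+}(p),B-1,n-1)-\tfrac{1}{1+L}\Phi_0(p)\,\pi_0(L^{+}(p),B-1,n-1)$. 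Using $L\,\Phi_1(p)=L^{+}(p)\,\Phi_0(p)$ this becomes $U(p)=\tfrac{1}{1+L}\,G\!\big(\Phi_1(p),\Phi_0(p)\big)$ with $G(u,v):=v\,g(Lu/v)$ and $g(\ell):=\ell\,\pi_1(\ell,B-1,n-1)-\pi_0(\ell,B-1,n-1)$. Note that $G$ is positively homogeneous of degree one, that $\big(\Phi_1(p),\Phi_0(p)\big)$ ranges over the convex hull of the pure points $w_l:=\big(1-F_1(x_l),1-F_0(x_l)\big)$, and that $G(w_l)=(1+L)\,U^{\mathcal{S}(x_l)}=(1+L)\,U^{*}$ takes the common optimal value at every $w_l$, where $U^{*}$ is the shared utility in the hypothesis (which, by Proposition~\ref{maximal}, is the maximum over all Markovian strategies).

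\textbf{The gradient identity (the crux).} The heart of the proof is $\nabla G(w_i)=\big(L\,\pi_1(L_i^{+},B-1,n-1),\,-\pi_0(L_i^{+},B-1,n-1)\big)$, where $L_i^{+}:=L^{+}(L,x_i)$. Because each $x_i$ maximises $x\mapsto U^{\mathcal{S}(x)}(L,B,n)=\tfrac{1-F_0(x)}{1+L}\,g\!\big(L\tfrac{1-F_1(x)}{1-F_0(x)},B-1,n-1\big)$, in the generic case it is an interior stationary point; differentiating this expression in $x$, equating to zero at $x_i$, substituting $f_1(x_i)=\tfrac{x_i}{1-x_i}f_0(x_i)$ from \eqref{mlrp}, and eliminating $L\tfrac{x_i}{1-x_i}$ through the Threshold Indifference Condition \eqref{threshold} collapses everything to $g'(L_i^{+},B-1,n-1)=\pi_1(L_i^{+},B-1,n-1)$. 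Differentiating $G(u,v)=v\,g(Lu/v)$ then gives $\partial_uG(w_i)=L\,g'(L_i^{+},\cdot)=L\,\pi_1(L_i^{+},\cdot)$ and $\partial_vG(w_i)=g(L_i^{+},\cdot)-L_i^{+}g'(L_i^{+},\cdot)=-\pi_0(L_i^{+},\cdot)$, which is the claim.

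\textbf{From the gradient to the discriminator and the optimum.} By Euler's identity $\nabla G(w_i)\cdot w_i=G(w_i)=(1+L)U^{*}$, so the directional derivative of $p\mapsto U(p)$ at the vertex $e_i$ toward $e_j$ equals $\tfrac{1}{1+L}\big(\nabla G(w_i)\cdot w_j-\nabla G(w_i)\cdot w_i\big)=V(i\to j)-U^{*}$, where $(1+L)V(i\to j):=L\,\pi_1(L_i^{+},B-1,n-1)[1-F_1(x_j)]-\pi_0(L_i^{+},B-1,n-1)[1-F_0(x_j)]$; thus $D(x_i)=(1+L)\max_{j\neq i}V(i\to j)$. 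Since no Markovian strategy — mixed ones included — can exceed $U^{*}$ (Proposition~\ref{maximal}), we have $U(p)\le U^{*}$ throughout the feasible set, so $V(i\to j)\le U^{*}$ for all $i,j$, and a feasible mix near $e_i$ with deviation masses $\delta_j\ge0$ on $e_j$ ($j\neq i$) incurs loss $U^{*}-U(p)=(1-p_i)\cdot\big[\text{convex combination of the nonnegative }U^{*}-V(i\to j)\big]+o(1-p_i)$. Every feasible mix has $\max_m p_m\le 1-\epsilon$ (two coordinates are $\ge\epsilon$), with equality exactly for mixes placing $1-\epsilon$ on one vertex $e_i$ and $\epsilon$ on one other $e_j$; so in the non‑degenerate case $V(i\to j)<U^{*}$ for all $j\neq i$ the loss‑minimising feasible mix has this form, its least value near $e_i$ being $\epsilon\big(U^{*}-\tfrac{1}{1+L}D(x_i)\big)+o(\epsilon)$ — attained by putting the $\epsilon$ mass on a $j$ realising $D(x_i)$ — which is minimised over $i$ precisely by maximising $D(x_i)$; this is the assertion, for all sufficiently small $\epsilon$ (the index set being finite).

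\textbf{Main obstacle.} I expect the gradient identity of the second step — squeezing $g'(L_i^{+},B-1,n-1)=\pi_1(L_i^{+},B-1,n-1)$ out of stationarity of $U^{\mathcal{S}(\cdot)}$, \eqref{mlrp}, and \eqref{threshold} — to be where the real work lies. The remaining care concerns regularity (the argument differentiates $\pi_\omega$ at $L_i^{+}$, licit away from cascade thresholds) and the degenerate configurations ($U^{*}=0$; an optimal threshold equal to ``invest unconditionally'' or ``decline unconditionally''; a flat stretch of $U^{\mathcal{S}(\cdot)}$, i.e.\ $V(i\to j)=U^{*}$), which either fall under the excluded irregular/isolated cases or are dispatched by a short direct argument.
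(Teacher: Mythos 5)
Your overall architecture is the same as the paper's: expand the mixed-strategy payoff to first order in the deviation mass around each pure vertex, identify the best first-order directional gain with $D(x_i)$, and conclude that for small $\epsilon$ the optimal feasible mix places $1-\epsilon$ on the vertex with the largest discriminator. The closing section of your argument is a faithful (indeed somewhat more careful) rendering of what the paper does. The gap is precisely the step you flag as the crux: the identity $g'(L_i^{+})=\pi_1(L_i^{+},B-1,n-1)$ does not follow from the inputs you cite. First, the equilibrium threshold $x_i$ is not an interior stationary point of $x\mapsto U^{\mathcal{S}(x)}(L,B,n)$: by Proposition \ref{maximal} and Corollary \ref{c} it is the best element of the \emph{discrete} set of playable thresholds (the solutions of \eqref{threshold}), so there is no first-order condition in $x$ on $U^{\mathcal{S}(\cdot)}$ to exploit. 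Second, even differentiating formally, \eqref{mlrp} together with \eqref{threshold} annihilates only the ``own-cutoff'' part $-Lf_1(x_i)\pi_1(L_i^{+},\cdot)+f_0(x_i)\pi_0(L_i^{+},\cdot)$ of $\frac{d}{dx}U^{\mathcal{S}(x)}$; the remaining ``inference'' part is
\[
\frac{dL^{+}}{dx}\Big|_{x_i}\Big[L\,(1-F_1(x_i))\,\pi_1'(L_i^{+},\cdot)-(1-F_0(x_i))\,\pi_0'(L_i^{+},\cdot)\Big]
=\frac{dL^{+}}{dx}\Big|_{x_i}\,(1-F_0(x_i))\Big[g'(L_i^{+})-\pi_1(L_i^{+},\cdot)\Big],
\]
which is, up to a positive factor, exactly the quantity you are trying to show vanishes. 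Stationarity of $U^{\mathcal{S}(\cdot)}$ at $x_i$ is therefore \emph{equivalent} to your gradient identity, not a route to it. Third, the computation differentiates $\pi_\omega$ in $L$, whereas Corollary \ref{recurrence} gives only continuity; $\pi_\omega$ generically has kinks where downstream equilibrium thresholds switch branches.

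The paper never needs this identity. It writes out $U_{i,j}(\epsilon)$ for a two-point mix explicitly, observes that the payoff is affine in the mixing weights through $\big(\Phi_1(p),\Phi_0(p)\big)$ with the continuation values $\pi_\omega(L^{+}(\epsilon),B-1,n-1)$ as coefficients, and sends $\pi_\omega(L^{+}(\epsilon),\cdot)\to\pi_\omega(L_i^{+},\cdot)$ using continuity alone; the $\epsilon$-coefficient of the $j$-dependent part is then $\tfrac{1}{1+L}\big\{L[1-F_1(x_j)]\pi_1(L_i^{+},\cdot)-[1-F_0(x_j)]\pi_0(L_i^{+},\cdot)\big\}$, whose maximum over $j$ is $\tfrac{1}{1+L}D(x_i)$, and the ranking of the vertices follows without any derivative of $\pi_\omega$. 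To your credit, the concern motivating your gradient identity — that the $(1-\epsilon)$-weighted term must contribute nothing at order $\epsilon$, i.e.\ your $A_i'(0)=0$ — is a real subtlety that the paper's limit passage treats lightly; but the paper's resolution is to invoke continuity only, and your proposed proof of the identity does not close that hole.
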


\begin{proof}
See Appendix.
\end{proof}

\subsection{Equilibrium Characterization and Irregular States}
\label{regular}

We summarize the above sections with a characterization of the MPE of a sequential fundraising.

\begin{corollary}[Equilibrium Characterization]
\label{c}
At every state $(L, B, n)$ that is not irregular, there is a unique pure MPE in which, in every subgame, all players play threshold strategies. The playable threshold strategies $\mathcal{S}(x)$ are those that satisfy \eqref{threshold}, and the strategy played in MPE is the one with highest type-independent utility expectation, as given in \eqref{pi_diff}. In case of a tie between strategies for maximal utility expectation, the maximal-utility strategy with highest value of discriminator \eqref{discriminator} is played in MPE. The state is irregular if \eqref{discriminator} fails to discriminate between two or more equally-optimal strategies in any subgame, or if the uniquely-optimal strategy's threshold is not continuous at $L$ for the given $B, n$.
\end{corollary}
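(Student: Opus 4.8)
The plan is to prove this by induction on $n$, the number of remaining players, establishing simultaneously (i) that a unique MPE exists and is in threshold strategies, and (ii) that the probability of completion $\pi_\omega(L,B,n)$ is uniquely determined and continuous in $L$ — the two statements being linked through the recurrence \eqref{sigma1}. For the base cases I would use $n=0$ with $B>0$ (then $\pi_\omega\equiv 0$ and the empty MPE is trivially unique) together with $B\le 1$, which is Proposition \ref{B=1}: it exhibits the unique threshold $\sigma(L,B,n)=1/(1+L)$ and, with $\pi_\omega\equiv 1$ for $B\le 0$, a unique $\pi_\omega$; all these states are regular.

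For the inductive step, fix $(L,B,n)$ with $2\le B\le n$ and assume the corollary for all states with at most $n-1$ remaining players. By Theorem \ref{threshold-strategy} the first player's MPE strategy must be a threshold strategy $\mathcal{S}(x)$, and applying the inductive hypothesis to every continuation makes the whole MPE a profile of threshold strategies. For a candidate threshold $x$, the induced quantities $U^{\mathcal{S}(x)}(L,B,n)$ and $\pi_\omega^{\mathcal{S}(x)}(L,B,n)$ are computed, via \eqref{sigma1}, from $\pi_\omega$ at the two successor states $(L^+(L,x),B-1,n-1)$ and $(L^-(L,x),B,n-1)$ of \eqref{L+}--\eqref{L-}; by the inductive hypothesis these are well defined exactly when those successors are regular. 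Accordingly I would declare $(L,B,n)$ irregular whenever some reachable successor is irregular — this is what ``in any subgame'' in the statement captures — and otherwise proceed with all successors regular, so that $U^{\mathcal{S}(x)}$ in the form \eqref{pi_diff} of Proposition \ref{expectation} is a continuous function of $x$ (continuity of $\pi_\omega$ and of $F_\omega$, via Corollary \ref{recurrence}). Next I would invoke the deviation arguments in the proof of Theorem \ref{threshold-strategy}: $\mathcal{S}(x)$ is playable (no type strictly deviates) iff \eqref{invest} holds with equality at $t=x$, i.e. iff $x$ solves \eqref{threshold}, the corner strategies $x\le 1-Q$ (delegation — every type invests) and $x\ge Q$ (every type declines) being included through the L'H\^opital conventions \eqref{limQ}--\eqref{lim1-Q}. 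Since in \eqref{threshold} the left side is continuous and strictly increasing in $x$ while the right side is continuous (Corollary \ref{recurrence}, Lemma \ref{monotone}), an intermediate-value argument yields at least one playable threshold; the set of playable strategies is thus nonempty and compact, Proposition \ref{maximal} says the MPE strategy is a playable one maximizing \eqref{pi_diff}, and continuity guarantees a maximizer exists.

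If the maximizer of \eqref{pi_diff} over playable thresholds is unique, it is the unique pure MPE strategy of the first player and, with the inductive hypothesis for the continuation, $(L,B,n)$ has a unique pure MPE. If several playable thresholds $\mathcal{S}(x_1),\dots,\mathcal{S}(x_k)$ tie for this maximum, the refinement forces the player to put weight at least $\epsilon$ on at least two of them, and Proposition \ref{equally-optimal} then shows she strictly prefers weight $1-\epsilon$ on some $\mathcal{S}(x_i)$ maximizing the discriminator \eqref{discriminator}: if that maximizer is unique, the first player's behavior is pinned down up to the arbitrarily small $\epsilon$, while if two or more tied strategies share the maximal value of \eqref{discriminator} then, by definition, $(L,B,n)$ is irregular — matching the statement. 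In every regular case the first player's strategy $\mathcal{S}(x^{*})$ is uniquely determined, whence \eqref{sigma1} (equivalently \eqref{pi-recurrence}) gives a unique, $L$-continuous $\pi_\omega(L,B,n)$, which closes the induction. The step I expect to be delicate is not any calculation but the bookkeeping around ``regular'': the notion is being defined by the very induction that uses it (irregularity is phrased through subgames), so one has to check it is well posed; and one must make sure the delegation / pure-decline thresholds are genuinely covered by ``satisfies \eqref{threshold}'' under the limiting conventions, and that the existence of a playable threshold — hence of an MPE at all — really does need the intermediate-value step rather than being automatic.
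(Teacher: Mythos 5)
Your proposal is correct and takes essentially the same route as the paper: the paper offers no separate proof of Corollary \ref{c}, presenting it as a direct summary of Theorem \ref{threshold-strategy}, Propositions \ref{maximal}, \ref{expectation} and \ref{equally-optimal}, and Corollaries \ref{recurrence} and \ref{irregular}, and your argument assembles exactly these ingredients. The only difference is that you make explicit what the paper leaves tacit --- the induction on $n$ that ties regularity of a state to regularity of its successor states, and the intermediate-value step establishing that a playable threshold (hence an MPE) exists --- which is a useful tightening rather than a departure.
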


As the corollary states, an irregular state is one where any subgame has multiple MPE's, or if an arbitrarily-small change in $L$ results in a finite change in the equilibrium threshold (as, e.g., in Figure \ref{fig:6580} at $L \simeq 2.2$). Consequently, the following is a recursive characterization of these states.

\begin{corollary}[Irregular States]
\label{irregular}
The state $(L, B, n)$ is irregular
\begin{enumerate}
\item Never, for $B \leq 1$.
\item \label{ii} If it has multiple MPE's, with several equally-optimal strategies whose discriminators \eqref{discriminator} are equal.
\item \label{iii} If it has a single MPE, and the first player plays $\mathcal{S}(x)$, where $x$ is not continuous at $L$ for the given $B, n$.
\item \label{iv} If it has an MPE where the first player plays $\mathcal{S}(x)$, and any of the states $(L \frac{1 - F_1(x)}{1 - F_0(x)}, B-1, n-1)$ or $(L \frac{F_1(x)}{F_0(x)}, B, n-1)$ is irregular.
\end{enumerate}
\end{corollary}

The irregular states ultimately depend on the quality distribution $\bm{q}$, but this characterization shows why they are most probably rare, and, though this may be hard to prove, often do not occur at all. We conjecture that their number is finite for bounded $n$, basing ourselves on the following reasoning: Every equilibrium threshold is a solution of \eqref{threshold}, an equation which is not under-specified, and so typically has a finite number of solutions, whose number is further reduced by ranking them, first, by utility expectation and second, by the discriminator \eqref{discriminator}. Corollary\ref{irregular}(\ref{ii}) and Corollary\ref{irregular}(\ref{iii}) stem from the discriminator not excluding all but one solution of \eqref{threshold}, whether because there remains more than one equally-optimal solution, or whether an infinitesimal change in $L$ changes which solution is played in equilibrium. Corollary \ref{irregular}(\ref{iv}) shows that the number of irregular states at least doubles when the number of players is increased by $1$.

\section{Solution Properties}
\label{results}

\subsection{Herding and Cascades}
\label{cascades}

Herding describes a situation where all types have the same optimal action. A cascade occurs when herding persists among players, so that when a cascade starts, all future players decline (down-cascade) or all invest (up-cascade).

%

An up-cascade occurs wherever the public likelihood is at least $\frac{Q}{1-Q}$.
\begin{theorem}[Up-Cascade]
\label{up-cascade}
For every $B, n$ and $B \leq n$, $\sigma(L, B, n) \leq 1 - Q$ and an up-cascade is in progress iff $L \geq \frac{Q}{1 - Q}$.
\end{theorem}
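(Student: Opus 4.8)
I would prove both directions of the biconditional simultaneously, by induction on $B$ with $n\ge B$ quantified, and I would actually prove the slightly stronger statement: if $L\ge\frac{Q}{1-Q}$ then the unique MPE threshold satisfies $\sigma(L,B,n)\le 1-Q$ \emph{and} $\pi_1(L,B,n)=\pi_0(L,B,n)=1$. Having $\pi_\omega\equiv 1$ on the whole cascade region is what lets \eqref{threshold} ``collapse'' in the inductive step, and it is also exactly what the converse direction needs. Throughout I use that, since the type law has no atoms and the type support is $[1-Q,1-R]\cup[R,Q]$, any threshold strategy $\mathcal{S}(x)$ with $x\le 1-Q$ has $F_1(x)=F_0(x)=0$, so its action is a.s.\ uninformative and the posterior likelihood is unchanged, $L^+(L,x)=L$.

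\textbf{Base case and forward step.} For $B\le 1$, Proposition \ref{B=1} gives $\sigma(L,B,n)=\frac{1}{1+L}$, and the elementary equivalence $\frac{1}{1+L}\le 1-Q \Longleftrightarrow L(1-Q)\ge Q \Longleftrightarrow L\ge\frac{Q}{1-Q}$ shows the threshold is $\le 1-Q$; then a.s.\ every type invests, at least $n\ge B$ investments occur, and $\pi_\omega=1$. For the step, fix $B\ge 2$ and assume the claim at level $B-1$ for all $n'\ge B-1$. First I rule out a ``non-cascade'' playable threshold: if $\mathcal{S}(x)$ is playable with $x\in(1-Q,Q]$, then by stochastic dominance \eqref{dominance} (equivalently Lemma \ref{monotone}) the post-investment likelihood $L^+(L,x)=L\frac{1-F_1(x)}{1-F_0(x)}\ge L\ge\frac{Q}{1-Q}$, so the induction hypothesis forces $\pi_0(L^+,B-1,n-1)=\pi_1(L^+,B-1,n-1)=1$; then \eqref{threshold} reads $L\frac{x}{1-x}=1$, i.e.\ $x=\frac{1}{1+L}\le 1-Q$, contradicting $x>1-Q$. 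Hence every playable threshold has $x\le 1-Q$; for such $x$ the action is uninformative and $\pi_\omega(L,B-1,n-1)=1$ by the induction hypothesis, so \eqref{threshold} pins down the \emph{unique} value $x=\frac{1}{1+L}\le 1-Q$. Thus the state is regular, $\sigma(L,B,n)=\frac{1}{1+L}\le 1-Q$, and a.s.\ every type invests; chaining the induction hypothesis down the continuation $(L,B,n)\to(L,B-1,n-1)\to\cdots$ shows every reachable state up-herds, so an up-cascade is in progress and $\pi_\omega(L,B,n)=1$.

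\textbf{Converse.} Suppose $\sigma(L,B,n)\le 1-Q$ and an up-cascade is in progress. Then every remaining player invests, every reachable continuation state up-herds, and since $B\le n$ each such state has $\pi_\omega\equiv 1$; in particular, after the (uninformative) first investment the state is $(L,B-1,n-1)$ with $\pi_1=\pi_0=1$. The worst type $t=1-Q$ invests (strictly if $\sigma<1-Q$, with indifference if $\sigma=1-Q$), so its investment utility must be nonnegative; using $\pi_1=\pi_0=1$ this utility equals $\Pr[\omega=1\mid L,1-Q]-\Pr[\omega=0\mid L,1-Q]=\frac{L(1-Q)-Q}{L(1-Q)+Q}$, which is $\ge 0$ exactly when $L\ge\frac{Q}{1-Q}$.

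\textbf{Main obstacle.} The delicate point is the uniqueness argument inside the inductive step: one must exclude an equilibrium with threshold in $(1-Q,Q]$ (a genuine, partially-informative strategy sustained by social insurance), and the leverage is precisely that \emph{any} candidate threshold in that range can only push the public likelihood up ($L^+\ge L$), keeping the continuation inside the cascade region, where the induction hypothesis forces $\pi_0=\pi_1=1$ and makes \eqref{threshold} self-contradictory. The ancillary facts used are Lemma \ref{monotone}/\eqref{dominance} for $L^+\ge L$, the no-atoms assumption so that sub-$(1-Q)$ thresholds are a.s.\ ``everyone invests'', and Corollary \ref{c} to conclude that the unique playable threshold is the MPE threshold.
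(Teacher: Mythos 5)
Your proof is correct and follows essentially the same route as the paper's: induction on $B$, using the fact that once the continuation state stays in the region $L \geq \frac{Q}{1-Q}$ the induction hypothesis collapses \eqref{threshold} to $L\frac{x}{1-x}=1$, whence $x=\frac{1}{1+L}\leq 1-Q$. Your version is in fact more complete than the paper's: you explicitly justify, via \eqref{dominance}/Lemma \ref{monotone}, why any candidate threshold keeps the posterior inside the cascade region ($L^{+}\geq L$), a step the paper leaves implicit, and you supply the converse direction of the ``iff'', which the paper's proof omits entirely.
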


\begin{proof}
An up-cascade occurs iff the probability of completion is $1$. We prove that $\pi_\omega(L, B, n) = 1$ when $L \geq \frac{Q}{1 - Q}$ by induction on $B$. For $B = 0$ this is true for any $L$. Assume the theorem true up to $B-1$. Then, a threshold strategy $x$ must satisfy, by \eqref{threshold} and the induction hypothesis $L \frac{x}{1-x} = 1$. So
$$\frac{Q}{1 - Q} \leq L = \frac{1-x}{x}$$ Thus $x \leq 1-Q$.
\end{proof}

At the opposite end, $\sigma(L, B, n) \geq Q$ for a herd of decliners. Since the expectation of such a strategy is $0$, it can be optimal only if it is unique. Thus from \eqref{threshold}
\begin{corollary}[Herd Decline]
\label{herd-decline}$\sigma(L, B, n) \geq Q$, i.e., declining is optimal for all types iff
\begin{align*}
L \frac{Q}{1 - Q} \leq \frac{\pi_0(L \frac{Q}{1 - Q}, B - 1, n - 1)}{\pi_1(L\frac{Q}{1 - Q}, B - 1, n - 1)}
\end{align*}
\end{corollary}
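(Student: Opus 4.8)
The plan is to translate ``declining is optimal for all types'' into a statement about the equilibrium threshold and then feed it through the threshold indifference condition \eqref{threshold}, exploiting the L'H\^opital convention \eqref{limQ} for the off-path posterior.

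First I would observe that a herd of decliners at state $(L,B,n)$ is exactly a threshold strategy $\mathcal{S}(x)$ with $x \ge Q$: since every type $t$ lies in $[1-Q,1-R]\cup[R,Q]$ and hence $t \le Q$, a threshold at or above $Q$ has every type decline, and conversely if every type declines the threshold must be at least the supremum $Q$ of the type support. All such strategies are observationally equivalent (no type ever invests), so the posterior public likelihood following a deviating investment is the same for each of them, namely the L'H\^opital limit $L^+ = L\frac{1-F_1(x)}{1-F_0(x)} \to L\frac{Q}{1-Q}$ from \eqref{limQ}. By Proposition \ref{expectation} (with $1-F_\omega(x)=0$) such a strategy has type-independent expectation $0$, which is why, as noted before the statement, it can be the MPE strategy only if no other playable strategy does strictly better (Proposition \ref{maximal}).

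Next I would argue the equivalence. For the forward direction, if $\sigma(L,B,n)\ge Q$ then, by \eqref{invest} applied at the largest type $t=Q$ with this off-path posterior, non-deviation requires $L\frac{Q}{1-Q}\le \frac{\pi_0(L\frac{Q}{1-Q},B-1,n-1)}{\pi_1(L\frac{Q}{1-Q},B-1,n-1)}$; since the left-hand side of \eqref{invest} is increasing in $t$, this single inequality at $t=Q$ is also sufficient to make investing non-beneficial for every type, i.e., to make $\mathcal{S}(Q)$ playable. For the reverse direction, when the displayed inequality holds $\mathcal{S}(Q)$ is playable with expectation $0$, and it remains to show it is the MPE strategy, i.e., that no threshold strategy $\mathcal{S}(x)$ with $x<Q$ is playable with strictly positive expectation. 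Here I would use that, for any playable threshold $x$ (one satisfying \eqref{threshold}), Proposition \ref{expectation} rewrites the expectation as $\frac{1}{1+L}[1-F_0(x)]\,\pi_1\big(L\tfrac{1-F_1(x)}{1-F_0(x)},B-1,n-1\big)\big(L\tfrac{1-F_1(x)}{1-F_0(x)} - L\tfrac{x}{1-x}\big)$, which is nonnegative because MLRP \eqref{mlrp} gives $\frac{1-F_1(x)}{1-F_0(x)}\ge \frac{x}{1-x}$ (integrating $f_1(y)/f_0(y)=y/(1-y)>x/(1-x)$ over $y>x$), with strict inequality for $x<Q$; combined with the monotonicity of $L\frac{x}{1-x}$ and $L\frac{1-F_1(x)}{1-F_0(x)}$ in $x$ (Lemma \ref{monotone}) and of the completion ratio $\pi_0/\pi_1$ in its likelihood argument, a solution of \eqref{threshold} below $Q$ can exist only when the displayed inequality fails, which is the contrapositive.

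The main obstacle is this last uniqueness step --- ruling out a competing playable threshold below $Q$ with positive expectation --- since it is the only place that goes beyond a direct substitution into \eqref{threshold}; it rests on monotonicity of $\pi_0(\cdot,B-1,n-1)/\pi_1(\cdot,B-1,n-1)$ in the likelihood, which I would either invoke from the recurrence \eqref{pi-recurrence} or, if not yet available, establish by induction on $B$ alongside the continuity already proved in Corollary \ref{recurrence}. Everything else is the routine observation that the off-path belief after a universal decline is pinned down by \eqref{limQ} to $L\frac{Q}{1-Q}$, which is precisely the likelihood argument appearing in the statement.
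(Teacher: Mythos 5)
Your identification of herd decline with a threshold at or above the top of the type support, the use of the L'H\^opital convention \eqref{limQ} to pin the off-path posterior at $L\frac{Q}{1-Q}$, and the application of \eqref{invest} at the extreme type $t=Q$ (with monotonicity of $L\frac{t}{1-t}$ in $t$ handling all lower types) is exactly the paper's route: the paper's entire ``proof'' is the remark that such a strategy has expectation $0$, ``can be optimal only if it is unique,'' and that the condition then follows from \eqref{threshold}. So your forward direction, and your rewriting of the playable-threshold expectation as $\frac{L}{1+L}[1-F_0(x)]\,\pi_1(L^+,\cdot)\bigl(\frac{1-F_1(x)}{1-F_0(x)}-\frac{x}{1-x}\bigr)\ge 0$ via \eqref{ineq-f}, are sound and in fact more explicit than what the paper writes down.

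The gap is in your reverse direction, and you have correctly located it yourself: the step ``a solution of \eqref{threshold} below $Q$ can exist only when the displayed inequality fails'' requires that $x\mapsto L\frac{x}{1-x}-\pi_0(L^+(L,x),B-1,n-1)/\pi_1(L^+(L,x),B-1,n-1)$ be strictly increasing, which you derive from monotonicity of $\pi_0/\pi_1$ in the likelihood argument. That monotonicity is established nowhere in the paper and does not follow from the recurrence \eqref{pi-recurrence} in any routine way; it is at least as delicate as Conjecture \ref{pi-conj} ($\pi_1\ge\pi_0$), which the authors explicitly state they are unable to prove. Your proposed fallback (``establish by induction on $B$ alongside the continuity'') will not go through as stated, because the recurrence evaluates $\pi_\omega$ at \emph{two} different transformed likelihoods whose thresholds themselves vary with $L$, so the inductive hypothesis does not compose cleanly. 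To be fair, the paper does not close this direction either --- it substitutes the one-line observation that a zero-expectation strategy is optimal only if it is the unique playable one --- so your attempt is more honest about what is needed, but as written the sufficiency half of the ``iff'' is not proved.
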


In a down-cascade, all remaining players decline. Using Corollary \ref{herd-decline} repeatedly we get
\begin{corollary}[Down-Cascade]
\label{down-cascade}
For every $B, n$ and $B \leq n$, $\sigma(L, B, n) \geq Q$ and a down-cascade is in progress iff $L \leq \Big(\frac{1 - Q}{Q}\Big)^B$.
\end{corollary}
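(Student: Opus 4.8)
The statement is equivalent to the claim that, excluding irregular states, $\sigma(L,B,n)\ge Q$ (a herd of decliners) if and only if $L\le\big(\frac{1-Q}{Q}\big)^B$, for every $n\ge B$. Indeed, if $\sigma(L,B,n)\ge Q$ then the first player declines for every type, which is uninformative --- by \eqref{infer-decline} and $F_1(x)=F_0(x)=1$ for $x\ge Q$ we get $L^-=L$ --- so the state passes to $(L,B,n-1)$ with the same $L$ and $B$; the criterion re-applies and the herd propagates through all remaining players, i.e.\ a down-cascade is in progress. Conversely, if $L>\big(\frac{1-Q}{Q}\big)^B$ then $\sigma(L,B,n)<Q$, so the best type $Q$ (which lies in the support, with $f_{\bm q}(Q)>0$) invests with positive probability and no down-cascade is in progress. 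So it suffices to prove the $\sigma$-characterization, which I would do by induction on $B$, uniformly over all $n\ge B$.

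\textbf{Base case and inductive step.} For $B=1$, Proposition \ref{B=1} gives $\sigma(L,1,n)=\frac{1}{1+L}$, and $\frac{1}{1+L}\ge Q\iff L\le\frac{1-Q}{Q}=\big(\frac{1-Q}{Q}\big)^1$. For the step, assume the result for $B-1$, together with the stronger invariant that for $L>\big(\frac{1-Q}{Q}\big)^{B-1}$ one has $\pi_\omega(L,B-1,n)>0$ and $L\,\pi_1(L,B-1,n)>\pi_0(L,B-1,n)$, while for $L\le\big(\frac{1-Q}{Q}\big)^{B-1}$ one has $\pi_0=\pi_1=0$. By Corollary \ref{herd-decline}, $\sigma(L,B,n)\ge Q$ iff $L'\,\pi_1(L',B-1,n-1)\le\pi_0(L',B-1,n-1)$ where $L':=L\frac{Q}{1-Q}$ is the posterior likelihood after an investment (using \eqref{limQ}), and note $L\le\big(\frac{1-Q}{Q}\big)^B\iff L'\le\big(\frac{1-Q}{Q}\big)^{B-1}$. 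If $L\le\big(\frac{1-Q}{Q}\big)^B$, then $L'\le\big(\frac{1-Q}{Q}\big)^{B-1}$, so by the inductive hypothesis $\pi_0(L',B-1,n-1)=\pi_1(L',B-1,n-1)=0$ and the herd-decline inequality holds as $0\le 0$; hence $\sigma(L,B,n)\ge Q$, and expanding \eqref{pi-recurrence} (the invest branch vanishes since its continuation is $(L',B-1,n-1)$ with $\pi_\omega=0$, and the decline branch keeps $L$ at most $\big(\frac{1-Q}{Q}\big)^B$) a sub-induction on $n$ gives $\pi_\omega(L,B,n)=0$. If instead $L>\big(\frac{1-Q}{Q}\big)^B$, then $L'>\big(\frac{1-Q}{Q}\big)^{B-1}$, so by the invariant $L'\,\pi_1(L',B-1,n-1)>\pi_0(L',B-1,n-1)$, which violates the herd-decline inequality; hence $\sigma(L,B,n)<Q$.

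\textbf{Maintaining the invariant, and the hard part.} To close the induction I still need, for $L>\big(\frac{1-Q}{Q}\big)^B$, that $\pi_\omega(L,B,n)>0$ and $L\,\pi_1(L,B,n)>\pi_0(L,B,n)$. Writing $x=\sigma(L,B,n)<Q$, expand both sides by \eqref{pi-recurrence}: on the invest branch, the identity $L[1-F_1(x)]=L^+[1-F_0(x)]$ (with $L^+=L\frac{1-F_1(x)}{1-F_0(x)}$) reduces the comparison to $L^+\pi_1(L^+,B-1,n-1)$ versus $\pi_0(L^+,B-1,n-1)$, handled by the outer inductive invariant; on the decline branch, $L F_1(x)=L^-F_0(x)$ reduces it to $L^-\pi_1(L^-,B,n-1)$ versus $\pi_0(L^-,B,n-1)$, handled by the sub-induction on $n$; and an MLRP/weighted-average argument --- using $\frac{f_1(y)}{f_0(y)}=\frac{y}{1-y}$ from \eqref{mlrp}, so $(1-x)f_1(y)-xf_0(y)=f_0(y)\frac{y-x}{1-y}>0$ on $(x,Q]$ --- yields $1-F_\omega(x)>0$ and the strictness in at least one branch. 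The main obstacle is precisely this positivity just above the threshold: one must show not merely that the type-$Q$ player invests but that the continuation she triggers is itself outside its own cascade region, so that $\pi_\omega$ stays positive as $L$ decreases to $\big(\frac{1-Q}{Q}\big)^B$. Closely related is the degenerate bookkeeping at states where every continuation following an investment is doomed: there \emph{every} strategy has expected utility $0$, so one must argue that the decline-all strategy is nonetheless the selected MPE and that the state is not declared irregular in the sense of Corollary \ref{irregular} --- this is the delicate interaction with the tie-breaking refinement that I would need to pin down carefully.
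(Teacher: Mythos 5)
Your approach is the same as the paper's: the paper proves Corollary \ref{down-cascade} in one sentence, by "using Corollary \ref{herd-decline} repeatedly," and your induction on $B$ with the pivot $L\le\big(\tfrac{1-Q}{Q}\big)^B\iff L'\le\big(\tfrac{1-Q}{Q}\big)^{B-1}$ for $L'=L\tfrac{Q}{1-Q}$ is exactly the iteration the paper has in mind, made explicit and carried uniformly over $n\ge B$. Your reduction of "cascade in progress" to the per-state criterion via the uninformativeness of a unanimous decline ($L^-=L$ when $F_1(x)=F_0(x)=1$ for $x\ge Q$) is also the intended reading.

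Where you go beyond the paper is in naming two genuine difficulties that its one-line proof silently skips, and you are right that both are real. First, the "only if" direction of Corollary \ref{herd-decline} needs the sign invariant $L\,\pi_1(L,B-1,n-1)>\pi_0(L,B-1,n-1)$ for $L$ above the cascade bound, and the paper never establishes this for general $B<n$; it is of the same character as Conjecture \ref{pi-conj}, which the authors explicitly admit they cannot prove (in the unanimity case it does follow from \eqref{una-complete} and \eqref{dominance}, but not in general). Second, in the "if" direction, once the post-investment continuation $(L',B-1,n-1)$ is itself a dead cascade state with $\pi_0=\pi_1=0$, the indifference condition \eqref{threshold} reads $L\frac{x}{1-x}=\frac{0}{0}$, every strategy has utility $0$, and the discriminator \eqref{discriminator} evaluates to $0$ for all candidates — so by the letter of Corollary \ref{irregular} these states would be irregular, which sits awkwardly with the corollary asserting $\sigma(L,B,n)\ge Q$ there. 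So: correct skeleton, identical to the paper's, and the residual gaps you flag are gaps in the paper's own argument rather than defects introduced by your proposal; a fully rigorous version would need either the invariant you state or a direct argument that declining weakly dominates when all completing continuations have nonpositive conditional value.
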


For herding on investments, a necessary condition is \eqref{threshold} with $x \leq 1 - Q$. It is not sufficient, as it may not have maximal utility expectation.
%
\begin{corollary}[Herd Invest]
\label{herd-invest}
If $\sigma(L, B,  n) \leq 1 - Q$ then
\begin{align*}
L \frac{1 - Q}{Q} \geq \frac{\pi_0(L, B - 1, n - 1)}{\pi_1(L, B - 1, n - 1)}
\end{align*}
\end{corollary}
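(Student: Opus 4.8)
The plan is to read the claimed inequality straight off the Threshold Indifference Condition \eqref{threshold}, using the fact that an \emph{unconditional} investment conveys no information. At a regular state, Corollary~\ref{c} tells us the MPE strategy is a threshold strategy $\mathcal S(x)$ with $x=\sigma(L,B,n)$, and that this $x$ satisfies \eqref{threshold}. Assume the hypothesis, $x\le 1-Q$. Since the type distribution is supported on $[1-Q,1-R]\cup[R,Q]$ and has no atoms (as $f_{\bm q}$ has none), $F_1(x)=F_0(x)=0$ for every $x\le 1-Q$; hence $\frac{1-F_1(x)}{1-F_0(x)}=1$, i.e.\ the posterior public likelihood after a forced investment is still $L$. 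Substituting this into \eqref{threshold} collapses it to
\[
L\,\frac{x}{1-x}=\frac{\pi_0(L,B-1,n-1)}{\pi_1(L,B-1,n-1)},
\]
and in particular $\pi_1(L,B-1,n-1)>0$, so the right-hand side of the corollary's display is well-defined.

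Now, since $t\mapsto t/(1-t)$ is strictly increasing on $(0,1)$ and $x\le 1-Q$, we have $\frac{x}{1-x}\le\frac{1-Q}{Q}$; multiplying by $L$ and using the previous display gives $L\,\frac{1-Q}{Q}\ge L\,\frac{x}{1-x}=\frac{\pi_0(L,B-1,n-1)}{\pi_1(L,B-1,n-1)}$, which is exactly the assertion of the corollary.

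There is essentially no obstacle here; the only step needing care is the evaluation $\frac{1-F_1(x)}{1-F_0(x)}=1$ for $x\le 1-Q$, which is where the boundedness of the type support enters, and which is the bottom-end mirror of the update factor $\frac{Q}{1-Q}$ used in Corollary~\ref{herd-decline}. Note, however, that — unlike the herd-decline case — only necessity holds: a threshold $x\le 1-Q$ solving \eqref{threshold} need not be the one of maximal type-independent utility expectation (Proposition~\ref{expectation}) among all solutions of \eqref{threshold}, so the displayed inequality is necessary but not sufficient for herding on investment.
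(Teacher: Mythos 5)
Your proof is correct and follows exactly the route the paper intends: the paper derives this corollary directly from the threshold indifference condition \eqref{threshold} together with the observation that a threshold $x\le 1-Q$ lies below the type support, so $F_1(x)=F_0(x)=0$, the investment update factor is $1$, and $\frac{x}{1-x}\le\frac{1-Q}{Q}$. Your closing remark about necessity-but-not-sufficiency also matches the paper's own caveat preceding the corollary.
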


\subsection{Information Aggregation}

Information aggregation, or learning, occurs when the state of the world $\omega$ becomes known with higher probability, or likelihood. So we ask, what is the public likelihood $L_{end} := L_n$, at  completion of the fundraising?

To bound $L_{end}$, we observe that information aggregation stops in an up-cascade. Therefore, the public likelihood before the last player's action $L_{n-1}$ is bounded above by $\frac{Q}{1-Q}$, the up-cascade limit by Theorem \ref{up-cascade}. The inference from the last player's investment is bounded by $\frac{Q}{1-Q}$, by \eqref{limQ} and Lemma \ref{monotone}. It follows

\begin{corollary}[Learning Bound]
\label{Lend}
Let $B=n$. In equilibrium, $L_{end} < \Big(\frac{Q}{1-Q}\Big)^2$.
\end{corollary}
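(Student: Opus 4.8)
The plan is to track how the public likelihood grows along the (unique) equilibrium path and to cut the growth off using the up‑cascade barrier from Theorem~\ref{up-cascade}. Since $B=n$ is a unanimity game, a single decline makes $B>n$ and freezes the probability of completion at $0$; so, conditioning on the fundraising completing, every player $1,\dots,n$ invests, and the public likelihood passes through $L_0\to L_1\to\cdots\to L_n=L_{end}$ with, by \eqref{infer-invest}, $L_i=L_{i-1}\,\frac{1-F_1(x_i)}{1-F_0(x_i)}$, where $x_i$ is the threshold player $i$ plays in equilibrium. First I would record the one‑step bound: by Lemma~\ref{monotone} the multiplier $\frac{1-F_1(x)}{1-F_0(x)}$ is nondecreasing in $x$ and, together with \eqref{limQ}, never exceeds $\frac{Q}{1-Q}$; moreover along a completing path every $x_i<Q$ (a threshold $\ge Q$ means the player declines for all types), so the multiplier is strictly positive, and $x_i\le 1-Q$ forces $F_1(x_i)=F_0(x_i)=0$, i.e.\ a completely uninformative investment.

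Next I would show that the factor $\frac{Q}{1-Q}$ is collected at most twice. By Theorem~\ref{up-cascade}, once the public likelihood reaches $\frac{Q}{1-Q}$ the game is in an up‑cascade: the threshold is $\le 1-Q$, every remaining player invests regardless of type, and by the previous paragraph each such investment leaves the likelihood unchanged. Assuming $L_0<\frac{Q}{1-Q}$, let $j$ be the largest index in $\{0,\dots,n-1\}$ with $L_j<\frac{Q}{1-Q}$. If $j<n-1$ then $L_{j+1}\ge\frac{Q}{1-Q}$ starts an up‑cascade and $L_n=L_{j+1}$; if $j=n-1$ then $L_n$ is obtained from $L_{n-1}$ by a single investment step. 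In both cases
\[
L_{end}=L_n=L_j\cdot\frac{1-F_1(x_{j+1})}{1-F_0(x_{j+1})}<\frac{Q}{1-Q}\cdot\frac{Q}{1-Q}=\Big(\frac{Q}{1-Q}\Big)^2 ,
\]
using $L_j<\frac{Q}{1-Q}$ together with the one‑step bound (the first inequality being strict because the multiplier is positive).

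The routine ingredients (the update rule \eqref{infer-invest} and the one‑step bound via Lemma~\ref{monotone} and \eqref{limQ}) are immediate; I expect the one place needing care to be the claim that once the likelihood crosses $\frac{Q}{1-Q}$ it is frozen, which is exactly Theorem~\ref{up-cascade} combined with the observation that a threshold $\le 1-Q$ makes the ensuing investment uninformative, after which the ``last index below the barrier'' bookkeeping closes the argument. The other subtlety is the degenerate case $L_0\ge\frac{Q}{1-Q}$: then the game is an up‑cascade from the outset, nothing is aggregated, and $L_{end}=L_0$, so the bound is to be read in the informative regime $L_0<\frac{Q}{1-Q}$ (equivalently, the sharp statement is $L_{end}<\max\{L_0,(Q/(1-Q))^2\}$).
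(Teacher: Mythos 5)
Your proof is correct and follows essentially the same route as the paper's: the paper likewise combines the up-cascade barrier of Theorem~\ref{up-cascade} with the one-step inference bound $\frac{1-F_1(x)}{1-F_0(x)} \le \frac{Q}{1-Q}$ from Lemma~\ref{monotone} and \eqref{limQ}, observing that the likelihood before the last informative investment is below the cascade limit and that one further investment contributes at most one more factor of $\frac{Q}{1-Q}$. Your ``last index below the barrier'' bookkeeping and your remark on the degenerate case $L_0 \ge \frac{Q}{1-Q}$ are somewhat more careful than the paper's two-line argument, but the underlying idea is identical.
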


Furthermore, if the quality distribution is such that cascades cannot start (see below, the condition of Theorem \ref{delegation}(\ref{can-delegate})), then the bound is even lower: $L_{end} < \frac{Q}{1-Q}$.

\subsection{Example: Uniformly-Dense Qualities}
\label{uniform}

The simplest assumption we can make about the distribution of qualities is that they are uniformly distributed between their lower bound $R$ and upper bound $Q$. Here, we work out the resulting distributions and player behavior in some sequential fundraising games.

 Let qualities be uniformly dense, i.e., for $\frac{1}{2} \leq R < Q < 1$
\begin{align}
\label{pdf-u}
f_{\bm{q}}(y) &= \left\{  \begin{array}{ll}
\frac{1}{Q - R} & y \in [R, Q] \\
0 & $otherwise$
\end{array} \right.
\end{align}

Substituting \eqref{pdf-u} in \eqref{pdf-1} and \eqref{pdf-0}, we have $f_1(y) = \frac{y}{Q - R}$, and $f_0(y) = \frac{1 - y}{Q - R}$, so, for $y \in [1 - Q, Q]$
\begin{align}
\label{below-1}
F_1(y) &= \int\limits_{1-Q}^y f_1(z) dz =  \left\{  \begin{array}{ll}
 \frac{y^2 - (1-Q)^2}{2(Q - R)} & y \leq 1 - R \\
1 - \frac{Q + R}{2} & 1-R \leq y \leq R \\
1 - \frac{Q^2 - y^2}{2(Q - R)} & y \geq R
\end{array} \right. \\
\label{below-0}
F_0(y) &= \int\limits_{1-Q}^y f_0(z) dz = \left\{  \begin{array}{ll}
 \frac{Q^2 - (1-y)^2}{2(Q - R)} & y \leq 1 - R \\
\frac{Q + R}{2} & 1-R \leq y \leq R \\
1 - \frac{(1-y)^2 - (1-Q)^2}{2(Q - R)} & y \geq R
\end{array} \right.
\end{align}

\begin{figure}[tb]
\centering
\begin{minipage}{.5\textwidth}
  \centering
  \includegraphics[width=0.9\linewidth]{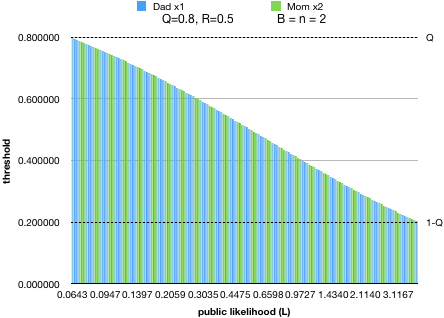}
  \caption{$n=B=2, \bm{q} \sim U(0.5,0.8)$}
  \label{fig:5080}
\end{minipage}%
\begin{minipage}{.5\textwidth}
  \centering
  \includegraphics[width=0.9\linewidth]{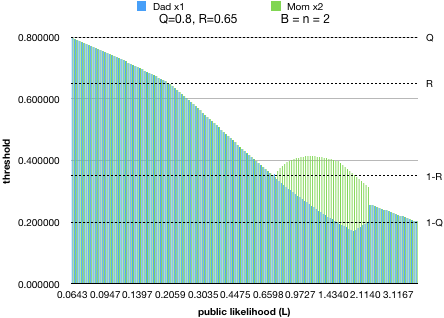}
  \caption{$n=B=2, \bm{q} \sim U(0.65,0.8)$}
  \label{fig:6580}
\end{minipage}
\centering
\begin{minipage}{.5\textwidth}
  \centering
  \includegraphics[width=0.9\linewidth]{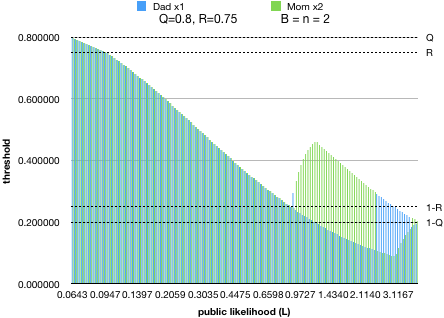}
  \caption{$n=B=2, \bm{q} \sim U(0.75,0.8)$}\n
  \label{fig:7580}
\end{minipage}%
\begin{minipage}{.5\textwidth}
  \centering
  \includegraphics[width=0.9\linewidth]{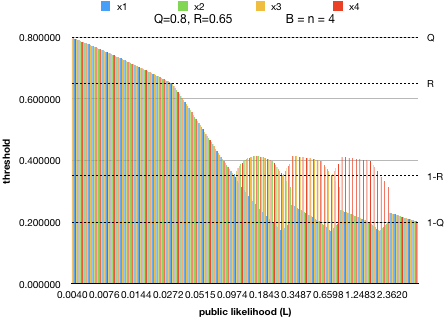}
  \caption{$n=B=4, \bm{q} \sim U(0.65,0.8)$}\n
  \label{fig:6580-4}
\end{minipage}%
\end{figure}

We use this example distribution to demonstrate several facts:
\begin{enumerate}
\item In the information cascades setting ($B \leq 1$), some quality distributions allow cascades to start, while in others, if the prior is not {\em a priori} a cascade (i.e. if $\frac{1-Q}{Q} < L < \frac{Q}{1-Q}$), a cascade never starts. We shall later (Section \ref{delegation1}) show that this property is intimately tied to delegation in sequential fundraising.

\begin{proposition}[Minimum $R$ for Cascades]
\label{min-R}
When qualities are uniformly distributed $U(R,Q)$, cascades cannot start in the information cascades setting ($B \leq 1$) when $R = \frac{1}{2}$. For each $Q$ there is a minimal $R_{min} > \frac{1}{2}$, where cascades are possible iff $R \geq R_{min}$.
\end{proposition}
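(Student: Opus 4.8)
The plan is to reduce the statement to a single Bayesian update of the public likelihood in the $B=1$ regime. By Proposition~\ref{B=1} the equilibrium threshold at a state $(L,1,n)$ is $x=\frac{1}{1+L}$, which lies in $[1-Q,Q]$ exactly when $L\in I:=\big[\frac{1-Q}{Q},\frac{Q}{1-Q}\big]$; for $L\ge\frac{Q}{1-Q}$ the threshold is $\le 1-Q$ (every type invests: an up-cascade) and for $L\le\frac{1-Q}{Q}$ it is $\ge Q$ (a down-cascade), so the state is \emph{a priori} a cascade precisely outside $\mathrm{int}\,I$. The game is invariant under relabelling $\omega\leftrightarrow 1-\omega$ (equivalently $L\leftrightarrow 1/L$, $x\leftrightarrow 1-x$, with $R,Q$ fixed), so the decline side mirrors the investment side and it suffices to ask whether an up-cascade can start. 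Since by Lemma~\ref{monotone} and \eqref{dominance} an investment raises $L$ to $L^{+}(L,x)=L\frac{1-F_1(x)}{1-F_0(x)}$ while a decline lowers it, one obtains the dichotomy: if $L^{+}\big(L,\tfrac{1}{1+L}\big)<\frac{Q}{1-Q}$ for every $L$ in $\mathrm{int}\,I$, then --- by symmetry also $L^{-}\big(L,\tfrac{1}{1+L}\big)>\frac{1-Q}{Q}$ --- the open interval $\mathrm{int}\,I$ is invariant under both updates and, the game having finitely many players, no cascade ever starts; conversely, if $L^{+}\big(L_0,\tfrac{1}{1+L_0}\big)\ge\frac{Q}{1-Q}$ for some such $L_0$, an up-cascade starts after one investment from prior $L_0$. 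Writing the threshold as the free variable, ``a cascade can start'' becomes: $h(x)\ge\frac{Q}{1-Q}$ for some $x\in(1-Q,Q)$, where $h(x):=\frac{1-x}{x}\cdot\frac{1-F_1(x)}{1-F_0(x)}$; since $h$ is continuous with $h(1-Q)=\frac{Q}{1-Q}$ and the interval is open, the content is whether $h$ reaches or exceeds this value strictly inside $(1-Q,Q)$.

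Next I would compute $h$ from \eqref{below-1}--\eqref{below-0} on the three blocks $[R,Q]$, $[1-R,R]$, $[1-Q,1-R]$ of the type support. On $[R,Q]$ one gets $h(x)=\frac{(1-x)(Q+x)}{x(2-Q-x)}=1+\frac{Q-x}{x(2-Q-x)}$, strictly decreasing because the numerator of the derivative of $\frac{Q-x}{x(2-Q-x)}$ is $-(x-Q)^{2}-2Q(1-Q)<0$; on the middle block $h(x)=\frac{1-x}{x}\cdot\frac{Q+R}{2-Q-R}$ is strictly decreasing too. Hence on $[1-R,Q]$ the maximum of $h$ is $h(1-R)=\frac{R(Q+R)}{(1-R)(2-Q-R)}$, which is visibly strictly increasing in $R$ and equals $\frac{2Q+1}{3-2Q}<\frac{Q}{1-Q}$ at $R=\tfrac12$.

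The lower block $x\in[1-Q,1-R]$ carries the real argument. There $h(x)=N(x)/D(x)$ with $N(x)=(1-x)\big(2(Q-R)+(1-Q)^{2}-x^{2}\big)$ and $D(x)=x\big(2(Q-R)-Q^{2}+(1-x)^{2}\big)>0$, so $h(x)\ge\frac{Q}{1-Q}\iff(1-Q)N(x)-QD(x)\ge 0$. Substituting $t=x-(1-Q)\in[0,Q-R]$ and expanding, $(1-Q)N-QD=-t\,\tilde q(t)$ where $\tilde q(t)=(2Q-1)t^{2}-2(3Q^{2}-3Q+1)t+c_R$ for an explicit constant $c_R$. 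The decisive point is that $\tilde q$ has positive leading coefficient and vertex at $t^{*}=\frac{3Q^{2}-3Q+1}{2Q-1}$, with $t^{*}>Q-\tfrac12\ge Q-R$ (equivalent to $2Q^{2}-2Q+1>0$), so $\tilde q$ is strictly decreasing on all of $[0,Q-R]$; hence $\tilde q(t)>0$ throughout $(0,Q-R)$ as soon as $\tilde q(Q-R)\ge 0$, i.e.\ $h(x)<\frac{Q}{1-Q}$ on $(1-Q,1-R)$ whenever $h(1-R)\le\frac{Q}{1-Q}$. Combining the three blocks, $h$ attains or exceeds $\frac{Q}{1-Q}$ inside $(1-Q,Q)$ iff $h(1-R)\ge\frac{Q}{1-Q}$; clearing denominators, this reads $g(R):=R^{2}(2Q-1)-2Q(2-Q)R+Q(2-Q)\le 0$. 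Since $g$ opens upward with $g(\tfrac12)=\frac{2Q-1}{4}>0$ and $g(Q)=2Q(2Q-1)(Q-1)<0$, its smaller root $R_{min}$ lies strictly between $\tfrac12$ and $Q$ and its larger root exceeds $Q$; therefore within the admissible range $R\in[\tfrac12,Q)$ one has $g(R)\le 0\iff R\ge R_{min}$, which yields simultaneously $R_{min}>\tfrac12$, the impossibility of cascades at $R=\tfrac12$, and the claimed equivalence. I expect the lower-block step to be the main obstacle: $\tilde q$ could a priori dip below $0$ in the interior of $[0,Q-R]$, creating an interior overshoot of $\frac{Q}{1-Q}$, and it is exactly the location of the vertex of $\tilde q$ that forbids this; the rest is symmetry bookkeeping and routine differentiation.
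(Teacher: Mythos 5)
Your proposal is correct, and it follows the paper's skeleton --- reduce to whether $H(x)=\frac{1-x}{x}\frac{1-F_1(x)}{1-F_0(x)}$ reaches $\frac{Q}{1-Q}$ on the open interval $(1-Q,Q)$, split by the three blocks of the type support, identify $x=1-R$ as the critical point, and rule out an interior overshoot on the block $[1-Q,1-R]$ --- but it diverges on exactly the step where the paper is weakest. The paper disposes of the lower block by asserting that $\frac{d^2}{dx^2}H(x)\le 0$ there ``according to \emph{Wolfram Alpha}'' and then invoking convexity (note the sign slip: what the chord argument from $H(1-Q)=\frac{Q}{1-Q}$ and $H(1-R)<\frac{Q}{1-Q}$ actually requires is $H''\ge 0$, which is what holds). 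You instead clear denominators in \eqref{J}, factor the resulting cubic $(1-Q)N-QD$ as $-t\,\tilde q(t)$ using the known root at $x=1-Q$, and show the quadratic $\tilde q$ is decreasing on $[0,Q-R]$ because its vertex $t^\ast=\frac{3Q^2-3Q+1}{2Q-1}$ exceeds $Q-\frac12\ge Q-R$ (a consequence of $2Q^2-2Q+1>0$); I checked the expansion and the vertex formula, and this gives a fully self-contained, computer-algebra-free proof that an interior solution of \eqref{trigger} forces one at $x=1-R$ --- which is genuinely needed, since $H$ is \emph{not} monotone on $[1-Q,1-R]$ (e.g.\ $Q=0.8$, $R=0.6$ gives $H(0.2)=4$, $H(0.3)\approx 3.27$, $H(0.4)=3.5$). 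Two further differences are cosmetic improvements rather than changes of substance: you replace the paper's explicit root $h(Q)$ and the monotonicity of $H(1-R)$ in $R$ by the sign analysis of the upward quadratic $g(R)$ with $g(\frac12)>0$ and $g(Q)<0$, which delivers $R_{min}\in(\frac12,Q)$ and the ``iff'' in one stroke; and you make explicit the $\omega\leftrightarrow 1-\omega$ symmetry reducing down-cascades to up-cascades, which the paper's proof uses only tacitly (its condition \eqref{trigger} covers only the up-cascade side even though the proposition speaks of ``cascades''). Your strict-decrease computation on $[R,Q]$ also repairs a small imprecision in the paper's claim that \eqref{trigger-high} has ``no solutions in $(0,1)$'' (it holds with equality at $x=1-Q$, which is harmless only because that point corresponds to a likelihood already in a cascade).
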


\begin{proof}
See Appendix.
\end{proof}

\item We plot solutions of the Mom-and-Dad problem ($B=n=2$) for this uniformly-distributed qualities with $Q=0.8, R=0.5$ (Figure \ref{fig:5080}), $Q=0.8, R=0.65$ (Figure \ref{fig:6580}) and $Q=0.8, R=0.75$ (Figure \ref{fig:7580}). We also plot the solution for $4$ players ($B=n=4$) with $Q=0.8, R=0.65$ (Figure \ref{fig:6580-4}). Since in most of the range, the players play equal thresholds, represented as colored bars, the plots appear to have a color mixture effect.

All thresholds below $1-Q$ (herd invest, or delegate) are equivalent, as are all thresholds in the range $[1-R,R]$ (invest on $s=1$), since they fall outside the type support $[1-Q, 1-R] \cup [R, Q]$. Their different values reflect a different point of indifference in \eqref{threshold}.

In the first plot, $Q=0.8, R=0.5$, the two parents always play the same threshold strategies, and no delegation ever takes place. In the 2nd and 3rd, $Q=0.8, R=0.65$ and $Q=0.8, R=0.75$, the parents play different thresholds in part of the public likelihood range. For $Q=0.8, R=0.65$, Dad always plays a threshold equal or lower to Mom's (and so is the only one to delegate), while for $Q=0.8, R=0.75$, both parents may play the lower threshold, and both may delegate.

In Figure \ref{fig:6580-4}, the players play increasingly high thresholds from first ($x_1$) to last ($x_4$), with delegation by early players only, same as for the $B=n=2$ solution for the same quality distribution (Figure \ref{fig:6580}).

\item The threshold strategy can be non-monotonic. This is demonstrated by $Q=0.8, R=0.65$ (Figure \ref{fig:6580}), where, as the plot shows, the first player (Dad) has a non-monotonicity in his threshold at $L \simeq 2.2$.
\end{enumerate}

\subsection{Unanimity Decision}
\label{B=n}

In this section we consider equilibrium behavior in unanimity games.

\subsubsection{Equilibrium Characterization}

The equilibrium characterization given in Corollary \ref{c} takes a special form in unanimity games.

\begin{proposition} [Equilibrium Characterization]
\label{charuna}
Let $B=n$. Except at irregular states $(L, B, n)$, let $x_1, \ldots, x_n$ be the thresholds of the MPE strategies for each player. Then,

The probability of completion is
\begin{align}
\label{una-complete}
\pi_\omega(L, B, n) = \prod\limits_{k = 1}^n [1 - F_\omega(x_k)]
\end{align}

\noindent and for every $i$
\begin{align}
\label{una-threshold}
L \frac{x_i}{1 - x_i} = \prod\limits_{k \in [n], k \neq i} \frac{1 - F_0(x_k)}{1 - F_1(x_k)} 
\end{align}

The strategies $\mathcal{S}(x_1), \ldots, \mathcal{S}(x_n)$ have equally-maximal utility expectation
\begin{align}
\label{una-utility}
U^{\mathcal{S}(x_i)}(L, B, n) &= \frac{L}{1+L}  \prod\limits_{k = 1}^n [1 - F_1(x_k)] -  \frac{1}{1+L}\prod\limits_{k = 1}^n [1 - F_0(x_k)]
\end{align}

\noindent with $\mathcal{S}(x_1)$ having the largest discriminator, which, for $\mathcal{S}(x_i)$ is
\begin{align}
\label{una-discriminator}
D(x_i) = \max_{j \in [n], j \neq i} \Big\{L \frac{1 - F_1(x_j)}{1 - F_1(x_i)} \prod\limits_{k=1}^n [1 - F_1(x_k)] -  \frac{1 - F_0(x_j)}{1 - F_0(x_i)}\prod\limits_{k=1}^n [1 - F_0(x_k)]\Big\}
\end{align}
\end{proposition}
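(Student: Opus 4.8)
The plan is to specialize the general equilibrium machinery of Section~\ref{analysis} to the unanimity case $B=n$ and simply read off each of the four displayed formulas. The engine is the observation that when $B=n$, completion requires \emph{every} remaining player to invest, so the ``decline branch'' of the recurrence can never lead to completion; this collapses the recursion into a product.

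First I would establish \eqref{una-complete} by induction on $n$. The base case $n=1$ ($B=1$) is immediate from the $B\le 1$ analysis: $\pi_\omega(L,1,1)=1-F_\omega(x_1)$. For the inductive step I apply the recurrence \eqref{pi-recurrence}: in state $(L,n,n)$ the first player plays threshold $x_1$, and
\[
\pi_\omega(L,n,n) = [1-F_\omega(x_1)]\,\pi_\omega\!\big(L^+, n-1, n-1\big) + F_\omega(x_1)\,\pi_\omega\!\big(L^-, n-1, n\big).
\]
The key point is that the second term vanishes: in state $(L^-, n-1, n)$ we have $B=n-1 > n-1$? No --- rather $B-1 = n-1$ investments are still needed but only $n-1$ players remain, so $B'=n-1=n'$ with $n'=n-1$ remaining; that is still a unanimity state, not a trivially-zero one. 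Let me correct: after the first player \emph{declines}, the state is $(L^-, B, n-1)=(L^-, n, n-1)$, and since $B=n>n-1$, by Corollary~\ref{recurrence} $\pi_\omega(L^-,n,n-1)=0$. \emph{That} is the collapse. Hence $\pi_\omega(L,n,n)=[1-F_\omega(x_1)]\,\pi_\omega(L^+,n-1,n-1)$, and the induction hypothesis applied to the subgame with its players $x_2,\dots,x_n$ yields \eqref{una-complete}. Note the public likelihood arguments inside the product drop out because $F_\omega$ depends only on the threshold, not on $L$.

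Next, \eqref{una-threshold} follows by substituting \eqref{una-complete} (applied to the subgame state $(L^+, B-1, n-1)$, whose players are $\{x_k : k\neq 1\}$ after relabeling, and more generally any player $i$'s continuation) into the Threshold Indifference Condition \eqref{threshold}. Concretely, the continuation probability of completion from player $i$'s perspective is the product over $k\neq i$ of $[1-F_\omega(x_k)]$, so $\pi_0/\pi_1 = \prod_{k\neq i}\frac{1-F_0(x_k)}{1-F_1(x_k)}$, and \eqref{threshold} reads $L\frac{x_i}{1-x_i}$ equals exactly this. (One should note that the product form makes the continuation independent of which intermediate likelihoods arise, which is why the same formula holds for every $i$, not just $i=1$.) Formula \eqref{una-utility} is then immediate from Proposition~\ref{expectation} by the same substitution: $U^{\mathcal S(x_i)}(L,B,n) = \frac{L}{1+L}[1-F_1(x_i)]\pi_1(\cdot) - \frac{1}{1+L}[1-F_0(x_i)]\pi_0(\cdot)$, and $[1-F_\omega(x_i)]\cdot\prod_{k\neq i}[1-F_\omega(x_k)] = \prod_{k=1}^n[1-F_\omega(x_k)]$, which visibly does not depend on $i$ --- giving the asserted \emph{equal} utility expectation across all the $\mathcal S(x_i)$. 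Finally \eqref{una-discriminator} comes from plugging the product form of $\pi_\omega$ into the definition \eqref{discriminator} of the discriminator $D(x_i)$, with the $[1-F_\omega(x_i)]$ in the denominator arising because $\pi_\omega(L\frac{1-F_1(x_i)}{1-F_0(x_i)}, B-1, n-1)$ is the product over $k\neq i$, i.e.\ $\prod_{k=1}^n[1-F_\omega(x_k)] / [1-F_\omega(x_i)]$; the claim that $\mathcal S(x_1)$ has the largest discriminator is then a labeling convention consistent with Corollary~\ref{c}.

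\textbf{The main obstacle} is bookkeeping rather than mathematics: one must be careful that the recurrence \eqref{pi-recurrence} is being applied to a \emph{unanimity subgame at every level} (so that the decline branch is genuinely a $B>n$ state and contributes zero), and that the threshold indifference and utility-expectation formulas are each evaluated with the correct continuation player set $\{x_k : k\neq i\}$ --- which requires invoking the Markov property (each state is its own starting point) so that ``the subgame after player $i$'' is well-defined independently of history. A secondary subtlety is justifying that the intermediate public likelihoods $L^+$ do not affect the products; this is exactly because $F_\omega$ is a function of the threshold alone, but it is worth stating explicitly since it is what makes the clean closed forms possible. Throughout, irregular states are excluded by hypothesis, so uniqueness of the MPE (Corollary~\ref{c}) licenses writing $\pi_\omega$ without a strategy-profile superscript.
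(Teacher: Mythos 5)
Your proposal follows the same route as the paper: read \eqref{una-complete} off the unanimity structure (your induction via the vanishing decline branch of \eqref{pi-recurrence} is a more formal version of the paper's one-line observation that completion requires every player to invest), then substitute the product form into \eqref{threshold}, \eqref{pi_diff} and \eqref{discriminator}, and use permutation symmetry for the equality of utility expectations. The conclusions are all correct.

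One step, however, is justified incorrectly, and it is precisely the step the paper is careful about: the derivation of \eqref{una-threshold} for players $i>1$. The continuation probability of completion from player $i$'s perspective is \emph{not} $\prod_{k\neq i}[1-F_\omega(x_k)]$ but $\prod_{k>i}[1-F_\omega(x_k)]$ --- players $1,\dots,i-1$ have already acted. Correspondingly, the likelihood on the left of the indifference condition \eqref{threshold} as seen by player $i$ is not the prior $L$ but the updated $L_{i-1}=L\prod_{k<i}\frac{1-F_1(x_k)}{1-F_0(x_k)}$ inferred from the earlier investments. You commit both substitutions at once and they happen to cancel: the correct statement $L_{i-1}\frac{x_i}{1-x_i}=\prod_{k>i}\frac{1-F_0(x_k)}{1-F_1(x_k)}$ rearranges to \eqref{una-threshold} exactly because the $k<i$ factors migrate from $L_{i-1}$ to the right-hand side. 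Your parenthetical that ``the product form makes the continuation independent of which intermediate likelihoods arise'' is true of the continuation probabilities but does not dispose of this point: the intermediate likelihoods are exactly what carry the $k<i$ factors into the product over $k\neq i$. The fix is one line (it is the line the paper writes), but as written your argument for general $i$ does not stand on its own.
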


\begin{proof}
\eqref{una-complete} follows from the fact that the fundraising completes iff all players invest.

Player $i$ see public liklihood $L_{i-1} = L \prod\limits_{k \in [n], k < i} \frac{1 - F_1(x_k)}{1 - F_0(x_k)}$, and her threshold indifference \eqref{threshold} implies $L_{i-1} \frac{x_i}{1 - x_i} = \frac{\pi_0(L_i, B - i, n - i)}{\pi_1(L_i, B - i, n - i)}$, from which \eqref{una-threshold} follows.

For each $i \in [n]$, $\mathcal{L}(x_i)$, $i \in [n]$, is playable by the first player, because the threshold indifference condition \eqref{una-threshold} holds under every permutation of the thresholds. Furthermore, expectation \eqref{pi_diff} for the unanimity game is \eqref{una-utility}, due to \eqref{una-complete}, and is symmetric under permutation of the thresholds, so every $\mathcal{S}(x_i)$ has the same utility expectation. This symmetry is broken by the discriminator \eqref{discriminator}, which, in the unanimity game takes the form \eqref{una-discriminator}, due to \eqref{una-complete}.
\end{proof}

\subsubsection{Mutual Insurance}

We show that players in unanimity games always assume {\em mutual insurance}, meaning that, given the same public likelihood, a player shades her threshold lower than would be played in a single-player game (the information-cascades setting with $B \leq 1$), i.e., that for every $L, n$ $\sigma(L, n, n) \leq \sigma(L, 1, 1)$. Indeed, we get the following, stronger result,

\begin{theorem}[Mutual Insurance]
\label{mutual-insurance}
Let $B = n$, and let $\mathcal{S}(x_i)$ be the strategy in MPE of player $i \in [n]$ with prior likelihood $L$. In equilibrium $x_i \leq \sigma(L, 1, 1)$ with equality only for the last player or in a cascade.
\end{theorem}

\begin{proof}
Let the thresholds of the players in MPE be $x_1, \ldots, x_n$. By \eqref{una-threshold}
\begin{align}
\label{x_1}
L \frac{x_1}{1 - x_1} = \prod_{i = 2}^{n} \frac{1 - F_0(x_i)}{1 - F_1(x_i)}
\end{align}

From \eqref{dominance}, $\frac{1 - F_0(x)}{1 - F_1(x)} \leq 1$, which is strict except at cascades. So, from \eqref{x_1} $L \frac{x_1}{1 - x_1} \leq 1$, and consequently $x_1 \leq \frac{1}{1 + L}$, lower than the threshold played for a single player, by Proposition \ref{B=1}, and strictly so except at cascades.
\end{proof}


\subsubsection{Delegation}
\label{delegation1}

An extreme form of mutual insurance is delegation, defined as herding on investment when not in a cascade. The following theorem summarizes properties of delegation.
%

\begin{theorem}[Delegation]
\label{delegation}
Let $B = n$. Except at irregular states $(L, B, n)$, let $x_1, \ldots, x_n$ be the thresholds of the MPE strategies for each player. In equilibrium,
\begin{enumerate}
\item \label{can-delegate} In a quality distribution $\bm{q}$, delegation takes place, for some $L$, iff there exists $x \in (1 - Q, \frac{1}{2})$ for which 
\begin{align}
\label{cascade-start}
\frac{1-x}{x} \frac{1 - F_1(x)}{1 - F_0(x)} \geq \frac{Q}{1-Q}
\end{align}
\noindent which is also the necessary and sufficient condition that, under $\bm{q}$, up-cascades can start in the information-cascades setting.
\item \label{half} If any player delegates (has threshold $\leq 1 - Q$ when $L < \frac{Q}{1-Q}$), then for all players $j$, $x_j \leq \frac{1}{2}$.
\item \label{time-reverse} In a quality distribution $\bm{q}$, only the earliest players delegate iff for every $x$ satisfying \eqref{cascade-start}
$$(1-x)[1-F_1(x)]^2 - x[1 - F_0(x)]^2 > 1 - 2x$$
\item \label{time-forward} In a quality distribution $\bm{q}$, only the latest players delegate iff for every $x$ satisfying \eqref{cascade-start}
$$(1-x)[1-F_1(x)]^2 - x[1 - F_0(x)]^2 < 1 - 2x$$
\end{enumerate}
\end{theorem}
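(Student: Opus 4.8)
The backbone of all four claims is a single identity for unanimity games. Fix a regular state with $B=n$, let $x_1,\dots,x_n$ be the MPE thresholds, and write $L_{end}=L\,\frac{\pi_1(L,n,n)}{\pi_0(L,n,n)}=L\prod_{k=1}^n\frac{1-F_1(x_k)}{1-F_0(x_k)}$ for the completion likelihood (using \eqref{una-complete}). Since $\prod_{k\neq i}\frac{1-F_0(x_k)}{1-F_1(x_k)}=\big(\tfrac{1-F_1(x_i)}{1-F_0(x_i)}\big)\big/\prod_{k}\tfrac{1-F_1(x_k)}{1-F_0(x_k)}$, equation \eqref{una-threshold} for player $i$ becomes $\frac{1-x_i}{x_i}\,\frac{1-F_1(x_i)}{1-F_0(x_i)}=L_{end}$; that is, every equilibrium threshold is a preimage under $g(x):=\frac{1-x}{x}\frac{1-F_1(x)}{1-F_0(x)}$ — the left-hand side of \eqref{cascade-start} — of the single number $L_{end}$. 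On $(0,1-Q]$ we have $F_\omega\equiv 0$, so $g(x)=\frac{1-x}{x}$ is strictly decreasing with $g(1-Q)=\frac{Q}{1-Q}$; and by Lemma~\ref{monotone} with \eqref{limQ}, $\frac{1-F_1(x)}{1-F_0(x)}\le\frac{Q}{1-Q}$ everywhere (equality only for $x\ge Q$), so $g(x)\le\frac{1-x}{x}\cdot\frac{Q}{1-Q}$. Hence a player delegates — i.e.\ her threshold lies in $(0,1-Q]$ — only if $L_{end}\ge\frac{Q}{1-Q}$, in which case every delegator carries the common threshold $\frac{1}{1+L_{end}}$. \emph{Claim (2)} is then immediate: delegation forces $g(x_i)=L_{end}\ge\frac{Q}{1-Q}$ for every $i$, and together with $g(x_i)\le\frac{1-x_i}{x_i}\cdot\frac{Q}{1-Q}$ this gives $\frac{1-x_i}{x_i}\ge 1$, i.e.\ $x_i\le\frac12$.

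For \emph{claim (1)} I would argue through the \emph{drop-out reduction}: a player with threshold $\le 1-Q$ contributes a factor $1$ to the product and, her action being uninformative, leaves the public likelihood unchanged, turning $(L,k,k)$ into $(L,k-1,k-1)$. Base case $n=2$: here $\pi_\omega(L,1,1)=1-F_\omega\!\big(\tfrac{1}{1+L}\big)$ by Proposition~\ref{B=1}, so the first player delegates iff $L\,\frac{1-F_1(1/(1+L))}{1-F_0(1/(1+L))}\ge\frac{Q}{1-Q}$, which is exactly \eqref{cascade-start} at $x=\tfrac{1}{1+L}$; the admissible range $x\in(1-Q,\tfrac12)$ is forced, since $x\le 1-Q$ means $L\ge\frac{Q}{1-Q}$ (an up-cascade, not delegation) and $x\ge\tfrac12$ makes the left-hand side $\le\frac{1-F_1(x)}{1-F_0(x)}<\frac{Q}{1-Q}$. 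This is also the up-cascade-start condition for the $B\le 1$ game, because there a player plays $\sigma(L,1,1)=\tfrac{1}{1+L}$ (Proposition~\ref{B=1}), so one investment sends $L$ to $L^{+}=L\frac{1-F_1(1/(1+L))}{1-F_0(1/(1+L))}$, which lands in the up-cascade region (Theorem~\ref{up-cascade}) precisely when \eqref{cascade-start} holds. For general $n$: if \eqref{cascade-start} holds at $x$, set $L=\frac{1-x}{x}$ and run $(L,n,n)$ — by the reduction the first $n-2$ players delegate and the tail is the solved $n=2$ case; conversely any delegation, after peeling off the delegating prefix, is a first-player delegation in some $(L',m,m)$ with $L'<\frac{Q}{1-Q}$ and $L_{end}(L',m-1,m-1)\ge\frac{Q}{1-Q}$, so along that subgame's all-invest path (non-decreasing by \eqref{dominance}) some step has $L'_{j-1}<\frac{Q}{1-Q}\le L'_j=L'_{j-1}\frac{1-F_1(x_j)}{1-F_0(x_j)}$; since $x_j\le\tfrac{1}{1+L'_{j-1}}$ by social insurance (Theorem~\ref{social-insurance}), monotonicity of $\frac{1-F_1}{1-F_0}$ gives $L'_{j-1}\frac{1-F_1(1/(1+L'_{j-1}))}{1-F_0(1/(1+L'_{j-1}))}\ge L'_j\ge\frac{Q}{1-Q}$, i.e.\ \eqref{cascade-start} at $x=\tfrac{1}{1+L'_{j-1}}\in(1-Q,\tfrac12]$.

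Finally, \emph{claims (3) and (4).} Delegation occurring at position $i$ of $(L,n,n)$ amounts (via the reduction) to the residual $(n-i)$-player game's completion likelihood reaching $\frac{Q}{1-Q}$; whether the set of delegators always forms an initial segment (``earliest'') or instead appears only as a block near the end, short of the last player (``latest''), is decided by whether, passing from a $(k-1)$- to a $k$-player unanimity game at fixed $L$, the first player's non-delegating threshold moves down or up — equivalently whether the residual $L_{end}$ it pins down moves up or down along the relevant branch of $g$ — since that controls whether one extra investment keeps the successor state inside, or drives it out of, the smaller game's delegation region. Parametrising each game's delegation boundary by the $x$ solving \eqref{cascade-start} with equality and feeding it back into the identity, the deciding quantity works out to the sign of $(1-x)[1-F_1(x)]^2-x[1-F_0(x)]^2-(1-2x)$, equivalently of $x\big[1-(1-F_0(x))^2\big]-(1-x)\big[1-(1-F_1(x))^2\big]$ — a comparison of the $\omega=1$- and $\omega=0$-weighted failure probabilities of the symmetric two-player unanimity game at threshold $x$ — and one reads off ``only the earliest delegate'' when this is positive at every admissible $x$ and ``only the latest delegate'' when it is negative at every admissible $x$. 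I expect (3)–(4) to be the hard part, for two reasons: first, one must pin down the \emph{non-delegating} thresholds, which are roots of $g=L_{end}$ inside $(1-Q,Q)$ where $g$ need not be monotone and where the discriminator \eqref{una-discriminator} has to select among possibly several roots; second, one must promote the one-step $(k-1)$-versus-$k$ comparison, whose sign is governed by the displayed inequality at every admissible $x$, into a statement about an entire fundraising by induction on $n$ — here I would lean on the recurrence of Corollary~\ref{recurrence}, the monotonicity of $\frac{1-F_1}{1-F_0}$ (Lemma~\ref{monotone}), and $L^{-}(L,x)\le L\le L^{+}(L,x)$ to dispose of the decline branch.
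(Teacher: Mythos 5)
Your identity $\frac{1-x_i}{x_i}\,\frac{1-F_1(x_i)}{1-F_0(x_i)}=L_{end}$ is a clean repackaging of \eqref{una-threshold}, and your proof of claim (\ref{half}) (and of the ``only if'' direction of claim (\ref{can-delegate})) via $g(x_i)=L_{end}\ge\frac{Q}{1-Q}$ is correct and essentially the paper's own argument: the paper takes a delegating $x_i$ and a non-delegating $x_j$, combines the two instances of \eqref{una-threshold} (the delegator contributing a factor $1$), and lands on \eqref{cascade-start} at $x=x_j$. On claim (\ref{can-delegate}) you in fact attempt more than the paper does, since the paper's proof only establishes that delegation implies solvability of \eqref{cascade-start}; your $n=2$ base case supplies a converse, though it still needs the check that the delegating profile is the strategy actually selected by the utility/discriminator refinement.

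The genuine gap is in claims (\ref{time-reverse}) and (\ref{time-forward}), which you do not prove: you assert that ``the deciding quantity works out to'' the displayed expression without deriving it, and the mechanism you propose --- comparing the first player's threshold across $(k-1)$- and $k$-player unanimity games at fixed $L$ --- is not the one that governs the ordering. By Proposition \ref{charuna}, the multiset of thresholds $\{x_1,\dots,x_n\}$ is symmetric under permutation: every permutation satisfies \eqref{una-threshold} and yields the same utility \eqref{una-utility}, so \emph{which} threshold the first player plays is decided purely by the tie-breaking discriminator \eqref{una-discriminator} of Proposition \ref{equally-optimal}. Hence ``only the earliest players delegate'' is exactly the statement that a delegation threshold always wins the discriminator comparison, and this reduces to the two-player case: with $x_1\le 1-Q$ and $x_2>1-Q$ one computes $D(x_1)=L[1-F_1(x_2)]^2-[1-F_0(x_2)]^2$ and $D(x_2)=L[1-F_1(x_1)]^2-[1-F_0(x_1)]^2=L-1$, while \eqref{una-threshold} forces $L=\frac{1-x_2}{x_2}$; the inequality $D(x_1)>D(x_2)$ then becomes $(1-x_2)[1-F_1(x_2)]^2-x_2[1-F_0(x_2)]^2>1-2x_2$, which is the stated condition, and the reversed inequality gives claim (\ref{time-forward}). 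Without invoking the discriminator you have no handle on which player is assigned which threshold, so the ``hard part'' you flag is not merely hard along your route --- the permutation ambiguity you would need to resolve is broken only by the model's refinement, and any proof of (\ref{time-reverse})--(\ref{time-forward}) has to go through it.
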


\begin{proof}
See Appendix.
\end{proof}

For uniformly-distributed qualities (see Section \ref{uniform}), Theorem \ref{delegation}, in conjunction with Proposition \ref{min-R}, shows that players never delegate for $R = \frac{1}{2}$ (exemplified in Figure \ref{fig:5080}), but may delegate starting at some minimum $R$ (as shown in Figures \ref{fig:6580}-\ref{fig:6580-4}). This means that, at least for uniform distributions, players delegate only when they can expect future players to have sufficiently accurate signals.

Quality distributions where delegation is by early players only (following Theorem \ref{delegation}(\ref{time-reverse})) are common (shown in 3 out 4 of Figures \ref{fig:5080}-\ref{fig:6580-4}), and exhibit a {\em time reversal} of sorts: While in the information cascades setting an up-cascade might start after several players invested, in the sequential fundraising game players pre-empt and avert this possibility by delegating. Whenever a player's approval with given threshold $x$ would trigger an up-cascade in the information cascades setting, in our fundraising setting all {\em previous} players would delegate before such a strategy $x$ player. Furthermore, they have ``reverse'' cascades, wherein herding takes place from some player {\em backwards}.

\begin{corollary}[Reverse cascade]
\label{reverse}
Where Theorem \ref{delegation}(\ref{time-reverse}) holds, if $\sigma(L, B,  n) \leq 1 - Q$ then $\sigma(L, B + k,  n + k) \leq 1 - Q$ for every positive integer $k$.
\end{corollary}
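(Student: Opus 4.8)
Throughout we are in a unanimity game, $B=n$; write $G_N$ for the $N$-player unanimity game at the fixed likelihood $L$. A straightforward induction reduces the claim to the case $k=1$: it is enough to show $\sigma(L,n,n)\le 1-Q\ \Rightarrow\ \sigma(L,n+1,n+1)\le 1-Q$. The first ingredient is that delegation is ``transparent'': if the first player of $G_N$ uses a threshold $x\le 1-Q$, then $F_0(x)=F_1(x)=0$ (the type support starts at $1-Q$), so by \eqref{infer-invest} her certain investment leaves the public likelihood unchanged at $L$ and the continuation is exactly $G_{N-1}$; hence $\pi_\omega(L,N,N)=\pi_\omega(L,N-1,N-1)$. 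Moreover, by \eqref{una-complete} and \eqref{una-utility} the equilibrium utility of the first player of any unanimity game $G_M$ is $\frac{L}{1+L}\pi_1(L,M,M)-\frac{1}{1+L}\pi_0(L,M,M)$; combining, the equilibrium utility $\hat U$ of $G_n$ equals $\frac{L}{1+L}\pi_1(L,n-1,n-1)-\frac{1}{1+L}\pi_0(L,n-1,n-1)$.

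Next I would show a delegation strategy is \emph{playable} for the first player of $G_{n+1}$. By Corollary \ref{herd-invest} applied to $G_n$, $\frac{\pi_0(L,n-1,n-1)}{\pi_1(L,n-1,n-1)}\le L\frac{1-Q}{Q}$, and by transparency the left-hand side equals $\frac{\pi_0(L,n,n)}{\pi_1(L,n,n)}$. The threshold-indifference condition \eqref{threshold} for a candidate first-mover threshold $x\le 1-Q$ in $G_{n+1}$ reads $L\frac{x}{1-x}=\frac{\pi_0(L,n,n)}{\pi_1(L,n,n)}$, whose left-hand side increases continuously from $0$ to $L\frac{1-Q}{Q}$ as $x$ runs over $[0,1-Q]$; the inequality just obtained therefore yields a solution $x\le 1-Q$. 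So a delegation strategy is playable, and by \eqref{una-complete} and \eqref{una-utility} it gives the first player exactly $\frac{L}{1+L}\pi_1(L,n,n)-\frac{1}{1+L}\pi_0(L,n,n)=\hat U$.

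It remains to show this playable delegation strategy is the one selected in equilibrium (Proposition \ref{maximal}, Corollary \ref{c}). Suppose not. Then by Theorem \ref{delegation}(\ref{time-reverse}) the set of delegators of $G_{n+1}$, being an initial segment, is empty: no player of $G_{n+1}$ delegates, so all equilibrium thresholds exceed $1-Q$ and the first player's equilibrium utility is $U^*=\frac{L}{1+L}\pi_1(L,n+1,n+1)-\frac{1}{1+L}\pi_0(L,n+1,n+1)$ with every $1-F_\omega(x_k)<1$. Optimality gives $U^*\ge\hat U$. The heart of the argument --- and the step I expect to be the only non-routine one --- is the reverse inequality $U^*\le\hat U$: it encodes the statement that the first player cannot profit by retaining a non-trivial threshold rather than delegating and passing the burden forward to the $n$ players behind her, which is exactly the content of the Theorem \ref{delegation}(\ref{time-reverse}) hypothesis $(1-x)[1-F_1(x)]^2-x[1-F_0(x)]^2>1-2x$ for every $x$ meeting \eqref{cascade-start}. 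I would derive $U^*\le\hat U$ by the same two-player reduction used to prove that part of Theorem \ref{delegation}, comparing the first player of $G_{n+1}$ with the player immediately after her and using \eqref{una-threshold}, the stochastic dominance \eqref{dominance}, and Lemma \ref{monotone} to control the relevant products of completion-probability factors. Given $U^*\le\hat U$ we get $U^*=\hat U$; the first player is then indifferent between her supposed equilibrium strategy and a whole interval of playable delegation thresholds, so the tie-breaking refinement applies, and checking (again via \eqref{dominance} and \eqref{una-threshold}) that the discriminator \eqref{una-discriminator} is maximized on a delegation threshold contradicts the supposition. Iterating the $k=1$ statement, with transparency invoked at each step, yields the claim for all $k$.
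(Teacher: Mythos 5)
The paper offers no written proof of this corollary: it is presented as an immediate consequence of Theorem \ref{delegation}(\ref{time-reverse}), the implicit argument being that under that condition a delegation threshold, whenever it appears in the equilibrium threshold multiset of a unanimity game, is always assigned to the earliest remaining player (it wins the discriminator comparison), so herding on investment propagates backward. Your proposal is more ambitious and, in its first half, a genuine improvement in explicitness: the reduction to $k=1$, the ``transparency'' observation that a certain investment leaves $L$ and $\pi_\omega$ unchanged, and the playability of a delegation threshold for the first player of $G_{n+1}$ via Corollary \ref{herd-invest} are all correct.

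The gap is the step you yourself flag as the heart of the argument: $U^*\le\hat U$, i.e.\ that the no-delegation equilibrium of $G_{n+1}$ cannot have strictly larger type-independent utility than the playable delegation strategy. You propose to obtain it ``by the same two-player reduction'' used for Theorem \ref{delegation}(\ref{time-reverse}), but that reduction does something different: it compares the \emph{discriminators} \eqref{una-discriminator} of two thresholds belonging to one and the same equilibrium multiset, whose utilities are already equal by Proposition \ref{charuna}. Here you must compare the \emph{utilities} \eqref{una-utility} of two different candidate multisets --- $\{x_1,\dots,x_{n+1}\}$ with every $x_j>1-Q$ versus $\{x_0\le 1-Q\}\cup\{y_1,\dots,y_n\}$ with the $y_j$ solving the $n$-player system --- and the hypothesis $(1-x)[1-F_1(x)]^2-x[1-F_0(x)]^2>1-2x$ is a discriminator inequality that says nothing about that comparison. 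Without it, the selection rule of Proposition \ref{maximal} could in principle pick the non-delegating profile, and the tie-breaking/discriminator endgame you describe is never reached. So your skeleton is right but its central load-bearing inequality is missing; in fairness, the paper never supplies it either, so the corollary as stated rests on the same unestablished comparison.
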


Another consequence is that up-cascades cannot start before the fundraising completes.

\begin{corollary}[No Cascades]
\label{no-cascades}
Where Theorem \ref{delegation}(\ref{time-reverse}) holds, in equilibrium, an up-cascade does not start before all players played.
\end{corollary}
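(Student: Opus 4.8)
The strategy is to argue by contradiction, reduce to the two-player unanimity game through the Markov property and Corollary~\ref{reverse}, and there exhibit a tie between the ``trigger'' and ``delegate'' first-player strategies whose tie-break condition is exactly the hypothesis of Theorem~\ref{delegation}(\ref{time-reverse}). Assume the fundraising does not begin in an up-cascade, i.e.\ $L_0<\frac{Q}{1-Q}$ (otherwise no cascade starts during play). By Theorem~\ref{up-cascade}, an up-cascade starting before all players have played means $L_k\ge\frac{Q}{1-Q}$ for some $1\le k\le n_0-1$ along the realised history; take $k$ minimal, so $L_{k-1}<\frac{Q}{1-Q}\le L_k$. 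In a unanimity game a single decline ends it, so players $1,\dots,k$ all invested; and by Theorem~\ref{up-cascade} players $k+1,\dots,n_0$ are all in the up-cascade, play thresholds $\le 1-Q$, and hence leave the likelihood unchanged, so $L_j=L_k$ for every $j\ge k$.

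Next I would pin down player~$k$'s strategy. Viewing the tail $(L_{k-1},\,n_0-k+1,\,n_0-k+1)$ as a fresh unanimity game and using the threshold identity \eqref{una-threshold}, the product over players $k+1,\dots,n_0$ equals $1$ (their thresholds are $\le 1-Q$, so $F_\omega\equiv 0$ there), whence $L_{k-1}\frac{x_k}{1-x_k}=1$, i.e.\ $x_k:=\sigma(L_{k-1},n_0-k+1,n_0-k+1)=\frac{1}{1+L_{k-1}}$, which by Proposition~\ref{B=1} is exactly $\sigma(L_{k-1},1,1)$: player~$k$ plays the myopic single-player threshold, all future social insurance having evaporated. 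From $L_{k-1}<\frac{Q}{1-Q}$ we get $x_k>1-Q$, so player~$k$ does not delegate. Moreover $L_k=\frac{1-x_k}{x_k}\cdot\frac{1-F_1(x_k)}{1-F_0(x_k)}\ge\frac{Q}{1-Q}$; since $\frac{1-F_1(x)}{1-F_0(x)}\le\frac{Q}{1-Q}$ (Lemma~\ref{monotone}, \eqref{limQ}) this forces $L_{k-1}\ge 1$, hence $x_k\le\frac{1}{2}$, and in fact $x_k<\frac{1}{2}$ (equality would require $\frac{1-F_1(x_k)}{1-F_0(x_k)}=\frac{Q}{1-Q}$, i.e.\ $x_k\ge Q$). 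So $x_k\in(1-Q,\frac{1}{2})$ and $x_k$ satisfies the cascade-onset condition \eqref{cascade-start}.

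The core step is the two-player unanimity game at likelihood $L_{k-1}$, analysed through Proposition~\ref{charuna}. Because $x_k=\frac{1}{1+L_{k-1}}$ satisfies \eqref{cascade-start}, the first player has (at least) two playable options: ``delegate'' --- play a threshold $\le 1-Q$, after which the second player best-responds with $\frac{1}{1+L_{k-1}}$ --- and ``trigger'' --- play $x_k$, pushing the second player into an up-cascade with a threshold $\le 1-Q$. A short computation shows both realise the same unordered threshold profile $\{x_k,\,\frac{1}{1+L_k}\}$, hence by \eqref{una-utility} they yield the same type-independent utility; the tie is broken by the discriminator \eqref{una-discriminator}, which evaluates to $L_{k-1}[1-F_1(x_k)]^2-[1-F_0(x_k)]^2$ for the delegating choice and to $L_{k-1}-1$ for the triggering choice. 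Thus the first player strictly prefers to delegate iff $L_{k-1}[1-F_1(x_k)]^2-[1-F_0(x_k)]^2>L_{k-1}-1$, which on substituting $L_{k-1}=\frac{1-x_k}{x_k}$ and clearing the positive denominator becomes precisely $(1-x_k)[1-F_1(x_k)]^2-x_k[1-F_0(x_k)]^2>1-2x_k$ --- the hypothesis of Theorem~\ref{delegation}(\ref{time-reverse}) at $x=x_k$, which holds (and strictly, so the state is regular). Hence $\sigma(L_{k-1},2,2)\le 1-Q$, and then Corollary~\ref{reverse} upgrades this to $\sigma(L_{k-1},n_0-k+1,n_0-k+1)\le 1-Q$: player~$k$ delegates, contradicting the previous paragraph. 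This completes the argument.

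The main obstacle is the parenthetical ``both realise the same profile and nothing else is optimal'': one needs that in the two-player game at $L_{k-1}$ the only first-player strategies that can be optimal are ``delegate'' and ``trigger the cascade'' --- that no intermediate first-player threshold, one keeping the second player strictly out of a cascade and strictly above its single-player threshold, yields strictly greater utility. This is exactly the content underlying Theorem~\ref{delegation}(\ref{can-delegate})--(\ref{time-reverse}), so it may be cited directly, or re-derived from the monotonicity of $\frac{1-F_1(x)}{1-F_0(x)}$ and $\frac{1-F_0(x)}{1-F_1(x)}$ in Lemma~\ref{monotone}, which pins the second player's best response as a monotone function of the first's threshold and hence the list of playable profiles. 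Everything else --- the reduction, the identity \eqref{una-threshold}, the L'H\^opital limits \eqref{limQ}--\eqref{lim1-Q}, and the appeal to Corollary~\ref{reverse} --- is routine.
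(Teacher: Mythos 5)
The paper states this corollary without a written proof, and your central computation --- comparing the discriminator of the ``delegating'' threshold, $L_{k-1}[1-F_1(x_k)]^2-[1-F_0(x_k)]^2$, against that of the ``triggering'' threshold, $L_{k-1}-1$, and observing that substituting $L_{k-1}=\frac{1-x_k}{x_k}$ turns the comparison into exactly the hypothesis of Theorem~\ref{delegation}(\ref{time-reverse}) --- is precisely the intended mechanism. Your setup is also sound: minimality of $k$, the identity $x_k=\frac{1}{1+L_{k-1}}$ from \eqref{una-threshold} once the later players' distribution terms drop out, and the verification that $x_k\in(1-Q,\tfrac12)$ satisfies \eqref{cascade-start}.

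The genuine gap is the step ``hence $\sigma(L_{k-1},2,2)\le 1-Q$'' followed by the lift via Corollary~\ref{reverse}. To conclude that the first player of the auxiliary two-player game at $L_{k-1}$ delegates, you must know the MPE of that game, and the discriminator only breaks ties \emph{among equally-optimal playable thresholds}; the two-player game at $L_{k-1}$ may admit a third playable profile (e.g.\ an equal-threshold solution of \eqref{una-threshold}) with strictly higher utility \eqref{una-utility}, in which case neither ``delegate'' nor ``trigger'' is played and your conclusion fails. You flag this yourself, but the proposed patch --- citing Theorem~\ref{delegation}(\ref{time-reverse}) --- does not close it: that item constrains the \emph{order} of delegators when delegation occurs, and in your scenario players $k+1,\dots,n_0$ are in a cascade, not delegating, so the item is vacuous there. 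The fix is to drop the auxiliary game and Corollary~\ref{reverse} entirely and argue at player $k$'s own node: by Proposition~\ref{charuna}, in the assumed equilibrium profile $(x_k,x_{k+1},\dots,x_{n_0})$ of the state $(L_{k-1},m,m)$ all of $\mathcal{S}(x_k),\dots,\mathcal{S}(x_{n_0})$ are playable by player $k$ with equal utility, so the one actually played must maximize \eqref{una-discriminator} \emph{within this profile}; computing $D(x_{k+1})\ge L_{k-1}[1-F_1(x_k)]^2-[1-F_0(x_k)]^2 > L_{k-1}-1 = D(x_k)$ under the hypothesis gives the contradiction directly, with no need to know anything about other profiles. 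That is your computation, relocated to where it applies unconditionally.
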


\subsubsection{When are Thresholds Equal?}

Consider the unanimity game $B=n$ played {\em simultaneously} rather than sequentially. All players are in a perfectly-symmetrical situation: They need to take a decision, based solely on the prior public likelihood $L_0$. The project will take place iff all players simultaneously approve.

Therefore there exists a symmetric MPE in the simultaneous game, in which all play strategies with the same threshold $x_1 = \ldots = x_n$. This equilibrium may not be unique.

The sequential game is apparently different, in that players observe the decisions of others and adjust their action accordingly. In fact, in a unanimity game, this difference is an illusion: Every player may assume that others invested, since, if they did not, their own decision does not matter. Observing that others, in fact, invested, changes nothing.

This shows why every MPE of the sequential game is an MPE of the simultaneous game (but not vice versa, as Proposition \ref{maximal} is valid only for the sequential game).

This explains why players may play different thresholds, but it also explains why often all or some play the same strategy: When the symmetric equilibrium is the only MPE of the simultaneous game, it must also be the only MPE of the sequential game.

In fact, we can formulate a necessary condition for players to play different strategies.

\begin{proposition} [Equal Thresholds]
\label{equal}
Define $$J(x) := \frac{x}{1-x} \frac{1 - F_0(x)}{1 - F_1(x)}$$\footnote{$J(x)$ is often referred to as the {\em hazard rate} in survival analysis and in some relevant literature, e.g., in \cite{herrera2012necesssary}.}  In MPE, two players play unequal strategies $x_i \neq x_j$ only if $J(x)$ is not strictly monotonic between $x_i$ and $x_j$.
Equivalently, this can happen only when $$\frac{d \log J(x)}{dx} = \frac{1}{x} + \frac{1}{1-x} + \frac{f_1(x)}{1 - F_1(x)} - \frac{f_0(x)}{1 - F_0(x)}$$ changes sign between $x_i$ and $x_j$.
\end{proposition}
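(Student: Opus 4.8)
The plan is to prove the contrapositive of the first assertion: if $J(x)$ is strictly monotonic on the closed interval between $x_i$ and $x_j$, then $x_i = x_j$. Recall from \eqref{una-threshold} that in MPE every player's threshold satisfies $L\frac{x_i}{1-x_i} = \prod_{k \neq i} \frac{1 - F_0(x_k)}{1 - F_1(x_k)}$. Multiplying both sides by $\frac{1 - F_0(x_i)}{1 - F_1(x_i)}$ gives a clean relation: $L \cdot J(x_i) = \prod_{k=1}^n \frac{1 - F_0(x_k)}{1 - F_1(x_k)}$, where the right-hand side is the \emph{same} quantity for every $i$ (it is the full product over all players, independent of $i$). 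Hence $J(x_i) = J(x_j)$ for all $i,j$ at any MPE. If $J$ is strictly monotonic between $x_i$ and $x_j$, then $J(x_i) = J(x_j)$ forces $x_i = x_j$, which is exactly what we want. Some care is needed when the thresholds fall on the flat portions of $F_\omega$ outside the support $[1-Q,1-R]\cup[R,Q]$, where $J$ is genuinely constant; but there the two strategies are behaviorally identical (as noted in Section~\ref{uniform}), so the statement still reads correctly.

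For the ``equivalently'' reformulation, I would compute $\frac{d}{dx}\log J(x)$ directly. Writing $\log J(x) = \log x - \log(1-x) + \log(1 - F_0(x)) - \log(1 - F_1(x))$ and differentiating term by term yields
\begin{align*}
\frac{d \log J(x)}{dx} = \frac{1}{x} + \frac{1}{1-x} + \frac{f_1(x)}{1 - F_1(x)} - \frac{f_0(x)}{1 - F_0(x)},
\end{align*}
using $\frac{d}{dx}\log(1 - F_\omega(x)) = -\frac{f_\omega(x)}{1 - F_\omega(x)}$. Since $J(x) > 0$ wherever it is defined, $J$ is strictly monotonic on an interval precisely when this logarithmic derivative does not change sign there (it is continuous on each piece where $f_{\bm q}$ is continuous, by the density assumptions in the model). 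So ``$J$ not strictly monotonic between $x_i$ and $x_j$'' is equivalent to ``$\frac{d\log J}{dx}$ changes sign between $x_i$ and $x_j$,'' giving the stated equivalent necessary condition.

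The only real subtlety — and the step I expect to require the most care — is the handling of the boundary and the non-support regions. On $(1-R, R)$ both $F_0$ and $F_1$ are constant, so $J$ is strictly increasing there only through the $\frac{x}{1-x}$ factor, and for $x \le 1-Q$ or $x \ge Q$ one must invoke the L'Hôpital limits \eqref{limQ}–\eqref{lim1-Q} to make sense of $\frac{1-F_0}{1-F_1}$; in those degenerate ranges distinct $x$ values represent the same playable strategy, so the conclusion $x_i = x_j$ should be read up to behavioral equivalence. Apart from this bookkeeping, the argument is the short algebraic observation that the threshold-indifference equations collapse to $J(x_i) = J(x_j)$ across all players, and everything else follows from elementary calculus.
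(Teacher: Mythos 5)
Your proposal is correct and follows essentially the same route as the paper: both reduce the unanimity threshold-indifference conditions \eqref{una-threshold} to the identity $J(x_i) = J(x_j)$ and then invoke strict monotonicity of $J$, with your version merely streamlining the algebra by multiplying each player's equation by its own factor $\frac{1-F_0(x_i)}{1-F_1(x_i)}$ to exhibit the common invariant $L\,J(x_i) = \prod_{k}\frac{1-F_0(x_k)}{1-F_1(x_k)}$, where the paper instead multiplies the two equations for $i$ and $j$ together. Your explicit computation of $\frac{d}{dx}\log J(x)$ and the remarks on the flat regions of $F_\omega$ are consistent with, and slightly more careful than, the paper's one-line treatment of the ``equivalently'' clause.
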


\begin{proof}
See Appendix.
\end{proof}

%
%

\subsection{Bounds}

\subsubsection{Threshold Bounds}

Can we bound the equilibrium threshold given the state ($L, B, n$)? Inferences from a player's action increase with the player's threshold (Lemma \ref{monotone}), so if enough players play a high-enough threshold, an up-cascade ensues. This, apparently, gives us a method to bound the equilibrium thresholds. However, the phenomenon of delegation, from which no inference can be made, foils this. The phenomenon of ``reverse cascades'' (Corollary \ref{reverse}) shows that this holds with any number of players: No matter how large $n$ is, all but a few players may delegate, forcing the remaining few players to play strategy thresholds that do not depend on $n$. 

Some improvement to this is in Theorem \ref{delegation}(\ref{half}), where we showed that when any player delegates in a unanimity game, {\em all} players play a threshold less than $1/2$, meaning that they {\em always} invest with a $1$-signal, and invest with a $0$-signal if their signal quality is {\em below} a threshold.

In games where none delegate, players indeed play increasingly lower thresholds with increasing $n$. We exemplify this with unanimity games ($B=n$) where all play the same threshold (which, as shown in Section \ref{uniform}, is a common occurrence). Mark the common threshold by $x$. Then \eqref{threshold} becomes
\begin{align*}
L\frac{x}{1-x} = \Big(\frac{1-F_0(x)}{1-F_1(x)}\Big)^{n-1}
\end{align*}

Since $L\frac{x}{1-x} > L\frac{1-Q}{Q}$, we get
\begin{align}
\label{bound}
\frac{1-F_1(x)}{1-F_0(x)} < \Big(\frac{1}{L}\frac{Q}{1-Q}\Big)^{\frac{1}{n-1}}
\end{align}

The left-hand side of \eqref{bound} increases monotonically (Lemma \ref{monotone}) from $1$ to $\frac{Q}{1-Q}$ when $x$ ranges from $1-Q$ to $Q$, while the right-hand side decreases with $n$ (as $\frac{1}{L}\frac{Q}{1-Q} > 1$), limiting at $1$. Therefore \eqref{bound} gives us an increasingly lower upper-bound on the threshold with increasing $n$, and shows that thresholds converge at $1 - Q$ as the number of players grows. In fact, the following proposition describes the asymptotic behavior of $x$ and of the probabilities of completion for large $n$.

\begin{proposition}[Large $n$ Behavior]
\label{large-n}
Let $B=n$ and suppose that, in equilibrium, all players have the same strategy $x \in (1-Q, Q)$. Then
\begin{enumerate}
\item \label{lg1} If $f_{\bm{q}}(Q) > 0$ $$x = 1 - Q +  \frac{\Big(\frac{1}{L}\frac{Q}{1-Q}\Big)^{\frac{1}{n-1}} - 1}{(2Q - 1) f_{\bm{q}}(Q)} (1 \pm o(1))$$
\item \label{lg2} If $f_{\bm{q}}(Q) = 0$ and $f'_{\bm{q}}(Q) \neq 0$, where $f'_{\bm{q}}(Q)$ denotes the left derivative of $f_{\bm{q}}(x)$ at $Q$  $$x = 1 - Q +  \sqrt{\frac{\Big(\frac{1}{L}\frac{Q}{1-Q}\Big)^{\frac{1}{n-1}} - 1}{(2Q - 1) \frac{-f'_{\bm{q}}(Q)}{2}}} (1 \pm o(1))$$
\item \label{lg3} $$\pi_0(L, n, n) = \Big(\frac{1}{L}\frac{Q}{1-Q}\Big)^{-\frac{Q}{2Q-1}}  (1 \pm o(1))$$
\item \label{lg4} $$\pi_1(L, n, n) = \Big(\frac{1}{L}\frac{Q}{1-Q}\Big)^{-\frac{1-Q}{2Q-1}}  (1 \pm o(1))$$
\end{enumerate}
\end{proposition}

\begin{proof}
See Appendix.
\end{proof}

We remark about Proposition \ref{large-n} that it shows that the large-$n$ behavior depends only on $Q$ and the quality density in its vicinity, but we remind that it holds only subject to the no-delegation assumption, which, as we have seen, depends on the entire quality distribution, and its lower-bound $R$ in particular.

\subsubsection{Total Probability of Completion}

We define the {\em total} probability of completion $\pi(L, B, n)$ as the probability that the funding will complete given the state $(L, B, n)$. Since the probability that $\omega = 1$ is $\frac{L}{1+L}$, we have
\begin{align}
\pi(L, B, n) = \frac{L}{1 + L} \pi_1(L, B, n) + \frac{1}{1+L} \pi_0(L, B, n)
\end{align}

From Corollary \ref{recurrence}, $\pi(L, B, n)$ is well-defined and continuous in $L$ except at irregular states.

Using Proposition \ref{large-n}, we can calculate the large-$n$ behavior of $\pi(L, n, n)$ with no delegation. Its value, for $Q = 0.8$, is plotted in Figure \ref{fig:completion}. The probability rises monotonically from $0$ to $1$ as $L$ ranges from $0$ to the up-cascade limit. For $Q = 0.8$, the probability for completion is below $50\%$ when the probability for $\omega=1$ is below $\simeq 57\%$.

\subsubsection{Probability of Correctness}

We define the probability of correctness $\gamma(L, B, n)$ as the probability that the outcome of the funding will match $\omega$ given the state $(L, B, n)$, i.e., complete when $\omega = 1$, and fail to complete when $\omega = 0$. Since the probability that $\omega = 1$ is $\frac{L}{1+L}$, we have
\begin{align}
\gamma(L, B, n) = \frac{L}{1 + L} \pi_1(L, B, n) + \frac{1}{1+L} \big[1 - \pi_0(L, B, n)\big]
\end{align}

From Corollary \ref{recurrence}, $\gamma(L, B, n)$ is well-defined and continuous in $L$ except at irregular states.

Using Proposition \ref{large-n}, we can calculate the large-$n$ behavior of $\gamma(L, n, n)$ with no delegation. Its value, for $Q = 0.8$, is plotted in Figure \ref{fig:correctness}. Note that, for low likelihoods, the correctness is close to $1$, since the fundraising is almost certain to fail, as it most probably should. At  the up-cascade threshold of $L=\frac{Q}{1-Q}$, the probability of correctness is $Q$, since the fundraising is certain to complete, while the probability that this is the right outcome is $Q$. The correctness is lowest ($\simeq 72\%$) when the probability for $\omega = 1$ is $\simeq 60\%$ (for $Q=0.8$).

\begin{figure}[tb]
\centering
\begin{minipage}{.5\textwidth}
  \centering
  \includegraphics[width=1\linewidth]{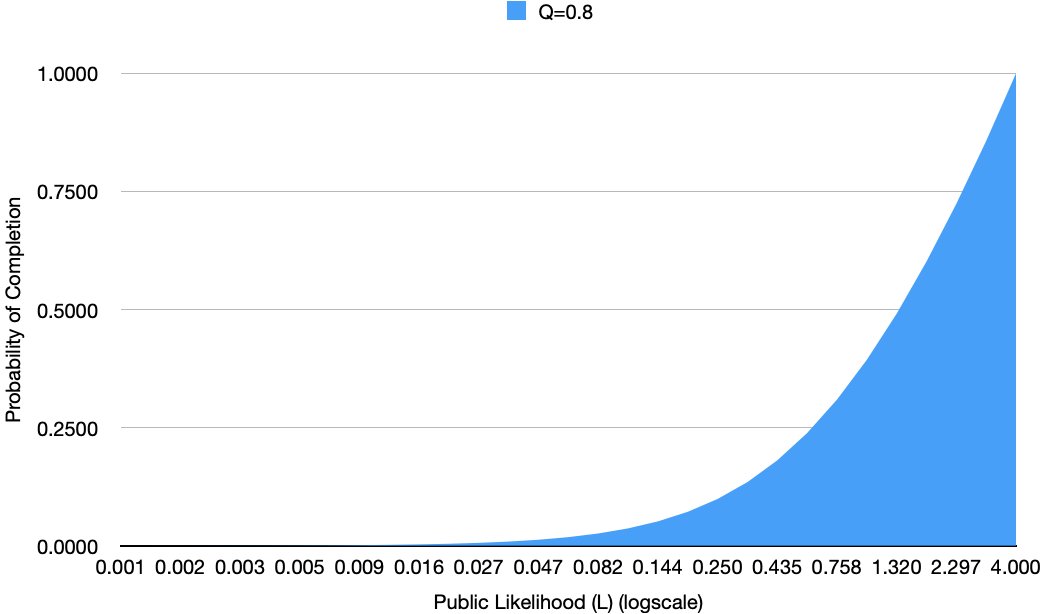}
  \caption{Probability of completion $\pi(L,\infty,\infty)$}
  \label{fig:completion}
\end{minipage}%
\begin{minipage}{.5\textwidth}
  \centering
  \includegraphics[width=1\linewidth]{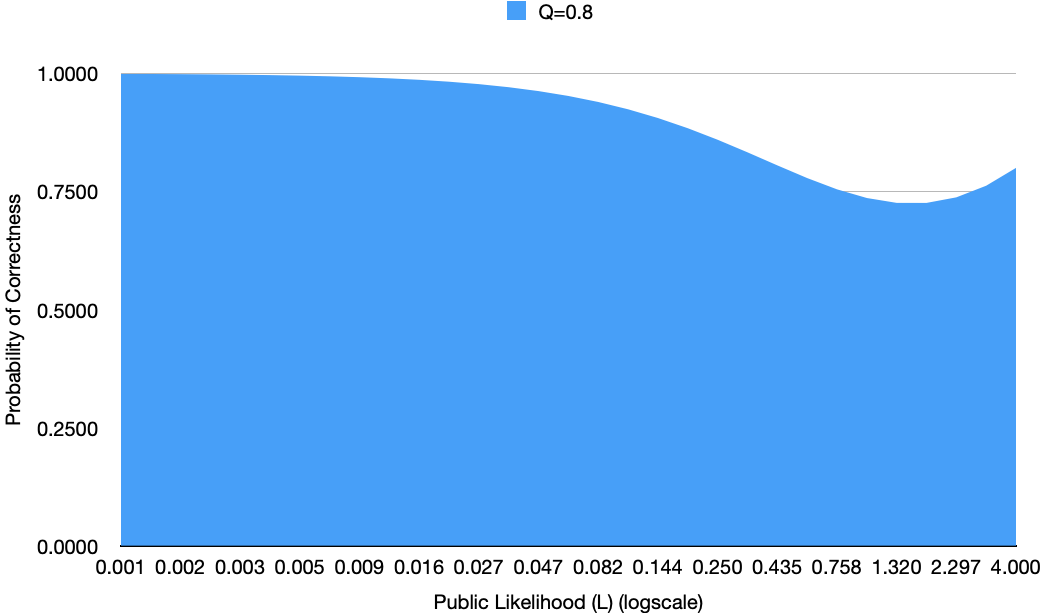}
  \caption{Probability of correctness $\gamma(L,\infty,\infty)$}
  \label{fig:correctness}
\end{minipage}
\end{figure}

\subsection{The General Case: Non-Unanimous Decisions}

\begin{table}
\s
\center
\caption{Optimal Thresholds for $n = 5, B = 3$ with $\bm{q} \sim U(0.65,0.8)$}
\label{tab:n5B3}
 \begin{tabular}{|c|r|r|r|r|r|} 
 \hline
        \textbf{Likelihood} & $x_1$ & $x_2$ & $x_3$ & $x_4$ & $x_5$\\
        \hline
0.5000& 0.445144 [1]& 0.412324 [1]& 0.223453 [1] &&\\
      &              &              &           \dr\:[0]& 0.517217 [1] & \\
      &              &              &              &           \dr\:[0]& 0.738521 [1] \\
      &              &           \dr\:[0]& 0.497065 [1]& 0.431373 [1] &\\
      &              &              &              &           \dr\:[0]& 0.666667 [1] \\
      &              &              &           \dr\:[0]& 0.660662 [1]& 0.660662 [1] \\
      &           \dr\:[0]& 0.532386 [1]& 0.497065 [1]& 0.431373 [1] &\\
      &              &              &              &           \dr\:[0]& 0.666667 [1] \\
      &              &              &           \dr\:[0]& 0.660662 [1]& 0.660662 [1] \\
      &              &           \dr\:[0]& 0.657836 [1]& 0.657836 [1]& 0.657836 [1] \\
     \hline
      1.0000& 0.301337 [1]& 0.372818 [1]& Cascade [1] & &\\
&      &                       \dr\:[0]& 0.445748 [1]& 0.372818 [1] & \\
&      &              &              &                         \dr\:[0]& 0.610461 [1] \\
&      &              &                        \dr\:[0]& 0.610461 [1]& 0.610461 [1] \\
&                \dr\:[0]& 0.463950 [1]& 0.430168 [1]& 0.300745 [1] \\
&      &              &              &                         \dr\:[0]& 0.562884 [1] \\
&      &              &                        \dr\:[0]& 0.531371 [1]& 0.531371 [1] \\
&          &                     \dr\:[0]& 0.531371 [1]& 0.531371 [1]& 0.531371 [1]  \\
\hline
2.0000 & Delegate [1] & 0.333333 [1] & Cascade [1]& & \\
          &  & \dr\:[0] & 0.423937 [1] & 0.342776 [1] & \\
            &  &           &            &            \dr\:[0] & 0.583375 [1] \\
                &        & &                     \dr\:[0] & 0.578947 [1] & 0.578947 [1] \\
                \hline
\end{tabular}
\subcaption*{{\em $a=1$ actions continue to the right. $a=0$ actions continue down to the \dr arrow and to the right.}}
\subcaption*{$x_i$ = $i$'th contributor's threshold in equilibrium.}
\n
\end{table}

\begin{table}
\s
\center
\caption{Optimal Thresholds for $n = 6, B = 4$ with $\bm{q} \sim U(0.65,0.8)$}
\label{tab:n6B4}
 \begin{tabular}{|c|r|r|r|r|r|r|} 
 \hline
        \textbf{Likelihood} & $x_1$ & $x_2$ & $x_3$ & $x_4$ & $x_5$ & $x_6$\\
        \hline
2.0000& Delegate [1]& Delegate [1]& 0.333333 [1]& Cascade [1] && \\
&      &              &                         \dr\:[0]& 0.423937 [1]& 0.342776 [1] &\\
&      &              &              &              &                         \dr\:[0]& 0.583375 [1]\\
&      &              &              &                         \dr\:[0]& 0.578947 [1]& 0.578947 [1]\\
\hline
\end{tabular}
\subcaption*{{\em $a=1$ actions continue to the right. $a=0$ actions continue down to the \dr arrow and to the right.}}
\subcaption*{$x_i$ = $i$'th contributor's threshold in equilibrium.}
\n
\end{table}

In the general case, $B < n$, the fundraising will succeed if at most $n-B$ contributors decline to invest. There are therefore $n \sum_{i = 0}^{n-B} {n \choose i}$ subgames with a positive probability for completion.

Based on the equilibrium characterization (Corollary \ref{c}), a solution for equilibrium requires the solution of simultaneous equations, one for each subgame, each formed as \eqref{threshold}. In Section \ref{B=1} we provided a complete solution of the case $B \leq 1$, and in Section \ref{B=n} we provided a thorough analysis of the case $B = n$. A similar treatment of $1 < B < n$ is at present beyond our reach, though we have provided a characterization of its cascades (Section \ref{cascades}), and a recurrence relation for the probability of completion (Section \ref{recur}). 

A numerical solution for equilibrium strategies must therefore rely on dynamic programming. 

A stumbling block in the analysis is our inability to prove the following, which we conjecture.

\begin{conjecture} [$\pi_1 \geq \pi_0$]
\label{pi-conj}
Except at irregular states $(L, B, n)$, for every public likelihood $L$, $B$ investments outstanding and $n$ players remaining $$\pi_1(L, B, n) \geq \pi_0(L, B, n)$$ with equality only in a cascade or for $B \leq 0$. 
\end{conjecture}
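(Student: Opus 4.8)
The plan is to prove the inequality $\pi_1(L,B,n) \geq \pi_0(L,B,n)$ by induction on $n$, using the recurrence for the probability of completion in Corollary \ref{recurrence}. The base cases $B \leq 0$ (where both sides are $1$) and $B > n$ (where both sides are $0$) are immediate, and for $n = 1$, $B = 1$ the completion probabilities are $1 - F_1(x)$ and $1 - F_0(x)$ respectively, so the claim reduces to stochastic dominance \eqref{dominance}. For the inductive step, fix a regular state $(L,B,n)$ with $1 \leq B \leq n$, let $x = \sigma(L,B,n)$ be the equilibrium threshold, and write $L^+ = L\frac{1-F_1(x)}{1-F_0(x)}$ and $L^- = L\frac{F_1(x)}{F_0(x)}$. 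Then
\begin{align*}
\pi_1(L,B,n) - \pi_0(L,B,n) &= [1-F_1(x)]\,\pi_1(L^+,B-1,n-1) - [1-F_0(x)]\,\pi_0(L^+,B-1,n-1) \\
&\quad + F_1(x)\,\pi_1(L^-,B,n-1) - F_0(x)\,\pi_0(L^-,B,n-1).
\end{align*}
The second line is $\geq [1-F_0(x)]\,\pi_0(L^-,B,n-1)\cdot 0$ type reasoning won't quite work directly, so the real content is to control the two lines together.

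The key observation to exploit is the threshold indifference condition \eqref{threshold}, which states $L\frac{x}{1-x} = \frac{\pi_0(L^+,B-1,n-1)}{\pi_1(L^+,B-1,n-1)}$; equivalently $[1-F_1(x)]\,\pi_1(L^+,B-1,n-1) \cdot L\frac{x}{1-x} \cdot \frac{1-F_0(x)}{1-F_1(x)}\cdot\frac{1-x}{x}$ — more usefully, it lets me rewrite the first line of the display. Writing $\pi_1^+ := \pi_1(L^+,B-1,n-1)$ and using \eqref{threshold} to substitute $\pi_0(L^+,B-1,n-1) = L\frac{x}{1-x}\pi_1^+$, the first line becomes $\pi_1^+\big([1-F_1(x)] - [1-F_0(x)]L\frac{x}{1-x}\big)$, whose sign is governed by the hazard-rate quantity $J(x) = \frac{x}{1-x}\frac{1-F_0(x)}{1-F_1(x)}$ from Proposition \ref{equal}: the first line is nonnegative iff $L\,J(x) \leq 1$. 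Similarly, the second line should be handled by applying the induction hypothesis to the state $(L^-,B,n-1)$ together with a monotonicity-in-$L$ statement for $\pi_0$ and $\pi_1$ (both nondecreasing in $L$, provable by a parallel induction, or extractable from the fact that a higher likelihood first-order dominates). The strict inequality away from cascades should follow because at least one of these two inequalities is strict unless every remaining player is herding.

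I expect the main obstacle to be exactly the interaction between the two branches. When the current player invests we move to $(L^+, B-1, n-1)$ and when she declines we move to $(L^-, B, n-1)$ — these have \emph{different} $B$'s, so the induction hypothesis is applied at two incomparable states, and it is not obvious that the weighted combination preserves the inequality without an auxiliary monotonicity lemma relating $\pi_\omega(\cdot, B, \cdot)$ and $\pi_\omega(\cdot, B-1, \cdot)$ (intuitively $\pi_\omega(L,B-1,n) \geq \pi_\omega(L,B,n)$, fewer investments needed is easier). Establishing that monotonicity in $B$, in $n$, and in $L$ simultaneously, and checking it survives the equilibrium-threshold substitution, is the delicate part; a clean route may be to prove a single stronger statement — something like $\pi_1(L,B,n) \geq \pi_0(L,B,n)$ \emph{and} $L\cdot\frac{\text{something}}{\text{something}}$ bounds — by a joint induction, so that the threshold condition \eqref{threshold} can be fed back in. This is presumably why the authors were only able to state it as a conjecture rather than prove it outright.
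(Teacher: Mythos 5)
There is no proof of this statement in the paper to compare against: it is stated explicitly as a conjecture, and the authors say outright that their inability to prove it is ``a stumbling block in the analysis.'' Your proposal is a proof sketch, not a proof, and you concede as much in your final sentence; so the relevant question is whether your sketch actually closes the gap the authors could not. It does not. The invest branch is handled correctly: by the induction hypothesis applied at the $(n-1)$-player state $(L^+,B-1,n-1)$, the indifference condition \eqref{threshold} gives $L\frac{x}{1-x}\leq 1$, and combined with $\frac{1-F_0(x)}{1-F_1(x)}\leq 1$ from \eqref{dominance} this makes the first line nonnegative. But the decline branch $F_1(x)\,\pi_1(L^-,B,n-1)-F_0(x)\,\pi_0(L^-,B,n-1)$ is the real problem, and your diagnosis of it is slightly off: the difficulty is not merely that the two child states have different $B$'s, but that the weights $F_1(x)\leq F_0(x)$ run \emph{against} the induction hypothesis. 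Even with $\pi_1(L^-,B,n-1)\geq\pi_0(L^-,B,n-1)$ in hand, the decline branch can be strictly negative (take the two child probabilities equal and positive), so one must show that the surplus from the invest branch dominates the deficit from the decline branch. Nothing in your sketch produces that quantitative comparison.

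The auxiliary lemmas you propose to bridge this --- monotonicity of $\pi_\omega$ in $L$, in $B$, and in $n$ --- are themselves not established and are not obviously true in equilibrium: $\pi_\omega(L,B,n)$ depends on $L$ both directly and through the equilibrium threshold $\sigma(L,B,n)$, and the paper's own Figure \ref{fig:6580} exhibits thresholds that are non-monotonic in $L$, so a ``parallel induction'' for monotonicity in $L$ faces the same entanglement with the equilibrium selection that blocks the main claim. Your closing suggestion of a strengthened joint induction hypothesis is the right instinct, but until a specific strengthened statement is written down and verified to be self-propagating through both branches of \eqref{pi-recurrence}, the argument has a hole exactly where the authors' does.
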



Conjecture \ref{pi-conj} has an immediate consequence on mutual insurance, namely that in {\em every} state in a sequential fundraising players assume mutual insurance. Indeed, wherever the conjecture is true, \eqref{threshold} translates into $L \frac{x}{1-x} \leq 1$, with equality only for $B \leq 1$ or in a cascade. On the other hand, with the same public likelihood, the last player's equilibrium strategy $x'$ satisifies $L \frac{x'}{1-x'} = 1$. Therefore $x \leq x'$.

\begin{corollary} [Mutual Insurance]
\label{insurance}
Let $x_i$ be the strategy of player $i \in [n]$ in MPE with prior likelihood $L$, $B$ investments outstanding and $n$ players remaining. If Conjecture \ref{pi-conj} holds, i.e., if $\pi_1(L, B, n) \geq \pi_0(L, B, n)$, then $x_i \leq \sigma(L, 1, 1)$ with equality only for the last player or in an up-cascade.
\end{corollary}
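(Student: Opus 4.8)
The plan is to derive the bound directly from the Threshold Indifference Condition \eqref{threshold}, the first-order stochastic dominance \eqref{dominance}, and Proposition \ref{B=1}, using Conjecture \ref{pi-conj} as the single substantive input; this is exactly the two-line argument sketched before the statement, so no new machinery is needed, and the work is almost entirely bookkeeping.

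Concretely, I would proceed as follows. First, by Corollary \ref{c}, in the regular state $(L,B,n)$ the player about to move plays the unique MPE threshold strategy $\mathcal{S}(x)$, whose threshold (or, when it falls outside the type support, indifference point) satisfies \eqref{threshold}; that is, $L\frac{x}{1-x}$ equals the ratio $\pi_0(L^+,B-1,n-1)/\pi_1(L^+,B-1,n-1)$ with $L^+ = L\,\frac{1-F_1(x)}{1-F_0(x)}$. Second, I would record that $L^+ \ge L$ by \eqref{dominance}, so the successor state $(L^+,B-1,n-1)$ is reached with positive probability and hence is itself regular, and — as long as $B \ge 1$ — is a state from which completion still has positive probability in the state $\omega = 1$, so the denominator of the ratio is positive and the ratio is well defined. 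Third, I would apply Conjecture \ref{pi-conj} at $(L^+,B-1,n-1)$ to obtain $\pi_1 \ge \pi_0$ there, hence $L\frac{x}{1-x} \le 1$, i.e.\ $x \le \frac{1}{1+L}$. Finally, Proposition \ref{B=1} with $B=n=1$ gives $\sigma(L,1,1) = \frac{1}{1+L}$, so $x \le \sigma(L,1,1)$, which is the claimed inequality; for a player $i$ who is not the first mover this is applied at the public likelihood $i$ actually faces (equivalently, by the Markov property, one treats that node as the start of a fresh subgame).

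The delicate part — and the step I would treat most carefully — is the equality clause. I would argue it by reversing the chain: $x = \frac{1}{1+L}$ forces the ratio in \eqref{threshold} to equal exactly $1$, i.e.\ $\pi_1(L^+,B-1,n-1) = \pi_0(L^+,B-1,n-1)$, and by the ``equality only in a cascade or for $B\le 0$'' half of Conjecture \ref{pi-conj} this can occur only when $B-1 \le 0$ (the $B\le 1$ information-cascades regime of Proposition \ref{B=1}, in which the mover is effectively the last decisive player and assumes no social insurance) or when the successor state is in a cascade. Since $L^+ \ge L$, a downstream up-cascade is exactly what arises when $L \ge \frac{Q}{1-Q}$ (Theorem \ref{up-cascade}), and there $\pi_1=\pi_0=1$ and \eqref{threshold} genuinely yields $x = \frac{1}{1+L}$, which matches the stated exception. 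A downstream down-cascade would give $\pi_1=\pi_0=0$, making \eqref{threshold} an uninformative $0/0$; this corresponds to a herd of decliners (threshold $\ge Q$, by Corollary \ref{down-cascade}) rather than to the equality $x=\frac{1}{1+L}$, so I would dispose of that case by appealing to the cascade characterizations directly instead of to \eqref{threshold}. The only genuine obstacle in the whole corollary is Conjecture \ref{pi-conj} itself, which is assumed; granting it, everything above is routine.
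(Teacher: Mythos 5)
Your proof is correct and follows essentially the same route as the paper, which derives $L\frac{x}{1-x}\le 1$ from the threshold indifference condition \eqref{threshold} together with Conjecture \ref{pi-conj} applied at the successor state, and then compares with the single-player threshold $\frac{1}{1+L}$ from Proposition \ref{B=1}. You are in fact somewhat more careful than the paper's two-line inline argument, particularly in tracing the equality clause back through the ``equality only in a cascade or for $B\le 0$'' part of the conjecture and in flagging the degenerate $0/0$ down-cascade case.
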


\subsubsection{Example Games}

Tables \ref{tab:n5B3} and \ref{tab:n6B4} illustrate all equilibrium strategies for two sample games, the first requiring 3 investments from 5 contributors, the second requiring 4 investments from 6 contributors. In both, qualities are uniformly distributed from $R=0.65$ to $Q=0.8$.

Optimal thresholds are given in Table \ref{tab:n5B3} for $L = 0.5, 1$ and $2$, i.e., for $\Pr[\omega=1] = \frac{1}{3}, \frac{1}{2}$ and $\frac{2}{3}$. In the latter case ($L=2$), the first player delegates, herding on investment. Then, if the second player's type is at least $1/3$, he invests, and an up-cascade leads to the third required investment. Otherwise, the game continues as shown in the table.

In Table \ref{tab:n6B4} optimal thresholds are given for $L=2$ only, and illustrate the reverse cascade effect: The first player's delegation ``continues'' the first player's delegation in Table \ref{tab:n5B3} for the same likelihood.

\section{Sequential Voting}
\label{voting}

Here we consider a slightly modified model, that captures sequential voting better than sequential fundraising. A resolution is passed if $B$ or more voters vote ``aye'', and voters are motivated to pass ``good'' resolutions and vote down ``bad'' resolutions. If the resolution does not pass, no one gains or loses. While  in a sequential fundraising those who do not contribute have zero utility, regardless of the outcome, in a sequential vote, voters who say ``nay'' (equivalent to declining in a fundraising) have the same payoff structure as those who say ``aye'' (equivalent to contributing in a fundraising).

As already noted, in unanimity games, there is no difference between sequential fundraising and sequential voting. Since, as we have shown, mutual insurance and delegation are features of unanimity games, they are also features of general voting games.

Based on the model in Section \ref{model}, the {\em only} modification we make is that {\em all} players get utility $2 \omega - 1$ (rather than only players $i$ who played $a_i = 1$).

Retracing our analysis, Section \ref{recur} is unchanged. The probability of "completion" now signifies the probability of passing the resolution. Inferences from actions (Section \ref{inferences}) remain unchanged.

Equilibrium strategies are, again, threshold strategies. To see this, refer to the proof of Theorem \ref{threshold-strategy}: Equation \ref{expectation-phi} remains the player's expectation if she votes ``aye'', but her expectation for voting ``nay'' is no longer $0$ but instead
\begin{align}
V^{\phi_1}(L, B, n, t) &= \Pr[\omega = 1 | L, t] \pi_1(L^-, B, n - 1) - \Pr[\omega = 0 | L, t] \pi_0(L^-, B, n - 1) \nonumber \\
\label{expectation-phiV}
&= \Big(1 - \frac{1}{1 + L \frac{t}{1-t}}\Big) \pi_1(L^-, B, n - 1) - \frac{1}{1 + L \frac{t}{1-t}} \pi_0(L^-, B, n- 1)
\end{align}
So the player prefers voting ``aye'' to ``nay'' iff
\begin{align}
\label{invest-V}
L \frac{t}{1-t} \geq \frac{\pi_0(L^+, B - 1, n - 1) - \pi_0(L^-, B,  n-1)}{\pi_1(L^+, B - 1, n- 1) - \pi_1(L^-, B, n-1)}
\end{align}

Since the right-hand side of \eqref{invest-V} does not depend on $t$, the rest of the proof of Theorem \ref{threshold-strategy} applies, with the conclusion (the counterpart of Corollary \ref{indifference}) that equilibrium strategies are threshold strategies, the threshold $x$ satisfying
\begin{align}
\label{threshold-V}
L \frac{x}{1-x} = \frac{\pi_0\big(L \frac{1 - F_1(x)}{1 - F_0(x)}, B - 1, n - 1\big) - \pi_0\big(L \frac{F_1(x)}{F_0(x)}, B, n - 1\big)}{\pi_1\big(L \frac{1 - F_1(x)}{1 - F_0(x)}, B - 1, n- 1\big) - \pi_1\big(L \frac{F_1(x)}{F_0(x)}, B, n- 1\big)}
\end{align}

Note that, in unanimity voting games, $B=n$, the terms $ \pi_\omega\big(L \frac{F_1(x)}{F_0(x)}, B, n - 1\big)$ are equal to $0$, so that \eqref{threshold-V} is identical to \eqref{threshold}, the sequential fundraising equilibrium condition.

Similarly there are modified versions of \eqref{pi_diff}, the {\em type-independent utility expectation}, and \eqref{discriminator}, the {\em discriminator}, for tie-breaking multiple solutions of \eqref{threshold-V}, and a modified definition of irregular states, analogous to Corollary \ref{irregular}.

However, in sequential voting, the case $B = 1$, when a single ``aye'' is needed to pass the resolution, is {\em not} equivalent to an information cascades setting. (Observe that, the right-hand side of \eqref{threshold-V}, unlike that of \eqref{threshold}, is not equal to $1$ when substituting $B = 1$). Solving the case $B=1$ for equilibrium strategies is already non-trivial. The case $B \leq 0$ (i.e., when the resolution has already passed) is equivalent to information cascades.

The herd-invest condition is, from \eqref{threshold}
\begin{align}
\label{herd-invest-V}
L \frac{1-Q}{Q} \geq \frac{\pi_0\big(L, B - 1, n - 1\big) - \pi_0\big(L \frac{1-Q}{Q}, B, n - 1\big)}{\pi_1\big(L, B - 1, n- 1\big) - \pi_1\big(L \frac{1-Q}{Q}, B, n- 1\big)}
\end{align}

\noindent and the herd-decline condition is
\begin{align}
\label{herd-decline-V}
L \frac{Q}{1-Q} \leq \frac{\pi_0\big(L \frac{Q}{1-Q}, B - 1, n - 1\big) - \pi_0\big(L, B, n - 1\big)}{\pi_1\big(L \frac{Q}{1-Q}, B - 1, n- 1\big) - \pi_1\big(L, B, n- 1\big)}
\end{align}

In a down-cascade, all remaining players vote ``nay''. Applying \eqref{herd-decline-V} repeatedly, we get, as for sequential fundraising, a down-cascade for $L \leq \Big(\frac{1-Q}{Q}\Big)^B$.

In an up-cascade, all remaining players vote ``aye''. Applying \eqref{herd-invest-V} repeatedly, clearly an up-cascade is in progress for $L \geq \Big(\frac{Q}{1-Q}\Big)^{n-B+1}$, but this is not necessarily tight.

\section{Discussion}
\label{discussion}

\subsection{Conclusions}

We presented and analyzed a model of sequential fundraising, and formulate equations satisfied in equilibrium. We demonstrated, among else, the phenomena of mutual insurance and delegation, that threshold strategies are played, and the frequent equality of thresholds. We showed that this applies to sequential voting, too. Interesting game-theoretical aspects include a relation of the solution to the uniqueness of the simultaneous-game equilibrium.

The phenomena show that the results of a sequential fundraising or vote should be taken with a grain of salt, especially for successful ones (completion in a fundraising, and adoption in a vote). Even a massive up vote is not proof that the resolution in question is solid. This holds true even when some late participants decline, proving that no cascade is in progress. Special skepticism should be directed towards the proposition that actions of early participants reflect their judgement, as they may be shading it higher, or completely abdicating their role in voicing their true opinion. 

\subsection{The Importance of Non-Uniform Qualities}
\label{non-uniform}

The study of information cascades started with the assumption of equally-informed players, then progressed to signal models which translate, in our model, to non-uniform qualities from a continuous range. In that context, this sought to generalize a simplifying assumption to a more credible one. In our sequential fundraising model, distancing from uniform qualities is of more fundamental importance, as equilibrium behavior becomes progressively more chaotic, and of dubious economical significance, as the quality range is more restricted. Compare, e.g., the behavior in Figure \ref{fig:7580} to that in Figures \ref{fig:5080} and \ref{fig:6580}. 

\subsection{Future Work}

In the general fundraising setting ($B \leq n$) many questions remain open. Our Conjecture \ref{pi-conj} and its Corollary \ref{insurance} await settlement. Some properties of delegation (Theorem \ref{delegation}) are most probably true in the general case, perhaps all of them.

We did not investigate some questions, most of which seem difficult: Do players always play different thresholds when such an equilibrium is playable? Players often, but not always, play increasingly-high thresholds according to their order (See Figures \ref{fig:5080}-\ref{fig:6580-4}). What rule guides this? 
An efficient dynamic programming solution will be useful in investigating the general problem.

A different set of questions is raised when contributors are not statistically equivalent, whether they come from different quality distributions, or whether their budgets for contribution are different. The main question is: Given a choice, in what order should the fundraiser approach the potential contributors? This, of course, introduces a new kind of player, and raises the complexity of the analysis to another level.

In our model, the timing of decisions is up to the fundraiser. What happens if the timing is endogenous? The contributors observe all actions (and failures to act), and invest, if they so decide, at a time of their choosing. Assume time is punctuated by rounds. Clearly, after a finite number of rounds have passed without action, the fundraising can be terminated. This leads to very different reasoning on the contributors' part. This variation more closely models crowdfunding than seed fundraising.

\bibliographystyle{apalike}
\bibliography{crowdfunding}

\begin{thebibliography}{}

\bibitem[Acemoglu et~al., 2011]{Acemoglu2011}
Acemoglu, D., Daleh, M.~A., Lobel, I., and Ozdaglar, A. (2011).
\newblock {Bayesian Learning in Social Networks}.
\newblock {\em Review of Economic Studies}, 78:1201--1236.

\bibitem[Alaei et~al., 2016]{Alaei2016}
Alaei, S., Malekian, A., and Mostagir, M. (2016).
\newblock {Working Paper A Dynamic Model of Crowdfunding}.
\newblock In {\em Proceedings of the 2016 ACM Conference on Economics and
  Computation}, Maastricht, The Netherlands. ACM.

\bibitem[Ali and Kartik, 2012]{Ali2012}
Ali, S.~N. and Kartik, N. (2012).
\newblock {Herding with collective preferences}.
\newblock {\em Economic Theory}, 51(3):601--626.

\bibitem[Arieli et~al., 2018]{arieli2018one}
Arieli, I., Koren, M., and Smorodinsky, R. (2018).
\newblock {The One-Shot Crowdfunding Game}.
\newblock In {\em EC '18: Proceedings of the 2018 ACM Conference on Economics
  and Computation}, pages 213--214. ACM.

\bibitem[Arieli et~al., 2019a]{Arieli2018}
Arieli, I., Koren, M., and Smorodinsky, R. (2019a).
\newblock {Information aggregation in large collective decisions.}
\newblock {\em Working Paper}.

\bibitem[Arieli et~al., 2019b]{Arieli2018a}
Arieli, I., Koren, M., and Smorodinsky, R. (2019b).
\newblock {The Implications of Pricing on Social Learning}.
\newblock {\em EC '19: Proceedings of the 2019 ACM Conference on Economics and
  Computation}.

\bibitem[Ban and Mansour, 2018]{ban2018all}
Ban, A. and Mansour, Y. (2018).
\newblock Are all experts equally good? a study of analyst earnings estimates.
\newblock {\em arXiv preprint arXiv:1806.06654}.

\bibitem[Banerjee, 1992]{Banerjee1992}
Banerjee, A. (1992).
\newblock {A Simple Model of Herd Behavior}.
\newblock {\em The Quarterly Journal of Economics}, 107(3):797--817.

\bibitem[Bergemann and Hege, 2004]{Bergemann2004}
Bergemann, D. and Hege, U. (2004).
\newblock {The financing of innovation: Learning and stopping}.
\newblock {\em Cowles Foundation Discussion}, 1292R(4):14--17.

\bibitem[Bergemann et~al., 2008]{Bergemann2008}
Bergemann, D., Hege, U., and Peng, L. (2008).
\newblock {Cowles Foundation}.
\newblock {\em Cowles Foundation for Research in Economics at Yale University,
  Discussion Paper}, 1682(1398).

\bibitem[Bikhchandani et~al., 1992]{bikhchandani1992theory}
Bikhchandani, S., Hirshleifer, D., and Welch, I. (1992).
\newblock {A theory of fads, fashion, custom, and cultural change as
  informational cascades}.
\newblock {\em Journal of political Economy}, 100(5):992--1026.

\bibitem[Callander, 2007]{Callander2007}
Callander, S. (2007).
\newblock {Bandwagons and Momentum in Sequential Voting}.
\newblock {\em The Review of Economic Studies}, 74(3):653--684.

\bibitem[Dekel and Piccione, 2000]{Dekel2000}
Dekel, E. and Piccione, M. (2000).
\newblock {Sequential voting procedures in symmetric binary elections}.
\newblock {\em Journal of Political Economy}, 108(1):34--55.

\bibitem[Ely, 2017]{Ely2017}
Ely, J.~C. (2017).
\newblock {Beeps}.
\newblock {\em The American Economic Review}, 107(1):31--53.

\bibitem[Eyster et~al., 2013]{Eyster2013}
Eyster, E., Galeotti, A., Kartik, N., and Rabin, M. (2013).
\newblock {Congested Observational Learning}.
\newblock {\em Games and Economic Behavior}, 87(283454):1--32.

\bibitem[Goel et~al., 2010]{goel2010prediction}
Goel, S., Reeves, D.~M., Watts, D.~J., and Pennock, D.~M. (2010).
\newblock Prediction without markets.
\newblock In {\em Proceedings of the 11th ACM conference on Electronic
  commerce}, pages 357--366. ACM.

\bibitem[Halac et~al., 2020]{Halac2020}
Halac, M., Kremer, I., and Winter, E. (2020).
\newblock {Raising capital from heterogeneous investors†}.
\newblock {\em American Economic Review}, 110(3):889--921.

\bibitem[Hellmann and Thiele, 2015]{Hellmann2015}
Hellmann, T. and Thiele, V. (2015).
\newblock {Friends or foes? The interrelationship between angel and venture
  capital markets}.
\newblock {\em Journal of Financial Economics}, 115(3):639--653.

\bibitem[Herrera and H{\"{o}}rner, 2012]{herrera2012necesssary}
Herrera, H. and H{\"{o}}rner, J. (2012).
\newblock {A Necesssary and Sufficient Condition for Information Cascades}.

\bibitem[Herrera and H{\"{o}}rner, 2013]{Herrera2013}
Herrera, H. and H{\"{o}}rner, J. (2013).
\newblock {Biased Social Learning}.
\newblock {\em Games and Economic Behavior}, 80:131--146.

\bibitem[Maskin and Tirole, 2001]{maskin2001markov}
Maskin, E. and Tirole, J. (2001).
\newblock {Markov perfect equilibrium: I. Observable actions}.
\newblock {\em Journal of Economic Theory}, 100(2):191--219.

\bibitem[Mollick, 2014]{Mollick2014}
Mollick, E.~R. (2014).
\newblock {The dynamics of crowdfunding: An exploratory study}.
\newblock {\em Journal of Business Venturing}, 29(1):1--16.

\bibitem[Moscarini and Ottaviani, 2001]{Moscarini2001a}
Moscarini, G. and Ottaviani, M. (2001).
\newblock {Price competition for an informed buyer}.
\newblock {\em Journal of Economic Theory}, 101(2):457--493.

\bibitem[Mueller-frank, 2012]{Mueller-frank2012}
Mueller-frank, M. (2012).
\newblock {Market Power, Fully Revealing Prices and Welfare}.

\bibitem[Smith and S{\o}rensen, 2000]{Smith2012}
Smith, B. Y.~L. and S{\o}rensen, P. (2000).
\newblock {Pathological Outcomes of Observational Learning.}
\newblock {\em Econometrica}, 68(2):371--398.

\bibitem[Smith et~al., 2017]{Smith2017exp}
Smith, L., S{\o}rensen, P.~N., and Tian, J. (2017).
\newblock {Informational Herding, Optimal Experimentation, and Contrarianism *
  †}.
\newblock Technical report.

\bibitem[Strausz, 2017]{Strausz2017}
Strausz, R. (2017).
\newblock {A Theory of Crowdfunding - a mechanism design approach with demand
  uncertainty and moral hazard}.
\newblock {\em American Economic Review}, 107(6):1--40.

\end{thebibliography}

\newpage
\appendix

\section{Proof of Proposition \ref{equally-optimal}}

\begin{proof}
Suppose the player plays $\mathcal{S}(x_i)$ with probability $1 - \epsilon$, and $\mathcal{S}(x_j)$ with probability $\epsilon$. This mixed strategy induces a posterior likelihood $L^+(\epsilon)$ which limits at the unmixed posterior $L \frac{1 - F_1(x_i)}{1 - F_0(x_i)}$ as $\epsilon \to 0$. Similarly, $\pi_\omega(L^+(\epsilon), B-1, n-1)$ limits at $\pi_\omega\big(L \frac{1 - F_1(x_i)}{1 - F_0(x_i)}, B - 1, n - 1\big)$ as $\epsilon \to 0$, since $\pi_\omega$ is continuous except at irregular states (Corollary \ref{recurrence}).

The pure strategies have the same utility $U$, so,
\s\begin{align*}
U =&  \frac{L}{1 + L}[1 - F_1(x_i)] \pi_1\big(L \frac{1 - F_1(x_i)}{1 - F_0(x_i)}, B - 1, n - 1\big) -  \frac{1}{1 + L}[1 - F_0(x_i)] \pi_0\big(L \frac{1 - F_0(x_i)}{1 - F_0(x_i)}, B - 1, n - 1\big) = \\
=&  \frac{L}{1 + L}[1 - F_1(x_j)] \pi_1\big(L \frac{1 - F_1(x_j)}{1 - F_0(x_j)}, B - 1, n - 1\big) -  \frac{1}{1 + L}[1 - F_0(x_j)] \pi_0\big(L \frac{1 - F_0(x_j)}{1 - F_0(x_j)}, B - 1, n - 1\big)
\end{align*}\n

The utility of the mixed strategy, $U_{i,j}(\epsilon)$, is
\s\begin{align*}
U_{i,j}(\epsilon) =  \frac{L}{1 + L}\big\{(1-\epsilon)[1 - F_1(x_i)] &+ \epsilon [1 - F_1(x_j)]\big\}\pi_1\big(L \frac{1 - F_1(x_i)}{1 - F_0(x_i)}, B - 1, n - 1\big) \\
&-  \frac{1}{1 + L}\big\{(1-\epsilon)[1 - F_0(x_i)] + \epsilon [1 - F_0(x_j)]\big\}\ \pi_0\big(L \frac{1 - F_0(x_i)}{1 - F_0(x_i)}, B - 1, n - 1\big) \\
 = (1 - \epsilon)& U + \frac{\epsilon}{1+L} \big\{L[1 - F_1(x_j)] \pi_1(L^+(\epsilon), B - 1, n - 1) - [1 - F_0(x_j)] \pi_0(L^+(\epsilon), B - 1, n - 1)\big\}
\end{align*}\n

As $\epsilon \to 0$, $\frac{\max_{j \neq i} U_{i,j}(\epsilon) - U}{\epsilon} \to \frac{U + D(x_i)}{1 + L}$. Therefore, if $D(x_i) > D(x_k)$, for a sufficiently small $\epsilon$ the player prefers to maximize the probability of $\mathcal{S}(x_i)$, choosing to mix it with the strategy of player $\arg\max_{j \neq i} U_{i,j}(\epsilon)$, rather than maximize the probability of $\mathcal{S}(x_k)$, mixed with any strategy.
\end{proof}

%
%

\section{Proof of Lemma \ref{monotone}}
\begin{proof}
We shall prove the lemma by showing that the differential of the log of each expression is positive.

Define $\phi(x) := f_0(x) / (1-x) = f_1(x) / x$. We show that, for $x \in (1 - Q, Q)$
\begin{align}
\label{ineq-f}
\frac{1 -F_0(x)}{1-x} < \frac{1 - F_1(x)}{x}
\end{align}

For
\begin{align*}
\frac{1 -F_0(x)}{1-x} - \frac{1 - F_1(x)}{x} &= \frac{1}{1-x}\int\limits_x^Q f_0(y) dy - \frac{1}{x}\int\limits_x^Q f_1(y) dy \\
&= \int\limits_x^Q \frac{1 - y}{1 - x} \phi(y) dy - \int\limits_x^Q\frac{y}{x} \phi(y) dy \\
&=  \int\limits_x^Q \Big(\frac{1 - y}{1 - x} - \frac{y}{x}\Big) \phi(y) dy = \int\limits_x^Q \frac{x - y}{x(1 - x)} \phi(y) dy < 0
\end{align*}

\noindent because the integrand is negative for $0 < x < y < 1$.

Therefore
\begin{align*}
\frac{d}{dx} \log \frac{1 - F_1(x)}{1 - F_0(x)} = \frac{f_0(x)}{1 - F_0(x)} - \frac{f_1(x)}{1 - F_1(x)} = \phi(x) \Big[\frac{1 - x}{1 - F_0(x)} - \frac{x}{1 - F_1(x)}\Big] > 0
\end{align*}
\noindent with the last inequality following from \eqref{ineq-f}.

Similarly, we show that, for $x \in (1 - Q, Q)$
\begin{align}
\label{ineq-f1}
\frac{F_0(x)}{1-x} > \frac{F_1(x)}{x}
\end{align}

For
\begin{align*}
\frac{F_0(x)}{1-x} - \frac{F_1(x)}{x} &= \frac{1}{1-x}\int\limits_{1-Q}^x f_0(y) dy - \frac{1}{x}\int\limits_{1-Q}^x f_1(y) dy \\
&= \int\limits_{1-Q}^x \frac{1 - y}{1 - x} \phi(y) dy - \int\limits_{1-Q}^x\frac{y}{x} \phi(y) dy \\
&=  \int\limits_{1-Q}^x \Big(\frac{1 - y}{1 - x} - \frac{y}{x}\Big) \phi(y) dy = \int\limits_{1-Q}^x \frac{x - y}{x(1 - x)} \phi(y) dy > 0
\end{align*}

\noindent because the integrand is positive for $0 < y < x < 1$.

Therefore
\begin{align*}
\frac{d}{dx} \log \frac{F_1(x)}{F_0(x)} = \frac{f_1(x)}{F_1(x)} - \frac{f_0(x)}{F_0(x)} = \phi(x) \Big[\frac{x}{F_1(x)} - \frac{1 - x}{F_0(x)}\Big] > 0
\end{align*}
\noindent with the last inequality following from \eqref{ineq-f1}
\end{proof}

\section{Proof of Proposition \ref{min-R}}
\begin{proof}
The inference from investment, substituting \eqref{below-1} and \eqref{below-0} in \eqref{infer-invest} is
\begin{align}
\label{infer-invest1}
L^+(L, x) =  \left\{  \begin{array}{ll}
L \frac{2(Q - R) - x^2 + (1 - Q)^2}{2(Q - R) - Q^2 + (1 - x)^2}  & x \leq 1 - R \\
L \frac{Q + R}{2 - Q - R}& 1-R \leq x \leq R \\
L \frac{Q + x}{2 - Q - x} & x \geq R
\end{array} \right.
\end{align}

By \eqref{last-strategy}, if a cascade is not in progress, a player in the information-cascades setting with public likelihood $L$ has threshold $x$ for which $L = \frac{1-x}{x}$. For such a player's investment to trigger an up-cascade requires $$L \frac{1 - F_1(x)}{1- F_0(x)} \geq \frac{Q}{1-Q}$$ Substituting, an up-cascade needs a solution $x \in (1-Q,Q)$ of
\begin{align}
\label{trigger}
H(x) := \frac{1-x}{x} \frac{1 - F_1(x)}{1- F_0(x)} \geq \frac{Q}{1-Q}
\end{align}

\noindent First, note that, when $R = 1/2$, substituting \eqref{infer-invest1} in \eqref{trigger} yields
\begin{align}
\label{trigger-high}
H(x) = \frac{1-x}{x} \frac{Q+x}{2 -Q - x} \geq \frac{Q}{1-Q}
\end{align}

\noindent which has no solutions in  $(0, 1)$, so there are no up-cascades when $R = 1/2$.

For $x \geq R$, substituting \eqref{infer-invest1} in \eqref{trigger} also results in \eqref{trigger-high}, so cascades can start only for $x < R$. Since $H(1-R) > H(x)$ for every $x \in [1-R, R]$, we may restrict the investigation to $x \leq 1 - R$.

Define $$h(Q) := \frac{1}{1 + \sqrt{\frac{(1-Q)(1+Q)}{Q(2-Q}}}$$ We shall first show that, for every $Q > 1/2$, an up-cascade is possible with $R = h(Q)$. Specifically, a player with threshold $1-R$ will start it. Since, from \eqref{trigger}, $R = h(Q)$ is a solution of
\begin{align}
\label{trigger-low}
H(1-R) = \frac{R}{1-R} \frac{Q+R}{2 -Q - R} = \frac{Q}{1-Q}
\end{align}

Now observe that $\frac{R}{1-R}$ and $\frac{Q+R}{2 -Q - R}$ are both increasing with $R$. Therefore, from \eqref{trigger-low}, for every $R \geq h(Q)$
$$H(1-R) = \frac{R}{1-R} \frac{Q+R}{2 -Q - R} \geq \frac{Q}{1-Q}$$ so that cascades occur (and specifically at $x = 1 - R$) for every $R \geq h(Q)$.

Now we show that cascade do {\em not} occur when $R < h(Q)$.

As noted, $\frac{R}{1-R} \frac{Q+R}{2 -Q - R}$ is increasing with $R$, so that, for $R < h(Q)$, at $x = 1 - R$, $H(1-R) = \frac{R}{1-R} \frac{Q+R}{2 -Q - R} < \frac{Q}{1-Q}$, so a player with $x = 1-R$ does {\em not} start a cascade. Furthermore, $x = 1 - Q$ satisfies \eqref{trigger} with equality $H(1-Q) = \frac{Q}{1 - Q} \frac{1 - F_1(1 - Q)}{1- F_0(1 - Q)} \geq \frac{Q}{1-Q}$. If we could prove that $H(x)$ is convex between $1-Q$ and $1-R$, it would prove that \eqref{trigger} has no solutions in $(1 - Q, 1 - R]$, and therefore no solutions at all for $R < h(Q)$.

By \eqref{infer-invest}, for $x \in [1-Q, 1-R]$ 
\begin{align}
\label{J}
H(x) = \frac{1-x}{x} \frac{2(Q - R) - x^2 + (1 - Q)^2}{2(Q - R) - Q^2 + (1 - x)^2}
\end{align}

For \eqref{J}, $\frac {d^2}{dx^2} H(x) \leq 0$ everywhere in $[1 - Q, 1 - R]$ (according to {\em Wolfram Alpha}). So $H(x)$ is convex, completing the proof for $R < h(Q)$.
\end{proof}

\section{Proof of Theorem \ref{delegation}}
\begin{proof}
An up-cascade can start iff there exists a non-cascade prior likelihood $L \in (\frac{1-Q}{Q}, \frac{Q}{1-Q})$ for which an investment by a player induces the posterior likelihood above the up-cascade threshold $\frac{Q}{1-Q}$. Since when $B \leq 1$ a player's threshold strategy $x$ satisfies $L = \frac{1-x}{x}$ (by \eqref{last-strategy}), and the posterior likelihood after investment is $L\frac{1-F_1(x)}{1-F_0(x)}$, this is conditioned on
\begin{align}
\label{start-cascade}
\frac{1-x}{x} \frac{1-F_1(x)}{1-F_0(x)} \geq \frac{Q}{1-Q}
\end{align}

Since $\frac{1-F_1(x)}{1-F_0(x)} \leq \frac{Q}{1-Q}$ (by Lemma \ref{monotone} and \eqref{limQ}), for \eqref{start-cascade} to hold we must have $\frac{1-x}{x} \geq 1$, or $x \leq \frac{1}{2}$.

Let $x_i$ be a player who delegates, and $x_j$ a player who does not. Then, from \eqref{una-threshold}
\begin{align*}
L \frac{1-Q}{Q} &\geq L \frac{x_i}{1 - x_i} = \prod_{k \in [n], k \neq i} \frac{1- F_0(x_k)}{1- F_1(x_k)} = \prod_{k \in [n]} \frac{1- F_0(x_k)}{1- F_1(x_k)}
\end{align*}
\noindent since $x_i \leq 1-Q$ and $\frac{1 - F_0(x_i)}{1- F_1(x_i)} = 1$.
\begin{align*}
L \frac{x_j}{1 - x_j} &= \prod_{k \in [n], k \neq j} \frac{1- F_0(x_k)}{1- F_1(x_k)} = \frac{1- F_1(x_j)}{1- F_0(x_j)} \prod_{k \in [n]} \frac{1- F_0(x_k)}{1- F_1(x_k)} 
\end{align*}

Combining the above
\begin{align*}
L \frac{x_j}{1 - x_j} \leq L \frac{1-Q}{Q} \frac{1- F_1(x_j)}{1- F_0(x_j)}
\end{align*}

Or
\begin{align*}
\frac{1-x_j}{x_j} \frac{1-F_1(x_j)}{1-F_0(x_j)} \geq \frac{Q}{1-Q}
\end{align*}

\noindent which is \eqref{start-cascade} for $x = x_j$. Therefore delegation implies that cascades can start, and all players have thresholds $\leq \frac{1}{2}$, proving items \eqref{can-delegate} and \eqref{half}.

The delegation is always by the earliest players, if a delegation strategy always has the best discriminator specified in \eqref{una-discriminator}. Clearly this is true if it is true in any two player game, so let two players play $\mathcal{S}(x_1)$ and $\mathcal{S}(x_2)$, with the first a delegation strategy $x_1 \leq 1 - Q$, and the second not $x_2 > 1 - Q$. Then
\begin{align*}
D(x_1) = L [1 - F_1(x_2)]^2 - [1 - F_0(x_2)]^2 \\
D(x_2) = L [1 - F_1(x_1)]^2 - [1 - F_0(x_1)]^2 = L - 1
\end{align*}

Then, delegating is by the first player if
\begin{align}
\label{d12}
L [1 - F_1(x_2)]^2 - [1 - F_0(x_2)]^2 > L - 1
\end{align}

From \eqref{una-threshold}
\begin{align*}
L \frac{x_2} {1 - x_2} &= \frac{1 - F_0(x_1)}{1 - F_1(x_1)} = 1
\end{align*}

I.e., $L = \frac{1 - x_2}{x_2}$. Substituting in \eqref{d12}
\begin{align*}
\frac{1 - x_2}{x_2} [1 - F_1(x_2)]^2 - [1 - F_0(x_2)]^2 > \frac{1 - x_2}{x_2} - 1
\end{align*}

Or
\begin{align*}
(1 - x_2) [1 - F_1(x_2)]^2 - x_2 [1 - F_0(x_2)]^2 > 1 - 2 x_2
\end{align*}

\noindent proving item \eqref{time-reverse}.

Similarly, delegating is always by the last player when
\begin{align*}
(1 - x_2) [1 - F_1(x_2)]^2 - x_2 [1 - F_0(x_2)]^2 < 1 - 2 x_2
\end{align*}

\noindent proving item \eqref{time-forward}.

\end{proof}

\section{Proof of Proposition \ref{equal}}

\begin{proof}
Let  $i$ and $j$ play be two players in an $n$-unanimity game with prior likelihood $L$. 

Define $$M := L \frac{\prod_{k=1}^n \frac{1 - F_1(x_k)}{1 - F_0(x_k)}}{\frac{1 - F_1(x_i)}{1 - F_0(x_i)}\frac{1 - F_1(x_j)}{1 - F_0(x_j)}}$$
Substituting in \eqref{threshold}, $x_i$ and $x_j$ satisfy the system of equations

\begin{align*}
\left\{  \begin{array}{ll}
M \frac{x_i}{1 - x_i} &= \frac{1 - F_0(x_j)}{1 - F_1(x_j)} \\
M \frac{x_j}{1 - x_j} &= \frac{1 - F_0(x_i)}{1 - F_1(x_i)}
\end{array} \right.
\end{align*}

Multiplying these two equations
\begin{align*}
M \frac{x_i}{1 - x_i} \frac{1 - F_0(x_i)}{1 - F_1(x_i)} = M \frac{x_j}{1 - x_j} \frac{1 - F_0(x_j)}{1 - F_1(x_j)}
\end{align*}

But this means $J(x_i) = J(x_j)$, which, if $J(x)$ is strictly monotonic, implies $x_i = x_j$. 
In a range where a positive function is strictly monotonic, the derivate of its logarithm does not change sign.
\end{proof}

%
%
%
%

\section{Proof of Proposition \ref{large-n}}
\begin{proof}
Let $z := x - (1 - Q)$. From \eqref{bound} we have $z = o(1)$. $f_{\bm{q}}(x)$ is differentiable in the neighborhood below $Q$, so let it's Taylor's expansion there be $$f_{\bm{q}}(x) = a + b (Q - x) + o((Q - x)^2)$$ where $a = f_{\bm{q}}(Q)$ and $b = -f'_{\bm{q}}(Q)$, where $f'_{\bm{q}}(Q)$ denotes the left derivative of $f_{\bm{q}}(x)$ at $Q$. Then, unless $a = b = 0$
\begin{align}
F_0(x) &= \int\limits_{1 - Q}^x (1 - y) f_{\bm{q}}(y) dy = \int\limits_0^z Q(1 \pm o(1)) [a + by + O(y^2)] dy \nonumber \\
\label{F0} &= Q[az + \frac{b}{2}z^2][1 \pm o(1)] \\
F_1(x) &= \int\limits_{1 - Q}^x y f_{\bm{q}}(y) dy = \int\limits_0^z (1-Q)(1 \pm o(1)) [a + by + O(y^2)] dy \nonumber \\
\label{F1} &= (1-Q)[az + \frac{b}{2}z^2][1 \pm o(1)] 
\end{align}

Define $C := \frac{1}{L}\frac{Q}{1-Q} $. From \eqref{bound}, as both sides of the inequality converge to $1$ when $n \to \infty$
\begin{align}
\label{bound1}
\frac{1-F_1(x)}{1-F_0(x)} = C^{\frac{1}{n-1}} [1 \pm o(1)] 
\end{align}

We treat two cases:
\begin{itemize}
\item $a > 0$:

Then terms with $z^2$ may be neglected, and we have, from \eqref{F0} and \eqref{F1}
\begin{align}
\frac{1-F_1(x)}{1-F_0(x)} &= \frac{1 - (1-Q)az}{1 - Qaz} [1 \pm o(1)] \nonumber \\
\label{a>0}
&= 1 + (2Q - 1)az [1 \pm o(1)] 
\end{align}

Equating \eqref{a>0} with \eqref{bound1}, we get
\begin{align*}
z = \frac{C^{\frac{1}{n-1}} - 1}{a (2Q - 1)} [1 \pm o(1)]
\end{align*}

\noindent which proves item (\ref{lg1}) of the Proposition.

We have 
$$\pi_0(L, n, n) = [1 - F_0(x)]^n = (1 - Qaz - O(z^2))^n = (1 - \frac{Q}{2Q -1} [C^{\frac{1}{n-1}} - 1] - O(z^2))^n$$
Noting that, for positive $s, t$ $$\lim\limits_{n \to \infty} [1 - s(t^{\frac{1}{n}} - 1)]^n = t^{-s}$$ we get
$$\pi_0(L, n, n) = C^{-\frac{Q}{2Q -1}} [1 \pm o(1)]$$ which proves item (\ref{lg3}) of the Proposition for the case $a > 0$.

A similar treatment for $\pi_1(L, n, n)$ proves item (\ref{lg4}) of the Proposition for the case $a > 0$.

\item $a = 0, b > 0$:

Then we have, from \eqref{F0} and \eqref{F1}
\begin{align}
\frac{1-F_1(x)}{1-F_0(x)} &= \frac{1 - (1-Q)\frac{b}{2}z^2}{1 - Q\frac{b}{2}z^2} [1 \pm o(1)] \nonumber \\
\label{a=0}
&= 1 + (2Q - 1))\frac{b}{2}z^2 [1 \pm o(1)] 
\end{align}

Equating \eqref{a=0} with \eqref{bound1}, we get
\begin{align*}
z = \sqrt{\frac{C^{\frac{1}{n-1}} - 1}{\frac{b}{2} (2Q - 1)}} [1 \pm o(1)]
\end{align*}

\noindent which proves item (\ref{lg2}) of the Proposition.

In this case, we have 
$$\pi_0(L, n, n) = [1 - F_0(x)]^n = (1 - Q\frac{b}{2}z^2 - O(z^3))^n = (1 - \frac{Q}{2Q -1} [C^{\frac{1}{n-1}} - 1] - O(z^3))^n$$
which leads to the same conclusion as for the case $a > 0$. Similarly for $\pi_1(L, n, n)$. This proves items (\ref{lg3}) and(\ref{lg4}) for the case $a =0$.
\end{itemize}
\end{proof}

\end{document}